\documentclass[conference]{IEEEtran}
\IEEEoverridecommandlockouts
\usepackage{cite}
\usepackage{xcolor}
\usepackage{graphicx}
\usepackage{subfig}
\usepackage{multicol,multirow}
\usepackage{caption}
\usepackage[linesnumbered,ruled,vlined]{algorithm2e}
\usepackage{mathtools,amssymb,amsthm,amsmath}
\usepackage{graphbox}
\usepackage{enumitem}
\usepackage{marvosym}
\def\BibTeX{{\rm B\kern-.05em{\sc i\kern-.025em b}\kern-.08em
   T\kern-.1667em\lower.7ex\hbox{E}\kern-.125emX}}
\newtheorem{lemma}{Lemma}
\newtheorem{theorem}{{\textsc{Theorem}}}
\newtheorem{definition}{\textbf{Definition}}
\newtheorem{property}{Property}[definition]
\newtheorem{example}{\textsc{Example}}
\newtheorem*{problem}{\textsc{Problem}}

\DeclareMathOperator{\degree}{deg}
\DeclareMathOperator{\support}{sup}
\DeclareMathOperator{\mts}{mts}
\DeclareMathOperator{\kspan}{spn}
\DeclareMathOperator{\krank}{rnk}
\DeclareMathOperator{\trn}{trn}
\SetKwFunction{dch}{decomph}
\SetKwFunction{dcv}{decompv}

\begin{document}
\title{Querying Cohesive Subgraph regarding Span-Constrained Triangles on Temporal Graphs with Dynamic Index Maintenance}

\author{
	\IEEEauthorblockN{
		Chuhan Hu\IEEEauthorrefmark{1}, 
		Ming Zhong\IEEEauthorrefmark{2},
            and Lei Li\IEEEauthorrefmark{3}} 
	\IEEEauthorblockA{\IEEEauthorrefmark{1}DSA Thrust, The Hong Kong University of Science and Technology (Guangzhou)}
	\IEEEauthorblockA{\IEEEauthorrefmark{2}School of Computer Science, Wuhan University, Wuhan, China}
	\IEEEauthorblockA{\IEEEauthorrefmark{3}The Hong Kong University of Science and Technology}
        chu083@connect.hkust-gz.edu.cn, clock@whu.edu.cn, thorli@ust.hk
        
}

\maketitle

\begin{abstract}
Recent advances in temporal graph research have redefined traditional static graph concepts such as triangles, motifs, and $k$-cores. Inspired by this, we introduce a novel $(k,\delta)$-truss for temporal graphs, requiring triangles to exist within sufficiently short time windows. The $(k,\delta)$-truss ensures both static and temporal cohesion, while the original $k$-truss is a special case when $\delta = \infty$. To address $(k,\delta)$-truss queries, we propose index-free and index-based approaches. Utilizing the dual containment relation of $(k,\delta)$-trusses, our indexes losslessly compress all $(k,\delta)$-trusses into map or tree structures, significantly reducing space while enabling optimal-time retrieval. To scale to large temporal graphs, we develop two index construction algorithms based on truss decomposition and truss maintenance, respectively, which substantially reduce redundant computations. \textcolor{black}{Moreover, we present techniques for the dynamic maintenance of the proposed indexes.} The experimental results demonstrate that index-based approaches process queries in interactive time and outperform the index-free approach by 2$\sim$4 orders of magnitude, while the indexes achieve compression ratios of up to $10^{-4}$ \textcolor{black}{and can be updated efficiently without rebuilding from scratch}.
\end{abstract}

\begin{IEEEkeywords}
temporal graph, cohesive subgraph, truss, triangle, time span, index, query processing, maintenance
\end{IEEEkeywords}

\section{Introduction}

Recently, temporal graphs in which each edge is associated with a set of timestamps have drawn intensive research interests, as introduced by~\cite{masuda2016temporalnetwork}~\cite{petter2012temporalnetwork}. The typical examples of temporal graph are such as social networks~\cite{kossi2006socialnetwork}, transaction networks~\cite{huang2022transactionnetwork}, transportation networks~\cite{filip2023transponetwork}, communication networks~\cite{hidal2008comnetwork}, power networks~\cite{simeu2021powernetwork}, disease transmission networks~\cite{masuda2014diseasenetwork}, etc. In those graphs, the temporality enables a variety of time-relevant constraints in analytics, such as time order, time window, time span, etc.

\begin{figure}[t!]
    \centering
    \includegraphics[width=0.9\linewidth]{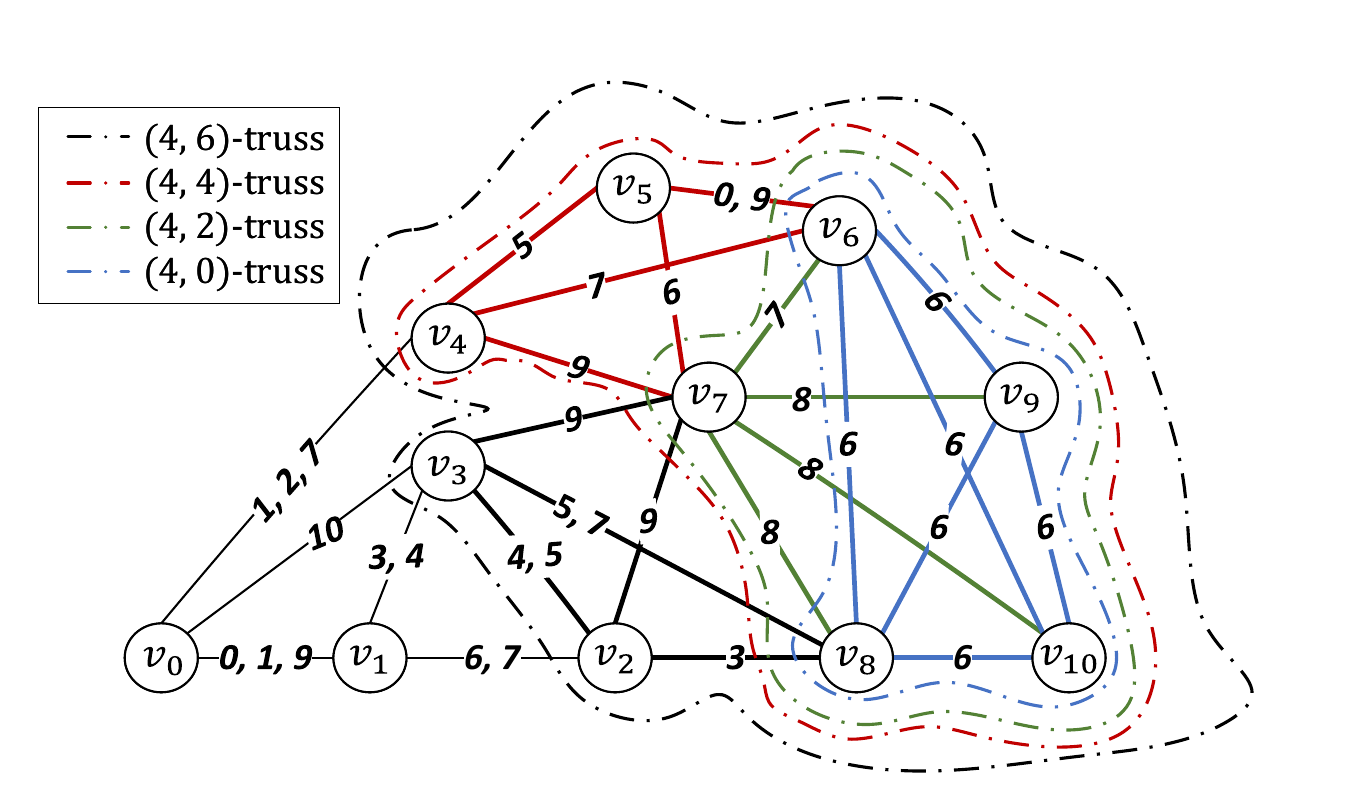}
    \caption{A running example temporal graph and several $(k,\delta)$-trusses, which forms a ``$\delta$-shell'' with a fixed $k$.}
    \label{fig:etg}
    \vspace{-0.5cm}
\end{figure}

For temporal graphs, the traditional definitions of cohesive subgraphs such as $k$-clique, $k$-truss, and $k$-core also need to be extended with time-relevant constraints, so that both temporal and topological features can be exploited for comprehensive analysis. For example, there have been a bunch of temporal $k$-core studies, which mainly fall into two categories. The first is to find primitive $k$-cores that exist in specific time windows, such as span-core~\cite{galim2018spancore}, temporal $k$-core~\cite{wu2015temporalcore}~\cite{Yang2023temporalcore}, historical $k$-core~\cite{myu2021historicalcore}~\cite{Wang2025historicalcore}, temporal $(k,\mathcal{X})$-core~\cite{zhong2024kxcore}, etc. The second is to study various new temporal $k$-cores models, such as frequent core~\cite{Bai2020frequentcore, Du2025frequentcore}, persistent core~\cite{lirh2018persistcommunity}, bursting core~\cite{chu2019burstingsubgraph}, periodic core~\cite{qin20202periodiccore}, continual core~\cite{liliu2021continuecore}, reliable core~\cite{tang2022reliablecore}, etc.

However, the models and approaches designed for $k$-core query are not sufficient for $k$-truss query on temporal graphs. In contrast to migrating them to $k$-truss directly, like~\cite{lotito2020spantruss} for the first category and~\cite{lantian2020kcommunity} for the second category, it is more important to investigate the models and approaches dedicated to temporal $k$-truss query. As we know, $k$-truss in which each edge is contained by at least $k-2$ triangles is defined on top of another more fundamental concept, namely, triangle. Thus, a reasonable definition of temporal $k$-truss should take the temporal constraint on triangles into consideration.

\begin{figure*}
    \begin{minipage}{0.65\linewidth}
    \centering
    \subfloat[$(16,\infty)$]{
        \includegraphics[width=0.28\linewidth]{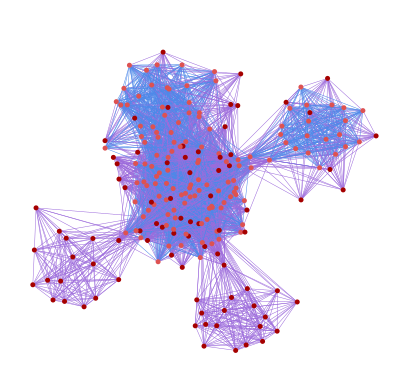}
    }
    \subfloat[$(16,200)$]{
        \includegraphics[width=0.22\linewidth]{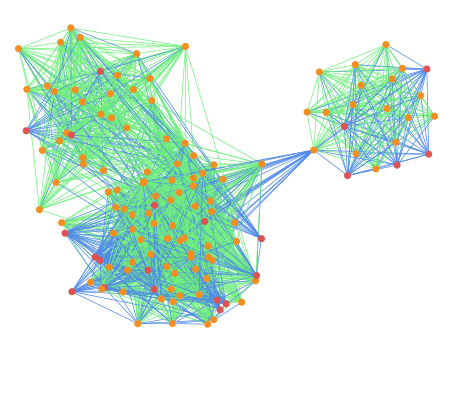}
    }
    \subfloat[$(16,150)$]{
        \includegraphics[width=0.18\linewidth]{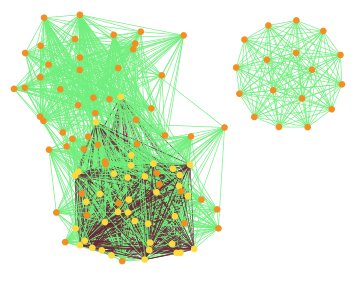}
    }
    \subfloat[$(16,100)$]{
        \includegraphics[width=0.11\linewidth]{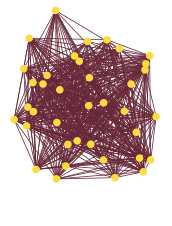}
    }
    \end{minipage}\hfill
    \begin{minipage}{0.35\linewidth}
        \centering
        {\footnotesize
        \begin{tabular}{|c|r|r|r|r|}
            \hline
             $(k,\delta)$-truss & (a) & (b) & (c) & (d)   \\
            \hline
             vertex\#& 213 & 130 & 108 & 38\\
             edge\#& 4402 & 2355 & 1735 & 564\\
             triangle\#& 42683 & 21738 & 14978 & 4670 \\
             coefficient& 0.72 & 0.77 & 0.81 & 0.85\\
             duration&803 & 803 & 803 & 802\\
             \hline
        \end{tabular}}
    \end{minipage}
    \caption{A case study on $(k,\delta)$-trusses of Email dataset, which demonstrates the effectiveness of $\delta$ on improving cohesion. In each truss, we remark the vertices and edges of successive truss with different colors, so that the changes can be observed.}
    \label{fig:case}
    \vspace{-0.5cm}
\end{figure*}

Actually, many meaningful temporal triangle or motif (note that, triangles can be seen as a kind of motifs sometimes) models~\cite{kovan2011motif, paran2017motif, liu2019motif, wang2020motif, phasa2021triangle, liu2023motif} have been proposed recently. These models mainly consider two kinds of temporal constraints. The first is the total or partial time order among edges, which is usually defined on directed temporal graphs, so that a triangle or motif can be seen as a sequence of edges in order of timestamps. Such constraints are too specific and complicated for a basic component of $k$-truss, and would result in over-tailoring of $k$-truss. The second is the duration, which is generally measured by the maximum time lag between timestamps of edges.

In this paper, we propose a novel definition of $(k,\delta)$-truss for undirected temporal graphs, on top of a kind of ``span''-constrained $\delta$-triangles. The rationale is two-fold. Since $k$-truss considers a triangle as a strong evidence of its vertices are bonded tightly in a community, it is better to consistently guarantee the tightness of triangle but not truss from the perspective of time. Moreover, different from the related works, $(k,\delta)$-truss requires the \textit{minimum time span} but not normal duration of triangles to be no greater than $\delta$, which regulates temporal cohesion beyond static cohesion $k$. Because a triangle that ever occurs in a short enough period should be more cohesive in the sense of time than another triangle that has the same duration but no interactions close in time between each pair of its vertices, as illustrated in Fig~\ref{fig:mst}.

Let us consider the following empirical $(k,\delta)$-truss queries.

\begin{example}[Case Study]
    To demonstrate the effectiveness of $(k,\delta)$-truss query, we conduct a case study on Email~\cite{leskovec2005graphanalysis}, a communication network between members of a European research institution. Fig~\ref{fig:case} illustrates four trusses with $k = 16$ and $\delta = \infty$, $200$, $150$, and $100$ respectively. The $(k,\delta)$-truss query can help us to further tailor the static $k$-truss (a) in time dimension. We can see that, the stricter temporal cohesion indeed makes the truss structure more compact. The $k$-truss (a) is composed of several smaller communities. With the decrease of $\delta$, the successive $(k,\delta)$-trusses (b), (c), and (d) become more and more clustered. As an evidence, the clustering coefficient increases from $0.72$ to $0.85$ gradually. In contrast, the duration of whole trusses does almost not change, which means we will not find the more compact trusses like (b), (c), and (d) by using the duration of truss as temporal constraint.
\end{example}

In order to address the $(k,\delta)$-truss query, we propose both index-free and index-based approaches. Unlike the index-free approach that performs a straightforward truss decomposition under the constraint of $\delta$, the index-based approach only needs to scan each edge in the result once, thereby being theoretically time-optimal. Since there could be a great number of $(k,\delta)$-trusses in a temporal graph, we use Temporal Containment Index (TC-Index) or Dual Containment Index (DC-Index) to preserve trusses incrementally. Moreover, for large-scale temporal graphs, we develop two scalable index construction algorithms based on two classical paradigms, truss decomposition~\cite{wang2012trussdecomp,chen2014trussdecomp,che2010trussdecomp,kabir2017paratrussdecomp,kabir2017sharedtrussdecomp} and truss maintenance~\cite{huang2014trusscommunity, wang2019trussmaint, cliu2014trussmaint, luo2020trussmaintenance,suntrussmaintenance}, respectively. Decomposition Based Algorithm (DBA) can compute the incremental edge sets between $(k,\delta)$-truss and $(k,\delta+1)$-truss, and Maintenance Based Algorithm (MBA) can further compute the incremental edge sets between $(k,\delta)$-truss and $(k+1,\delta)$-truss. \textcolor{black}{Lastly, we propose a local search approach to maintain the indexes when new edges or timestamps are inserted into temporal graphs}.

%As a result, DBA can only be used to construct TC-Index, and MBA can be used to construct both TC-Index and DC-Index.

In summary, our contributions are as follows.

\begin{itemize}
    \item Inspired by the latest studies on temporal triangle and motif, we formalize a novel $(k,\delta)$-truss query problem on temporal graphs, with respect to a meaningful temporal triangle metric called minimum time span. The $(k,\delta)$-truss considers both static and temporal cohesion, while the static $k$-truss is its special case for $\delta = \infty$.
    \item We leverage the dual containment relation on $(k,\delta)$-trusses to design compact indexes that store and retrieve all possible $(k,\delta)$-trusses efficiently. Firstly, we present a map-structured index called TC-Index that preserves the incremental edges in $\delta$ dimension. Then, we present a tree-structured index called DC-Index that preserves the globally minimum incremental edges in both $k$ and $\delta$ dimensions. Both indexes provide the optimal query efficiency, and DC-Index is space-optimal when query efficiency cannot be degraded. %based on a logical $(k,\delta)$-truss graph and its minimum-weight arborescence. 
    \item To enable the proposed indexes to scale to large temporal graphs, we follow the line of truss decomposition and truss maintenance respectively to improve the scalability of index construction. DBA decomposes $k$-truss gradually by removing triangles in descending order of minimum time span for each $k$. MBA maintains all edge trussness simultaneously when invalidating triangles in descending order of minimum time span. 
    \textcolor{black}{\item We present an efficient filter-and-verification algorithm for dynamic index maintenance, thereby avoiding to rebuild index from scratch. For an edge update, it gradually narrows the ranges of trussness and temporal threshold, and then identifies a small local subgraph comprised of the edges that may need to be updated in indexes.}
    \item We conduct comprehensive experimental evaluation on eight real-world temporal graphs, on which we observed that triangle counts distribute widely on minimum time span. Our index-based TC-Query and DC-Query process queries in interactive time, and outperform the index-free Online-Query by 2$\sim$4 orders of magnitude. Meanwhile, our indexes can achieve the compression ratio of up to $10^{-4}$, \textcolor{black}{and can be updated significantly faster than rebuilding from scratch}.
\end{itemize}

We present the preliminaries, index-free approach, index-based approaches, index construction, \textcolor{black}{index maintenance}, experimental evaluation, related work, and conclusion in the rest sections respectively.

\section{Preliminaries}\label{sec:prel}

{}Let $\mathcal{G} = (\mathcal{V,E})$ be an undirected static graph, where $\mathcal{V}$ is a set of vertices and $\mathcal{E}\subseteq \mathcal{V}\times \mathcal{V}$ is a set of edges. For each edge $e = (u,v) \in \mathcal{E}$, the pair of vertices $u$ and $v$ may have multiple interactions at different times. We denote the set of timestamps of interaction between $u$ and $v$ by $\tau_{(u,v)}$. Thus, the temporal edge between $u$ and $v$ can be represented by $(u,v,\tau_{(u,v)})$, and the temporal graph can be represented by $\mathcal{G}^t = (\mathcal{V},\mathcal{E},\Gamma)$, where $\Gamma = \{\tau_{(u,v)} | (u,v)\in \mathcal{E}\}$ is the set of nonempty timestamp sets for each edge in $\mathcal{E}$. Fig~\ref{fig:etg} illustrates a temporal graph as our running example. Without loss of generality, we use consecutive natural numbers from 0 until $n$ to denote timestamps in a temporal graph.

Then, let us consider the reasonable definition of $k$-truss for temporal graphs. Given a static graph $\mathcal{G}$, the $k$-truss, denoted by $T_{k}(\mathcal{G})$, is defined as the maximal subgraph of $\mathcal{G}$ in which each edge is contained by at least $k - 2$ triangles, where $k\geq 2$ is a user-specified integer that represents topological cohesion of the subgraph. Obviously, the definition of $k$-truss is closely associated with that of triangle, a more fundamental concept for graph. Inspired by~\cite{kovan2011motif, paran2017motif, liu2019motif, wang2020motif, phasa2021triangle, liu2023motif}, we propose a novel and meaningful definition of temporal triangle in the context of truss study as follows.

\begin{figure}[t!]
    \centering
    \includegraphics[width=0.9\linewidth]{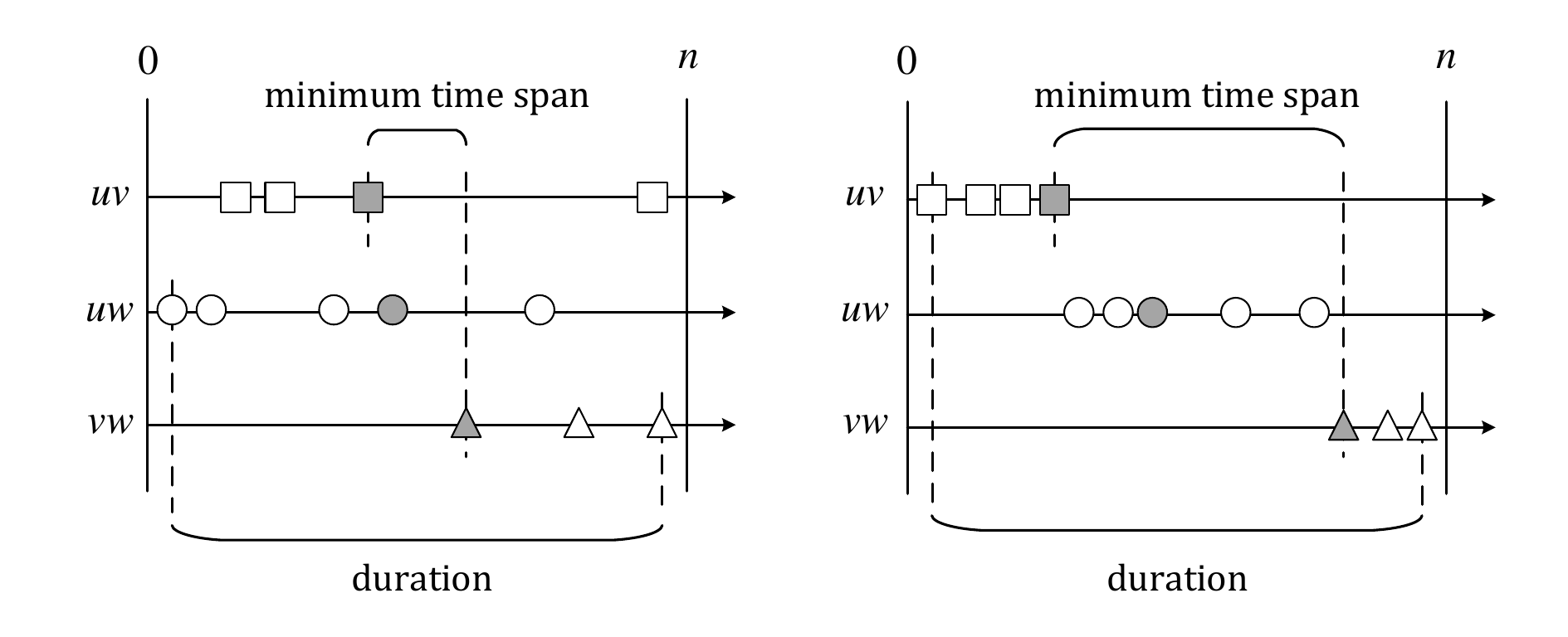}
    \caption{Abstract comparison of different minimum time spans of a triangle $\Delta = \{u,v,w\}$. For each edge like $(u,v)$ of $\Delta$, its timestamps are marked by a kind of symbols like rectangles in a timeline from 0 until $n$. The combination of dark symbols of each kind determines the minimum time span of $\Delta$.}
    \label{fig:mst}
    \vspace{-0.5cm}
\end{figure}

\begin{definition}[Minimum Time Span]
Given a temporal graph $\mathcal{G}^t$, for a triangle $\Delta = \{u,v,w\}$ of $\mathcal{G}$ with $u, v, w \in \mathcal{V}$ and $(u,v), (v,w), (w,u) \in \mathcal{E}$, its minimum time span $\mts(\Delta,\mathcal{G}^t)$, represents the shortest duration of time window in which each two vertices have interaction, namely, $\mts(\Delta,\mathcal{G}^t) = \min\{\max\{|t_1 - t_2|, |t_2-t_3|, |t_3-t_1|\} : t_1 \in \tau_{(u,v)}, t_2 \in \tau_{(v,w)}, t_3 \in \tau_{(w,u)}\}$. When the context is clear, we replace $\mts(\Delta,\mathcal{G}^t)$ by $\mts(\Delta)$. 
\end{definition}

\begin{definition}[$\delta$-triangle]
Given a threshold $\delta$ of minimum time span, a triangle $\Delta$ is called a $\delta$-triangle if $\mts(\Delta) \leq \delta$.
\end{definition}

Intuitively, $\delta$-triangles represent tight bonds in the sense of both topology and time, which require the involved vertices to interact with each other during at least one same short period. As illustrated in Fig~\ref{fig:mst}, the left triangle is considered as a tighter bond than the right one from the perspective of time, though them have the same duration. Because, in the shorter minimum time span, all three vertices may participate in a same event. In contrast, the right triangle represents a typical counterexample. When two of the vertices have contacts, neither of them interacts with the other vertex. Like in a social network, I know both of you but do not know you know each other, which means the three of us are not that close.

On top of $\delta$-triangle, we give the definitions of $\delta$-support of temporal edge and $(k,\delta)$-truss of temporal graph as follows. 

\begin{definition}[$\delta$-support]
    Given a temporal graph $\mathcal{G}^t$ and an integer $\delta \geq 0$, the $\delta$-support of an edge $e = (u,v) \in \mathcal{E}$, denoted by $\delta$-$\support(e)$, is the number of $\delta$-triangles that contain $e$, namely, $|\{\Delta : u,v\in \Delta, \mts(\Delta)\leq \delta \}|$.
\end{definition}

\begin{example}
    Consider an edge $e = (v_2,v_8)$ in Fig~\ref{fig:etg}. The triangles containing $e$ are $\Delta_1 = \{v_2,v_3,v_8\}$ and $\Delta_2 = \{v_2,v_7,v_8\}$. We have $\mts(\Delta_1) = 2$ and $\mts(\Delta_2) = 6$, so that $\delta$-$\support(e)$ is $2$ if $\delta \geq 6$, $1$ if $2 \leq \delta < 6$, or $0$ otherwise.
\end{example}

\begin{definition}[$(k,\delta)$-truss]
    Given a temporal graph $\mathcal{G}^t$, an integer $k \geq 2$, and an integer $\delta \geq 0$, the $(k,\delta)$-truss of $\mathcal{G}^t$, denoted by $T_{k,\delta}$, is defined as the maximal subgraph of $\mathcal{G}^t$ in which the $\delta$-support of each edge $e$ is no less than $k-2$, namely, $\delta$-$\support(e) \geq k-2$ in the temporal subgraph $T_{k,\delta}$. 
\end{definition}

\begin{example}
    In Fig~\ref{fig:etg}, given $k = 4$, we use colored dashed lines to remark the $(k,\delta)$-trusses with different $\delta$. The edges surrounded by black dashed line comprise the largest $(4,6)$-truss. When $\delta$ is decreased to $4$, the edges such as $(v_2,v_8)$ are excluded due to inadequate $\delta$-support, and the edges surrounded by red dashed line comprise the smaller $(4,4)$-truss. Similarly, with the decrease of $\delta$, the $(k,\delta)$-truss shrinks gradually, like the $(4,2)$-truss surrounded by green dashed line and $(4,0)$-truss surrounded by blue dashed line. 
\end{example}

%The $(k,\delta)$-truss is a meaningful cohesive subgraph of temporal graph, which considers both static and temporal cohesion. Note that, the $(k,\infty)$-truss of $\mathcal{G}^t$ is equivalent to the $k$-truss of $\mathcal{G}$, so that $k$-truss can be seen as a special case of $(k,\delta)$-truss in static graphs.

\begin{figure}[t!]
    \centering
    \includegraphics[width=0.7\linewidth]{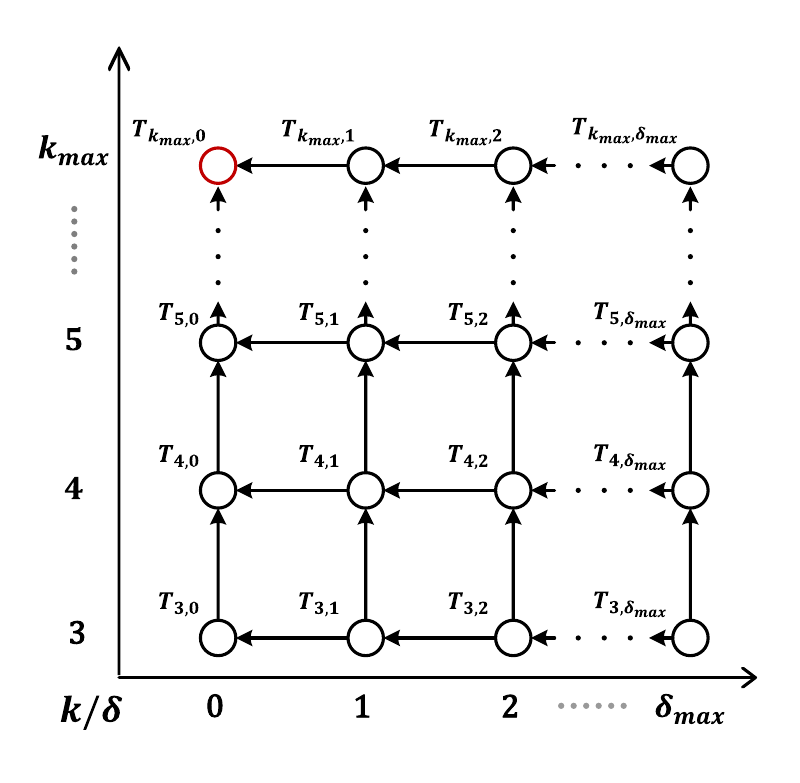}
    \caption{An illustration of $(k,\delta)$-truss graph of a temporal graph, where each arrow $T_{k,\delta} \to T_{k',\delta'}$ denotes that $T_{k',\delta'} \subseteq T_{k,\delta}$.}
    \label{fig:dualcont}
    \vspace{-0.5cm}
\end{figure}

More importantly, similar to the static cohesive subgraphs like $k$-truss and $k$-core, ($k,\delta$)-truss also has the containment property, which is however dual with respect to both $k$ and $\delta$. As illustrated in Fig~\ref{fig:dualcont}, each $(k,\delta)$-truss is contained by both $(k-1,\delta)$-truss and $(k,\delta + 1)$-truss directly. Such containment properties are the keys to developing efficient algorithms for retrieving cohesive subgraphs, such as truss decomposition~\cite{wang2012trussdecomp} and core decomposition~\cite{batage2003coredecomp}. The dual containment property is formally defined as follows.
\begin{property}[Dual Containment]\label{pro:kdtruss}
    For any two $(k,\delta)$-trusses $T_{k, \delta}$ and $T_{k',\delta'}$ of a temporal graph $\mathcal{G}^t$, $T_{k, \delta}$ is a subgraph of $T_{k',\delta'}$, denoted by $T_{k, \delta} \subseteq T_{k', \delta'}$, if $k' \leq k$ and $\delta' \geq \delta$.
\end{property}

In this paper, we aim to address the following problem.

\begin{problem}
    Given a temporal graph $\mathcal{G}^t$, an integer $k \geq 2$, and an integer $\delta \geq 0$, find the $(k,\delta)$-truss $T_{k,\delta}$ of $\mathcal{G}^t$.
\end{problem}

\section{Index-Free Approach}\label{sec:indexfree}

In this section, we first propose a straightforward online solution derived from the classic truss decomposition~\cite{wang2012trussdecomp} as a baseline. The pseudo code of online solution can be found in the conference version~\cite{Hu2024kdtruss}. It first computes the $\delta$-support of each edge in $\mathcal{E}$, and pushes all edges into a priority queue $Q$ in ascending order of $\delta$-support. Then, it performs an edge peeling process that iteratively removes the edge with the minimum $\delta$-support from queue until the $\delta$-supports of all rest edges are no less than $k - 2$. Upon the removal of an edge $e$, for the other edges in each same triangle $\Delta$ with $e$, we will keep their $\delta$-support unchanged if $\mts(\Delta)$ is greater than $\delta$ because they are never counted in the first place, or decrease their $\delta$-support otherwise. After the peeling process, the remaining edges in the queue comprise the target $(k,\delta)$-truss.

%\begin{algorithm}[t!]
%    \caption{Online-Query}\label{alg:online}
%    Please see the conference version~\cite{Hu2024kdtruss}.
%\end{algorithm}

\noindent{\textbf{Correctness}}. The correctness is obvious as long as the correctness of truss decomposition holds. %iteratively remove an edge with lowest $\delta$-support in $\mathcal{G}$ and correctly update the $\delta$-support of other edges sharing a triangle with $e$ until the $\delta$-support of all edges in $\mathcal{G}$ is not less than $k$ - 2. After the algorithm ends, the $\delta$-support of edges in the subgraph $T_{k,\delta}$ return by algorithm is greater than or equal to $k$ - 2. Thus, Algorithm1 is correct. 

\noindent{\textbf{Complexity.}} The original truss decomposition algorithm takes $O(\sum_{(u,v) \in \mathcal{E}} \min\{\degree(u),\degree(v)\})$ time, where $\degree(u)$ is the degree of $u$ in $\mathcal{G}$. Compared with that, the main extra time cost of our algorithm is to compute $\mts(\Delta)$ for each triangle in $\mathcal{G}^t$. Let $\overline{|\tau|}$ denote the average number of timestamps associated with an edge $(u,v)$ and $|\Delta|$ denote the total number of triangles. The time cost of computing $\mts(\Delta)$ for a single triangle is $O(\overline{|\tau|})$ if $\tau_{(u,v)}$ is ordered. Thus, the time complexity of our algorithm is $O(\sum_{(u,v) \in \mathcal{E}} \min\{\degree(u),\degree(v)\} + \overline{|\tau|} \cdot |\Delta|)$.

\section{Index-Based Approach}\label{sec:indexbased}

The complexity of the above index-free algorithm is at least sub-quadratic to the number of edges for a temporal graph, and thus is infeasible for real-time processing when the graph is large. Therefore, we propose index-based approaches to efficiently answer $(k,\delta)$-truss queries in this section.

\subsection{TC-Index}

\subsubsection{Index Structure}

A basic idea of indexing is to preserve all possible $(k,\delta)$-trusses for a temporal graph, which can answer any query in the optimal time due to precomputation. However, directly preserving all possible $(k,\delta)$-trusses takes $O(k_{max} \cdot \delta_{max} \cdot |\mathcal{E}|)$ space in the worst case, where $k_{max}$ and $\delta_{max}$ are the maximum values of $k$ and $\delta$ respectively for a given temporal graph. It means the index could be thousands of times or even larger than the graph itself in practice. Consequently, we propose a Temporal Containment Index (TC-Index) that adopts an incremental storage scheme to reduce the index size. With only a little compromise of query efficiency compared with the uncompressed index, the size of TC-Index is reduced to $O(k_{max} \cdot (|\mathcal{E}| + \delta_{max}))$. 

Before introducing TC-Index, let us consider the following pilot concept firstly.

\begin{definition}[$k$-Span]\label{def:kspan}
    Given a temporal graph $\mathcal{G}^t$ and a support threshold $k$, the $k$-span of an edge $e \in \mathcal{E}$ is an integer $\delta = k\text{-}\kspan(e, \mathcal{G}^t)$, such that (i) the $(k,\delta)$-truss contains $e$ and (ii) the $(k,\delta')$-truss does not contain $e$ for any $\delta' < \delta$. When the content is clear, we replace $k\text{-}\kspan(e,\mathcal{G}^t)$ by $k\text{-}\kspan(e)$.
\end{definition}

\begin{property}\label{pro:kspan}
    The $k$-span of edges in the $(k,\delta)$-truss is no greater than $\delta$.
\end{property}

With Property~\ref{pro:kdtruss} and~\ref{pro:kspan} , we only need to consider edges in the $k$-truss of $\mathcal{G}$ whose $k$-span is no greater than $\delta$ in order to find the $(k,\delta)$-truss of $\mathcal{G}^t$, because the $(k,\delta)$-truss of $\mathcal{G}^t$ is certainly a subgraph of the $k$-truss of $\mathcal{G}$.

Based on the above observation, for a temporal graph $\mathcal{G}^t$, TC-Index maintains a map structure $\mathcal{I}_k$ = $(\mathcal{E}_k, \mathcal{D}_k)$ for each possible $k$ value. $\mathcal{E}_k$ is a sequence that preserves edges of the $k$-truss in descending order of $k$-span, and $\mathcal{D}_k$ is an index that records the unique $k$-spans and the offsets of the first edges with the corresponding $k$-span in $\mathcal{E}_k$. Then, TC-Index $\mathcal{I} = (\mathcal{I}_3, \mathcal{I}_4, \cdots, \mathcal{I}_{k_{max}})$ is comprised of the map structures for all possible $k$. Note that TC-Index does not store the map structures for $k \leq 2$ because the $(2,\delta)$-truss is actually the entire temporal graph, regardless of $\delta$. 

\begin{figure*}
    \centering
    \includegraphics[width=0.9\linewidth]{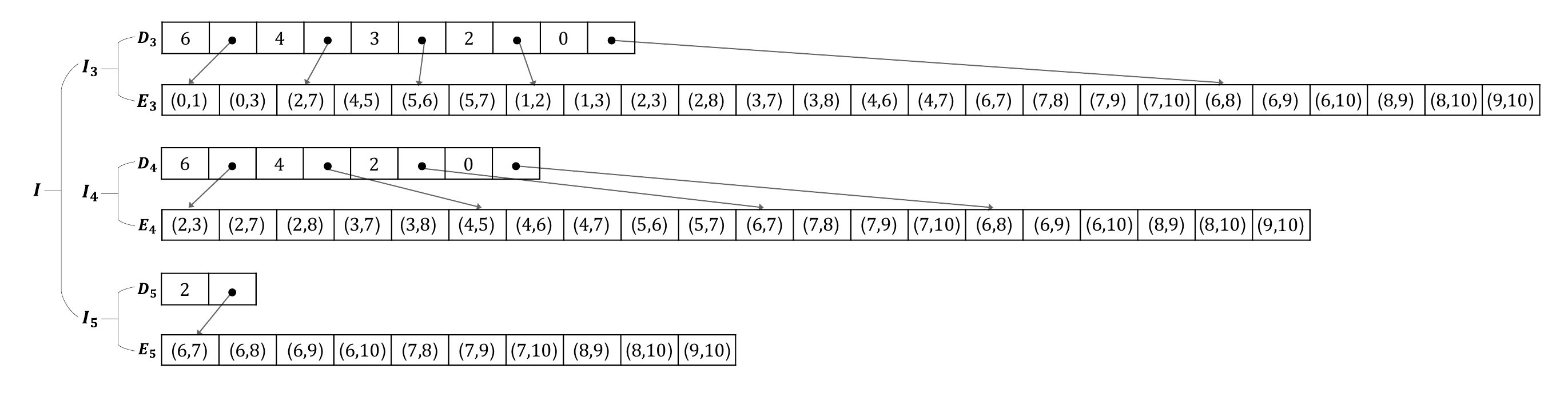}
    \caption{An example of TC-Index, in which $(v_i,v_j)$ is represented by $(i,j)$.}
    \label{fig:tcindex}
    \vspace{-0.3cm}
\end{figure*}

\begin{example}
    The TC-Index for the temporal graph in Fig~\ref{fig:etg} is given in Fig~\ref{fig:tcindex}. Three are three map structures $\mathcal{I}_3$, $\mathcal{I}_4$, and $\mathcal{I}_5$. Within $\mathcal{I}_4$, there are four unique $k$-spans $6$, $4$, $2$, and $0$ in $\mathcal{D}_4$, which have pointers to the first edges with these $k$-spans in $\mathcal{E}_4$. We can see the edges $(v_2,v_3)$, $(v_2,v_7)$, $(v_2,v_8)$, $(v_3,v_7)$, and $(v_3,v_8)$ have the $k$-span $6$ when $k = 4$.
\end{example}

\begin{theorem}\label{the:tcspace}
    For a temporal graph $\mathcal{G}^t$, the size of TC-Index is bounded by $O(k_{max} \cdot (|\mathcal{E}| + \delta_{max}))$.
\end{theorem}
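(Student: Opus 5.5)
The bound follows by summing the sizes of the map structures $\mathcal{I}_k = (\mathcal{E}_k,\mathcal{D}_k)$ over $k = 3,\dots,k_{max}$. Each $\mathcal{I}_k$ will be shown to have size $O(|\mathcal{E}| + \delta_{max})$. Since there are at most $k_{max}-2$ such maps, the total is $O(k_{max}\cdot(|\mathcal{E}|+\delta_{max}))$.

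For $\mathcal{E}_k$, I note that it stores exactly the edges of the $k$-truss of $\mathcal{G}$, each once. The edges are ordered by $k$-span, but no edge is duplicated. Each edge of the $k$-truss has a well-defined $k$-span by Definition~\ref{def:kspan}, and it appears in $\mathcal{E}_k$ at a single position. Since the $k$-truss is a subgraph of $\mathcal{G}$, $|\mathcal{E}_k| \le |\mathcal{E}|$. Each entry takes $O(1)$ space, namely a pair of vertex identifiers.

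For $\mathcal{D}_k$, it records one entry (a $k$-span value and an offset) per distinct $k$-span occurring among the edges of $\mathcal{E}_k$. I need an upper bound on how many distinct values can occur. By definition a $k$-span is a non-negative integer. It is at most $\delta_{max}$, because $\delta_{max}$ is the largest meaningful threshold: with timestamps in $\{0,\dots,n\}$, every $\mts(\Delta)\le n$, so $T_{k,\delta}=T_{k,\infty}$ for all $\delta\ge \delta_{max}$. Hence by Property~\ref{pro:kspan} every edge of the $k$-truss has $k$-span in $\{0,\dots,\delta_{max}\}$. Therefore $|\mathcal{D}_k| \le \min\{|\mathcal{E}_k|, \delta_{max}+1\} = O(\delta_{max})$, and each entry takes $O(1)$ space.

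Adding the two parts gives $|\mathcal{I}_k| = O(|\mathcal{E}| + \delta_{max})$, and summing over $k$ gives the claim. The min above actually yields the sharper bound $O(k_{max}|\mathcal{E}|)$, which is still within the stated bound. The only step needing care is the claim that $k$-spans are bounded by $\delta_{max}$, i.e., that the $\delta=\infty$ case coincides with $\delta = \delta_{max}$. I would justify this with the $\mts$ observation above, using dual containment (Property~\ref{pro:kdtruss}) to ensure the $k$-span is well-defined and finite for each edge in the $k$-truss.
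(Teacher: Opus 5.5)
Your proof is correct and takes essentially the same approach as the paper: each of the at most $k_{max}$ maps $\mathcal{I}_k$ stores at most $|\mathcal{E}|$ edges in $\mathcal{E}_k$ and at most $\delta_{max}+1$ distinct $k$-span entries in $\mathcal{D}_k$, and summing over $k$ gives the bound. One small fix: the fact that every $k$-span lies in $\{0,\dots,\delta_{max}\}$ follows from $\delta_{max}$ being the maximum $\mts(\Delta)$ over all triangles, so that $T_{k,\delta_{max}}$ is the static $k$-truss; the observation $\mts(\Delta)\le n$ only yields the weaker range $\{0,\dots,n\}$.
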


\subsubsection{Query Processing}
The TC-Index based query algorithm is named TC-Query, whose pseudo code can be found in the conference version~\cite{Hu2024kdtruss}. For the given $k$ and $\delta$, TC-Query first retrieves $\mathcal{I}_k$ = $(\mathcal{E}_k, \mathcal{D}_k)$ from TC-Index. Then, it finds the maximum $\delta'$ in $\mathcal{D}_k$ with $\delta' \leq \delta$, and locates the position in $\mathcal{E}_k$ with respect to the offset associated with $\delta'$ in $\mathcal{D}_k$. Lastly, it scans $\mathcal{E}_k$ from the position until the end, and all scanned edges comprise $T_{k,\delta}$.

\begin{example}
    Consider the $(k,\delta)$-truss query with $k = 4$ and $\delta = 1$. In $\mathcal{D}_4$, the first $k$-span no greater than $\delta$ is $0$. Thus, we locate the edge $(v_6,v_8)$ in $\mathcal{E}_4$ and start to scan the following edges. Lastly, we get the edges of $(4,1)$-truss, namely, $(v_6,v_8)$, $(v_6,v_9)$, $(v_6,v_{10})$, $(v_8,v_9)$, $(v_8,v_{10})$, and $(v_9,v_{10})$, which can be verified in Fig~\ref{fig:etg} (see the blue edges).
\end{example}

\begin{theorem}\label{the:tctime}
    TC-Query computes the edge set of $T_{k,\delta}$ in at most $O(\log\delta_{max} + |T_{k,\delta}|)$ time, where $|T_{k,\delta}|$ denotes the number of edges in $T_{k,\delta}$.
\end{theorem}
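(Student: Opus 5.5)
The plan is to split TC-Query's running time into three parts: retrieving $\mathcal{I}_k$, locating the start offset, and the final scan. The main work is to show that the scan returns exactly the edges of $T_{k,\delta}$, with nothing missing or extra.

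\emph{Retrieval and locating.} Since $\mathcal{I} = (\mathcal{I}_3,\dots,\mathcal{I}_{k_{max}})$ is an array indexed by $k$, retrieving $\mathcal{I}_k$ takes $O(1)$ time. If $k \le 2$ or $k > k_{max}$, the answer is the whole graph or empty, respectively, and we handle these separately. The map $\mathcal{D}_k$ stores the distinct $k$-spans of the edges of the $k$-truss in sorted order. Each $k$-span lies in $[0,\delta_{max}]$, so $|\mathcal{D}_k| \le \delta_{max}+1$. A binary search for the largest $\delta' \le \delta$ therefore costs $O(\log \delta_{max})$. Reading the associated offset and jumping to that position in $\mathcal{E}_k$ is $O(1)$.

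\emph{Correctness of the scan (the main step).} We need the suffix of $\mathcal{E}_k$ starting at the first edge with $k$-span $\delta'$ to equal $E(T_{k,\delta})$. Because $\mathcal{E}_k$ is sorted in descending order of $k$-span, this suffix is exactly the set of edges of the $k$-truss with $k\text{-}\kspan(e) \le \delta'$. By the maximality of $\delta'$, this is the same as the set with $k\text{-}\kspan(e) \le \delta$.

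We prove the two inclusions.
\begin{itemize}
\item If $e \in T_{k,\delta}$, then Property~\ref{pro:kspan} gives $k\text{-}\kspan(e)\le\delta$. Property~\ref{pro:kdtruss} with $\delta'' = \infty$ places $e$ in the $k$-truss, so $e$ appears in $\mathcal{E}_k$ and in the suffix.
\item If $e$ is in the $k$-truss with $k\text{-}\kspan(e)=\sigma\le\delta$, then by Definition~\ref{def:kspan} $e\in T_{k,\sigma}$. Property~\ref{pro:kdtruss} then gives $T_{k,\sigma}\subseteq T_{k,\delta}$, so $e \in T_{k,\delta}$.
\end{itemize}

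One degenerate case must be handled: if no $\delta'\le\delta$ exists in $\mathcal{D}_k$, then no edge of the $k$-truss has $k$-span at most $\delta$, and the correct output is the empty set.

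\emph{Cost of the scan.} The scan visits exactly $|T_{k,\delta}|$ edges at $O(1)$ each. Adding the three parts gives $O(1) + O(\log\delta_{max}) + O(|T_{k,\delta}|)$, which is the claimed bound. The only delicate point is the equivalence between the suffix and $T_{k,\delta}$; it follows directly from combining the two properties above.
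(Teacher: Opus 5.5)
Your proposal is correct and takes the same route the paper's complexity claim rests on: an $O(\log\delta_{max})$ binary search over $\mathcal{D}_k$, whose size is at most $\delta_{max}+1$, followed by a linear scan of the corresponding suffix of $\mathcal{E}_k$. Your two-inclusion argument that this suffix is exactly $T_{k,\delta}$, together with the degenerate cases, just makes explicit what the paper takes from Properties~\ref{pro:kdtruss} and~\ref{pro:kspan}.
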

Compared with the optimal time of processing $(k,\delta)$-truss query that is certainly $O(|T_{k,\delta}|)$, TC-Query is almost optimal since $\log\delta_{max}$ is usually much less than $|T_{k,\delta}|$.

%\begin{algorithm}[t!]
%    \caption{TC-Query}\label{alg:tcquery}
%    Please see the conference version~\cite{Hu2024kdtruss}.
%\end{algorithm}

\begin{figure*}
    \centering
    \includegraphics[width=0.9\linewidth]{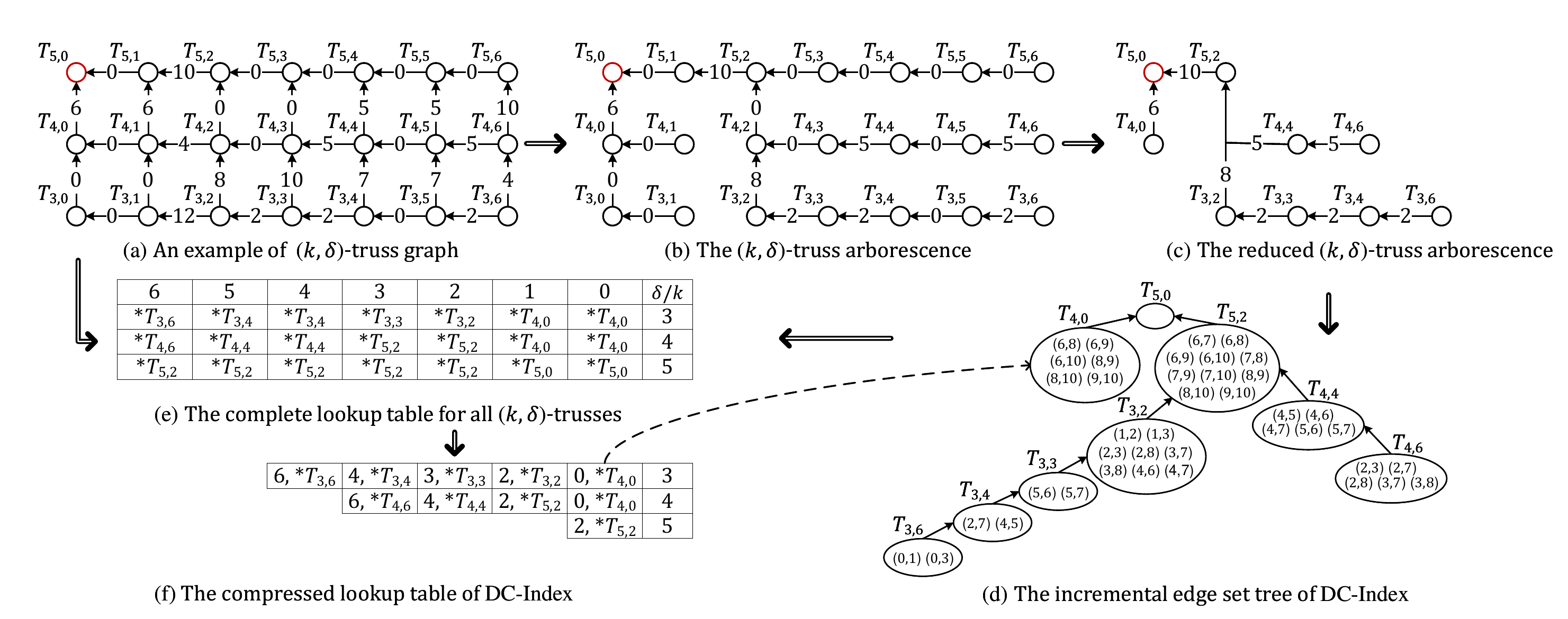}
    \caption{An example of DC-Index, in which $(v_i,v_j)$ is represented by $(i,j)$.}
    \label{fig:dcindex}
    \vspace{-0.2cm}
\end{figure*}

\subsection{DC-Index}

\subsubsection{Index Structure}

Although TC-Index exploits temporal containment to achieve incremental storage, it is still not space-efficient enough as it only takes into account one aspect of Property~\ref{pro:kdtruss}. To fully exploit the property, we propose an advanced tree-structured index called Dual Containment Index (DC-Index). DC-Index is space-optimal while guaranteeing the same order of query efficiency as TC-Index. Specifically, DC-Index is derived in the following steps.

\begin{definition}[$(k,\delta)$-Truss Graph]
    Given a temporal graph $\mathcal{G}^t$, we define a $(k,\delta)$-truss graph as a directed weighted graph $G = (V, E, w)$, where $V = \{T_{k,\delta} : 3\leq k\leq k_{max}, 0\leq \delta \leq \delta_{max}\}$ is the set of all possible $(k,\delta)$-trusses of $\mathcal{G}^t$, $E = E_v \cup E_h = \{(T_{k,\delta}, T_{k+1,\delta}) : 3\leq k\leq k_{max}-1, 0\leq \delta \leq \delta_{max}\}\cup \{(T_{k,\delta}, T_{k,\delta-1}) : 3\leq k\leq k_{max}, 1\leq \delta \leq \delta_{max}\}$ is a subset of dual containment relation on $V$, and $w : E \mapsto \mathbb{N}$ indicates the number of incremental edges between two connected trusses.
\end{definition}

Fig~\ref{fig:dcindex}(a) illustrates the $(k,\delta)$-truss graph of our example temporal graph. Intuitively, we call the edges in $E_v$ as vertical edges and the edges in $E_h$ as horizontal edges. For each edge in this graph, if the sink truss has been stored, the cost of incrementally preserving the source truss is the weight of edge. 

\begin{definition}[$(k,\delta)$-Truss Arborescence]\label{def:ta}
    Given a $(k,\delta)$-truss graph $G$, we derive a $(k,\delta)$-truss arborescence $A$ from it by removing the outgoing edge $(T_{k,\delta}, T_{k+1,\delta})$ or $(T_{k,\delta}, T_{k,\delta-1})$ with greater weight for each truss $T_{k,\delta}$. Note that, if $T_{k,\delta}$ has one or none outgoing edge, no edge will be removed.
\end{definition}

Fig~\ref{fig:dcindex}(b) illustrates the $(k,\delta)$-truss arborescence, which is actually a minimum-weight directed spanning tree. For each truss, there is a single directed path that leads to the root, namely, $(k_{max},0)$-truss. %Meanwhile, the $(k,\delta)$-truss arborescence represents the compactest incremental storage scheme with no extra decompression cost.

\begin{definition}[Reduced $(k,\delta)$-Truss Arborescence]\label{def:rta}
    Given a $(k,\delta)$-truss arborescence $A$, we reduce it to another arborescence $A^-$ by (i) removing each truss and its outgoing edge if the edge weight is zero and (ii) reconnecting each truss to the next truss on its original path in $A$ if its sink truss is removed.
\end{definition}

Fig~\ref{fig:dcindex}(c) illustrates the reduced $(k,\delta)$-truss arborescence. Obviously, each omitted truss can still be retrieved because there is certainly another truss identical to it remained. 
Then, we use an incremental storage scheme to preserve all possible $(k,\delta)$-trusses according to the reduced $(k,\delta)$-truss arborescence. As illustrated in Fig~\ref{fig:dcindex}(d), the \textbf{incremental edge set tree} of DC-Index is logically equivalent to the reduced $(k,\delta)$-truss arborescence, and preserves the Incremental Edge Sets (IESes) between trusses in each node of the tree. Compared with TC-Index, DC-Index surely has fewer redundant edges. For example, the total number of edges in Fig~\ref{fig:dcindex}(d) is 40, and in contrast, the total number of edges in Fig~\ref{fig:tcindex} is 54. Actually, the space cost of DC-Index is the minimum under a precondition.

% \begin{figure}[t!]
%     \centering
%     \includegraphics[width=0.7\linewidth]{DC.pdf}
%     \caption{An example of DC-Index, in which $(v_i,v_j)$ is represented by $(i,j)$.}
%     \label{fig:dcindex}
% \end{figure}

\begin{theorem}\label{the:dcspace}
    Given a temporal graph $\mathcal{G}^t$, the incremental edge set tree of DC-Index is space-optimal for preserving all possible $(k,\delta)$-trusses of $\mathcal{G}^t$, when the efficiency of retrieving a specific $(k,\delta)$-truss cannot be degraded.
\end{theorem}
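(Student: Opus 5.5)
We first make the precondition precise. The scheme must store, for every truss $T_{k,\delta}$, edges that are either its own or those of a truss it directly contains. Retrieving $T_{k,\delta}$ should cost $O(|T_{k,\delta}|)$ plus lookup, with no decompression or set-difference work. This means every stored edge read during retrieval belongs to $T_{k,\delta}$, and each edge of $T_{k,\delta}$ is read once. Under this reading, an admissible incremental scheme is a rooted spanning structure on the distinct trusses. Each truss $T$ stores $T\setminus P(T)$ for a chosen parent $P(T)\subseteq T$, and $T$ is recovered as the union of the stored sets along its path to the root.

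If $P(T)$ were not contained in $T$, the path would output edges outside $T$, and retrieval would be degraded. So parents must be trusses contained in $T$. The cost of a scheme is $\sum_T |T\setminus P(T)|$.

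The first step shows that, without loss of generality, parents come from the two direct successors $T_{k+1,\delta}$ and $T_{k,\delta-1}$, i.e., the out-edges of the $(k,\delta)$-truss graph. Suppose $T_{k',\delta'}\subseteq T_{k,\delta}$ with $k'\ge k$ and $\delta'\le\delta$, via Property~\ref{pro:kdtruss}. Any monotone lattice path from $(k,\delta)$ to $(k',\delta')$ through the truss graph has telescoping weights whose sum is $|T_{k,\delta}\setminus T_{k',\delta'}|$, because containment is nested along the path.

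A parent chosen further away is also no cheaper at the source itself. Since $T_{k',\delta'}$ lies inside one of the direct successors $S$, we get $|T\setminus S|\le |T\setminus T_{k',\delta'}|$. Hence replacing a far parent by the appropriate direct successor never increases the charge at $T$. Every truss is still covered, because the truss graph is connected to the root $(k_{max},0)$. The total cost also does not increase, because each truss's cost depends only on its own parent choice.

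The second step is the key observation that the objective separates. The total cost is $\sum_T w(T,P(T))$, the choices are independent across trusses, and any choice of one out-edge per node yields an arborescence rooted at $T_{k_{max},0}$: the index $(k,-\delta)$ strictly increases, so there are no cycles. The minimum is therefore attained by picking the lighter out-edge at each node, which is exactly the $(k,\delta)$-truss arborescence of Definition~\ref{def:ta}. Nodes with a single out-edge, on the boundary $k=k_{max}$ or $\delta=0$, are forced.

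The third step checks that the reduction in Definition~\ref{def:rta} preserves both cost and retrievability. Removed nodes have weight $0$, so they contribute nothing. Reconnecting a child to the next retained node on its path keeps the union along the path unchanged. Each omitted truss equals its retained descendant and is retrieved through a pointer. Thus the incremental edge set tree stores exactly $\sum_T\min\{w(T,T_{k+1,\delta}),w(T,T_{k,\delta-1})\}$ edges, which matches the lower bound from the first two steps.

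The main obstacle is the first step: justifying that ``efficiency cannot be degraded'' rules out sharing through non-nested sets or subtraction. I would handle this by stating the model explicitly, namely that retrieval is a union of stored disjoint increments along a containment chain. Within that model I would prove, via the monotonicity argument above, that distant parents are dominated by direct successors.
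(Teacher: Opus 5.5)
Your Steps 2 and 3 follow the paper's reasoning. The paper's justification is that the $(k,\delta)$-truss arborescence is a minimum-weight spanning arborescence of the truss graph, because each node picks its lighter out-edge independently and any such choice is acyclic, and that the reduction drops only zero-weight nodes.

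The gap is in Step 1 and in the model you set up for it. You make nodes the \emph{distinct} trusses and allow as parent any truss \emph{set-contained} in $T$. You then treat only parents $T_{k',\delta'}$ with $k'\ge k$ and $\delta'\le\delta$, citing Property~\ref{pro:kdtruss}. That property gives only one direction: index dominance implies containment. A truss can be set-contained in, or even equal to, $T_{k,\delta}$ with incomparable indices, and then it need not lie inside $T_{k+1,\delta}$ or $T_{k,\delta-1}$.

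This cannot be patched within your model, because there the statement is false. Take $K_4$ on $a,b,c,d$ with single timestamps $ab{:}0$, $ac{:}1$, $bc{:}2$, $ad{:}0$, $bd{:}1$, $cd{:}1$. Then $\mts(\{a,b,c\})=2$ and the other three triangles have minimum time span $1$. So $k_{max}=4$, $\delta_{max}=2$, $T_{3,1}=T_{3,2}=T_{4,2}=K_4$, and $T_{3,0}=T_{4,0}=T_{4,1}=\emptyset$.
\begin{itemize}
\item Both $(3,1)$ and $(4,2)$ have minimum out-weight $6$, so the reduced arborescence keeps both and the incremental edge set tree holds $12$ edges.
\item A single node holding the six edges of $K_4$, with the cells $(3,1)$, $(3,2)$, $(4,2)$ pointing to it, costs $6$ and retrieves every truss in the same time.
\end{itemize}
The same example breaks your Step 3 claim that the tree matches the lower bound. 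Your bound from Steps 1--2 is a sum over distinct trusses, while the DC-Index cost is a sum over index pairs. Also, ``each omitted truss equals its retained descendant'' covers only identical trusses joined by a zero-weight chain, i.e.\ index-comparable ones.

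To close the gap, make the precondition index-based, as the paper implicitly does by restricting increments to the dual containment relation:
\begin{itemize}
\item every index pair $(k,\delta)$ is its own node;
\item its stored increment must be taken relative to some $T_{k',\delta'}$ with $k'\ge k$ and $\delta'\le\delta$;
\item it may be dropped only when it coincides with such a truss.
\end{itemize}
In that model your telescoping argument shows that a distant parent is dominated by a direct successor. The per-node minimization then gives $\sum_{(k,\delta)}\min\{|T_{k,\delta}\setminus T_{k+1,\delta}|,\,|T_{k,\delta}\setminus T_{k,\delta-1}|\}$, and the zero-weight reduction attains it, so the proof is complete. You should state explicitly that identical trusses with incomparable indices are not merged, since otherwise the optimality claim fails, as shown above.
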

Moreover, to retrieve a specific $(k,\delta)$-truss in the tree, DC-Index uses a lookup table to record the pointers to tree nodes. As shown in Fig~\ref{fig:dcindex}(e), the cell in $k$ row and $\delta$ column contains the pointer to the tree node that represents an identical truss of $(k,\delta)$-truss. For example, to lookup $T_{3,0}$, the pointer to $T_{4,0}$ in the tree is returned. We further compress the lookup table by skipping the consecutively repeating pointers for each row. As illustrated in Fig~\ref{fig:dcindex}(f), when $k=5$, there are only two unique pointers to $T_{5,0}$ and $T_{5,2}$ respectively, and we only record these two pointers with their smallest column ids (namely, 0 and 2) in the row. For $0 < \delta < 2$ or $2 < \delta \leq 6$, it is easy to know the corresponding $T_{5,\delta}$ is identical to $T_{5,0}$ or $T_{5,2}$.

% Moreover, to retrieve a specific $(k,\delta)$-truss in the tree, DC-Index uses a lookup table to record the pointers to tree nodes. As illustrated in Fig~\ref{fig:dcindex}(e), the cell in $k$ row and $\delta$ column contains the pointer to the tree node that represents an identical truss of $(k,\delta)$-truss. For example, to lookup $T_{3,0}$, the pointer to $T_{4,0}$ in the tree is returned. We further compress the lookup table by skipping the consecutively repeating pointers for each row. As illustrated in Fig~\ref{fig:dcindex}(f), when $k=5$, there are only two unique pointers to $T_{5,0}$ and $T_{5,2}$ respectively, and we only record them and their smallest column ids (namely, 0 and 2) in the row. For $0 < \delta < 2$ or $2 < \delta \leq 6$, it is easy to know the corresponding $T_{5,\delta}$ is identical to $T_{5,0}$ or $T_{5,2}$.

\subsubsection{Query Processing}

% The DC-Index based query algorithm is named DC-Query, which is similar to TC-Query. For the given $k$ and $\delta$, it firstly finds the node with maximum $\delta' \leq \delta$ in row $k$. \textcolor{black}{Then, we traverse the path from this node to the root via the \textit{Father} pointer, and the union of all traversed edge sets is the edge set of $T_{k,\delta}$. The pseudo code is omitted.}
The DC-Index based query algorithm is named DC-Query, which is similar to TC-Query. For the given $k$ and $\delta$, it firstly finds the maximum $\delta' \leq \delta$ in row $k$ and gets the pointer to a node of incremental edge set tree. Then, we traverse the path from this node to the root, and the union of all traversed edge sets is the edge set of $T_{k,\delta}$. The pseudo code is omitted.

\begin{theorem}
    DC-Query is as efficient as TC-Query.
\end{theorem}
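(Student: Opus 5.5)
We read the statement as asserting that DC-Query answers a $(k,\delta)$-truss query within the same bound as Theorem~\ref{the:tctime}, namely $O(\log\delta_{max} + |T_{k,\delta}|)$. The proof splits DC-Query into its two phases, the lookup-table search and the path traversal, and bounds each separately.

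For the first phase, row $k$ of the compressed lookup table (Fig~\ref{fig:dcindex}(f)) stores pointers sorted by their smallest column ids. The number of entries in the row is at most $\delta_{max}+1$. Finding the maximum $\delta' \le \delta$ is therefore a binary search costing $O(\log\delta_{max})$. This mirrors the search in $\mathcal{D}_k$ done by TC-Query.

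For the second phase, we will show that the total size of the IESes on the tree path from the returned node to the root equals $|T_{k,\delta}|$.
\begin{enumerate}
\item By Definition~\ref{def:ta}, every arc $T_{k_1,\delta_1}\to T_{k_2,\delta_2}$ of the arborescence is a vertical or horizontal arc of the truss graph. Property~\ref{pro:kdtruss} gives $T_{k_2,\delta_2}\subseteq T_{k_1,\delta_1}$, and the IES stored at the source node is exactly $T_{k_1,\delta_1}\setminus T_{k_2,\delta_2}$.
\item The reduction of Definition~\ref{def:rta} only deletes nodes whose IES is empty and splices their in-arcs to the next node on the path. Hence the containment chain survives, and each spliced arc carries the same difference set.
\item Telescoping along the path $T_{k,\delta}=X_0\supseteq X_1\supseteq\cdots\supseteq X_m$ to the root $X_m=T_{k_{max},0}$ writes $T_{k,\delta}$ as the disjoint union $\bigcup_i (X_i\setminus X_{i+1}) \cup X_m$. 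The root stores all of $X_m$.
\item The lookup table points to a node representing a truss identical to $T_{k,\delta}$, so the union of the traversed IESes is exactly $T_{k,\delta}$.
\item Each edge appears in exactly one IES of the path, so the scan costs $\sum|IES| = |T_{k,\delta}|$.
\end{enumerate}

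The main obstacle is the cost of visiting the path nodes themselves, since node visits are not paid for by edges. By Definition~\ref{def:rta}(i), every non-root node of the reduced arborescence has a nonempty IES. Assuming the root truss is nonempty, each node on the path contributes at least one edge, so the path length is at most $|T_{k,\delta}|$. We also need the root to be nonempty or reduced away; if $T_{k_{max},0}$ is empty, we treat the first node with nonempty content as the effective root. Summing the two phases gives $O(\log\delta_{max}+|T_{k,\delta}|)$, which matches Theorem~\ref{the:tctime}.
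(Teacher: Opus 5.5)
Your proof is correct and follows essentially the route the paper's statement calls for; the paper gives no written proof of this theorem here (details are deferred to the conference version). The route is an $O(\log\delta_{max})$ binary search in the compressed row $k$ of the lookup table, followed by a root-ward traversal whose pairwise disjoint IESes telescope to exactly $T_{k,\delta}$, which matches the bound of Theorem~\ref{the:tctime}. Your accounting of node visits is a useful refinement, but the ``effective root'' workaround is unnecessary: an empty root (this does happen, e.g.\ $T_{5,0}=\emptyset$ in the running example, since $T_{5,0}\subseteq T_{4,0}$, which is a $K_4$) adds only one $O(1)$ visit, so the path has at most $|T_{k,\delta}|+1$ nodes.
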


\section{Index construction}\label{sec:ic}
In this section, we address the scalable construction of TC/DC-Index on large-scale temporal graphs. For that, we propose two algorithms based on two classic paradigms, namely, truss decomposition~\cite{wang2012trussdecomp,chen2014trussdecomp,che2010trussdecomp,kabir2017paratrussdecomp,kabir2017sharedtrussdecomp} and truss maintenance~\cite{huang2014trusscommunity, wang2019trussmaint, cliu2014trussmaint, luo2020trussmaintenance,suntrussmaintenance}, respectively.

\subsection{TC-Index Construction based on Decomposition}

Intrinsically, the construction of TC-Index can be addressed by computing the Incremental Edge Sets (IESes) between each pair of $(k,\delta)$-truss and $(k,\delta+1)$-truss for a temporal graph. Since these IESes correspond to the horizontal edges ($E_h$) of the $(k,\delta)$-truss graph illustrated in Fig~\ref{fig:dualcont}, we call them Horizontal IES (H-IES), The index construction algorithm needs to reduce the redundant part in the computation of different H-IESes to scale to large temporal graphs. A basic strategy is to exploit the dual containment property to decrementally decompose $(k,\delta)$-trusses in a particular order with respect to $k$ and $\delta$, like the existing temporal $k$-core decomposition~\cite{Yang2023temporalcore}.

However, temporal $k$-core decomposition does not involve the extra metric like the minimum time span of triangle, which may be needed repeatedly for evaluating $\delta$-supports of edges. Thus, to eliminate the repeating computation of minimum time span, we design the following data structure to store $\mts(\Delta)$ for each $\Delta$ in a temporal graph.

\begin{definition}[$\delta$-triangle List]
    Given a temporal graph $\mathcal{G}^t$, a $\delta$-triangle list is a list of triangle sets $(S^{\Delta}_0, S^{\Delta}_1, \cdots, S^{\Delta}_{\delta_{max}})$, where $S^{\Delta}_\delta$ is the set of all triangles in $\mathcal{G}^t$ whose minimum time spans are exactly $\delta$ with $0\leq \delta \leq \delta_{max}$. 
\end{definition}

%\begin{algorithm}[t!]
%    \SetKwFunction{dch}{decomph}
%    \SetKwFunction{dcv}{decompv}
%    \caption{Decomposition Based Algorithm (DBA)}\label{alg:dba}
%    Please see the conference version~\cite{Hu2024kdtruss}.
%\end{algorithm}

Then, for each $k$, we decrementally induce each $(k,\delta)$-truss from $(k,\delta+1)$-truss, so that the H-IES between them can be obtained. The pseudo code of our Decomposition Based Algorithm (DBA) can be found in the conference version~\cite{Hu2024kdtruss}. Initially, we enumerate all triangles and evaluate their minimum time span, for building the $\delta$-triangle list. Let $X^{\Delta}_k$ denote the set of triangles of $k$-truss. Obviously, the 2-truss is $\mathcal{G}^t$ itself, and thereby $X^{\Delta}_2$ is the union of all sets in $k$-triangle list. Then, for each $k$ with $3\leq k \leq k_{max}$, we decompose $T_{k,\delta_{max}} = T_k$ gradually until $T_{k,0}$ is obtained, and collect the H-IESes. Specifically, each iteration starts with obtaining $T_k$ of $\mathcal{G}$ by a traditional truss decomposition function \dcv{}, since the $(k,\delta)$-truss is always a subgraph of $k$-truss. In particular, we can also obtain the set of deleted triangles $X^{\Delta}$ by \dcv{}. Thus, the triangle set of $k$-truss $X^{\Delta}_{k}$ can be computed by removing $X^{\Delta}$ from $X^{\Delta}_{k-1}$. Then, for each $\delta$ from $\delta_{max}-1$ until 0, the $(k,\delta)$-truss is induced by a new decomposition function \dch{}, which invalidates the triangles in $X^{\Delta}_k$ from the $(k,\delta+1)$-truss. In this function, the H-IES can also be obtained. The details of functions are also included in the conference version~\cite{Hu2024kdtruss}.

%\noindent{\dcv{}}. This function could be any existing truss decomposition algorithm like~\cite{wang2012trussdecomp}. So the details are omitted.

%\noindent{\dch{}}. This function mainly consists of two phases. In the first phase (Lines 13-18), for each triangle whose minimum time span is $\delta$ in $X^{\Delta}_k$, we decrease the $\delta$-support of its edges by 1 and push the edges whose new $\delta$-support is less than $k-2$ into a candidate container $Q$, which collects edges waiting to be deleted. In the second phase (Lines 19-28), for each edge $e$ in $Q$, we remove it from $T_{k,\delta}$, and meanwhile add it to the H-IES $R$. Upon the removal of $e$, if there is any other triangle $\Delta \in X^{\Delta}_k$ that contains $e$, we process the other edges of $\Delta$ than $e$ as in the first phase.

\begin{example}
Consider the temporal graph in Fig~\ref{fig:etg}. Since the greatest minimum time span of its triangles $\delta_{max} = 6$, the $4$-truss is actually the $(4,6)$-truss surrounded by black dashed line. For $k = 4$, we can get the $4$-truss from previous $3$-truss by a traditional truss decomposition, and start to decompose $(4,6)$-truss from $\delta = 5$. The triangle $\Delta = \{v_2,v_7,v_8\} \in X^{\Delta}_4$ with $\mts(\Delta) = 6$ will be invalidated because we are trying to obtain $(4,5)$-truss currently. Thus, the $\delta$-supports of its edges are decreased by $1$. Then, we have both $5$-$\support(v_2,v_7)$ and $5$-$\support(v_2,v_8) = 1 < 4-2$, so $(v_2,v_7)$ and $(v_2,v_8)$ are pushed into $Q$ and waiting to be deleted. On the removal of edge $(v_2,v_7)$, the triangle $\{v_2, v_3, v_7\}$ containing it will be broken, so that the other two edges $(v_2,v_3)$ and $(v_3,v_7)$ also need to be checked like $(v_2,v_7)$. Iteratively, the edges not belong to $(4,5)$-truss are all deleted and pushed into the H-IES between $(4,6)$-truss and $(4,5)$-truss, until $Q$ is empty. Similarly, when $\delta$ decreases gradually, the edges remarked by red color and green color will be deleted respectively, until only the blue edges that comprise the $(4,0)$-truss remain.
\end{example}

\noindent{\textbf{Correctness.}} We only discuss the correctness of \dch{} here, as the other parts of DBA are straightforward. Due to Property~\ref{pro:kdtruss}, we can induce the $(k,\delta)$-truss from the $(k,\delta+1)$-truss. It is easy to know, we can finish that by invalidating all triangles in $X^{\Delta}_k$ whose minimum time span is greater than $\delta$. To avoid redundant computation, we have a trick that is to only invalidate the triangle whose minimum time span is exactly $\delta + 1$, because the other triangles have been invalidated during the previous calls of \dch{}. The rest decomposition procedure is correct obviously.

\noindent{\textbf{Complexity.}} The time complexity of building the $\delta$-triangle list is $O(\sum_{(u,v) \in \mathcal{E}} \min\{\deg(u),\deg(v)\} + \overline{|\tau|} \cdot |\Delta|)$, which is the same as the online algorithm. The total time cost of calling \dcv{} is $O(\sum_{(u,v) \in \mathcal{E}} \min\{\deg(u),\deg(v)\}$. For the inner loop with a specific $k$, the time complexity is $O(|X_k^{\Delta}| + \sum_{(u,v) \in T_k} \min\{\deg(u),\deg(v)\})$, since each triangle in $T_k$ is invalidated at most once and each edge in $T_k$ is visited at most once. Thus, the total time cost of calling \dch{} is $O(\sum_{k=3}^{k_{max}} (|X_k^{\Delta}| + \sum_{(u,v) \in T_k} \min\{\deg(u),\deg(v)\}))$, which dominates the total time cost of DBA. 

\subsection{TC/DC-Index Construction based on Maintenance}

DBA can only produce H-IES but not IES between each pair of $(k,\delta)$-truss and $(k+1,\delta)$-truss, which are called Vertical IES (V-IES) for corresponding to the vertical edges ($E_v$) of the $(k,\delta)$-truss graph in Fig~\ref{fig:dcindex}, and thereby is not efficient for constructing DC-Index. Thus, we propose another Maintenance Based Algorithm (MBA). MBA can construct both TC-Index and DC-Index, and is more efficient than DBA.

The traditional truss maintenance problem is to update the trussness of edges when edges are inserted or deleted, which has been widely studied~\cite{huang2014trusscommunity, wang2019trussmaint, cliu2014trussmaint, luo2020trussmaintenance}. Different from that, we mainly focus on truss maintenance when triangles are validated or invalidated with respect to minimum time span. Moreover, similar to edge-oriented maintenance, updating trussness for triangle invalidation is much more efficient than triangle validation. Therefore, we only maintain edge trussness when a triangle becomes invalid due to the decease of $\delta$, with respect to the following observations.

\begin{lemma}\label{lem:trussness1}
    Given a graph $\mathcal{G}$, the trussness of any edge in $\mathcal{E}$ can be decreased by at most $1$ if a triangle $\Delta$ of $\mathcal{G}$ gets invalid.
\end{lemma}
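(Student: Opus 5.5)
We prove the lemma by a direct construction. Fix the set of currently valid triangles and the invalid triangle $\Delta$. Write $\phi(e)$ for the trussness of $e$ before the invalidation and $\phi'(e)$ for it afterwards. Here trussness is the largest $k$ such that $e$ lies in the maximal subgraph in which every edge is contained in at least $k-2$ valid triangles of that subgraph. Trussness cannot increase, because removing a triangle from the valid set only lowers supports: every subgraph witnessing $\phi'(e)\ge k$ also witnesses $\phi(e)\ge k$. So it suffices to show $\phi'(e)\ge \phi(e)-1$ for every edge $e$.

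Fix $e$ and let $k=\phi(e)$. Let $H$ be the $k$-truss (under the old valid set) that contains $e$. Every edge of $H$ lies in at least $k-2$ valid triangles inside $H$. After $\Delta$ is invalidated, each edge of $H$ loses at most one supporting triangle. Only the three edges of $\Delta$ can lose anything at all, and each of them is contained in $\Delta$ exactly once. Hence every edge of $H$ still has support at least $k-3=(k-1)-2$ inside $H$ with respect to the new valid set.

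It follows that $H$ is a subgraph in which every edge meets the $(k-1)$-truss condition. By maximality, $H$ is contained in the new $(k-1)$-truss, which gives $\phi'(e)\ge k-1$. For the temporal setting, the same argument applies with the $(k,\delta)$-truss at a fixed $\delta$, where "invalid" means $\Delta$ is no longer counted as a $\delta$-triangle. The case $k=2$ is trivial, since the 2-truss is the whole graph and no decrease is possible.

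The only point needing care is that the edges of $H$ other than those of $\Delta$ lose no support, and that an edge of $\Delta$ loses exactly one triangle. Both follow immediately from the definition of $\delta$-support as a count of distinct valid triangles. No further obstacle is expected.
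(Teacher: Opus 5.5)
Your proof is correct and follows the standard argument this lemma relies on, the triangle-invalidation analogue of the truss-maintenance fact that deleting an edge lowers trussness by at most one. The old $k$-truss containing $e$ loses at most one valid triangle per edge, so it still satisfies the $(k-1)$-truss condition, and maximality gives $\trn(e)\ge k-1$ afterwards. Because your first paragraph fixes an arbitrary set of currently valid triangles, the argument also covers the intermediate states of MBA, where only some triangles with $\mts(\Delta)=\delta$ have been invalidated, so the aside about a fixed $\delta$ is not needed.
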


With Lemma~\ref{lem:trussness1}, we only need to identify the edges whose trussness will be affected by triangle invalidation and decrease their trussness by $1$ for maintenance.

\begin{definition}[$k$-triangle]\label{def:ktriangle}
    Given a graph $\mathcal{G}$, a triangle $\Delta$ of $\mathcal{G}$ is a $k$-triangle if the minimum edge trussness in the triangle is $k$. We say the level of $\Delta$ is $k$, denoted by $L(\Delta, \mathcal{G})$. When the content is clear, we replace $L(\Delta,\mathcal{G})$ by $L(\Delta)$.
\end{definition}
  
\begin{lemma}\label{lem:trussness2}
    For any $k$-triangle $\Delta$ of $\mathcal{G}$, the trussness of an edge $e \in \mathcal{E}$ will not be updated when $\Delta$ gets invalid if $\trn(e) \neq k$.
\end{lemma}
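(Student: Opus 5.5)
We prove the statement by splitting on whether $\trn(e) > k$ or $\trn(e) < k$. Throughout, $\trn(\cdot)$ denotes trussness with respect to the current set of valid triangles. Let $\mathcal{G}'$ denote the graph after $\Delta$ is invalidated. Trussness can only decrease upon invalidation, since every truss of $\mathcal{G}'$ is also a truss of $\mathcal{G}$: its valid triangles form a subset of the valid triangles in $\mathcal{G}$. By Lemma~\ref{lem:trussness1}, each edge drops by at most $1$. It therefore suffices to show that no edge with $\trn(e)\neq k$ drops.

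\textbf{Case $\trn(e) = j > k$.} Let $T_j$ be the $j$-truss of $\mathcal{G}$ (with respect to valid triangles), which contains $e$. Since $L(\Delta)=k<j$, at least one edge of $\Delta$ has trussness $k$ and hence does not belong to $T_j$. So $\Delta$ is not a triangle of the subgraph $T_j$, and every triangle inside $T_j$ remains valid in $\mathcal{G}'$. Consequently every edge of $T_j$ still has support at least $j-2$ within $T_j$. Thus $T_j$ is still a subgraph in which each edge has support at least $j-2$, so it lies in the $j$-truss of $\mathcal{G}'$, and $\trn'(e)\ge j$. Combined with monotonicity, $\trn'(e)=j$.

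\textbf{Case $\trn(e)=j<k$.} Take $T_j$ in $\mathcal{G}$, which contains $e$. All three edges of $\Delta$ have trussness at least $k>j$, so they all lie in $T_{j+1}$. Because $e\notin T_{j+1}$, the triangle $\Delta$ cannot be used by $e$ anyway, but it may be used by other edges of $T_j$.

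The key step is the following. Consider the subgraph $H=T_j$ and the removal of the single triangle $\Delta$. Every edge of $\Delta$ lies in $T_k\subseteq T_{j+1}$, where its support is at least $k-2\ge j-1$. After invalidating $\Delta$, the support of such an edge within $T_k$ is at least $k-3\ge j-2$. Every other edge of $T_j$ loses nothing. Hence every edge of $T_j$ still has support at least $j-2$ in $T_j$ under $\mathcal{G}'$. So $T_j$ lies in the $j$-truss of $\mathcal{G}'$, and $\trn'(e)\ge j$, hence $\trn'(e)=j$.

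\textbf{Main obstacle.} The delicate point is the second case: I must verify that the supports counted inside $T_j$ for the edges of $\Delta$ are at least $(k-2)-1 \ge j-2$. Here the assumption $j\le k-1$ is exactly what is needed, together with $T_k\subseteq T_j$ so that support in $T_j$ dominates support in $T_k$. The first case only needs the observation that $\Delta$ is not contained in $T_j$.
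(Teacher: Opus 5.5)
Your proof is correct. In the case $\trn(e)>k$, $\Delta$ does not lie inside $T_{\trn(e)}$. In the case $\trn(e)<k$, the three edges of $\Delta$ keep support at least $k-3\ge \trn(e)-2$ inside $T_k\subseteq T_{\trn(e)}$, so $T_{\trn(e)}$ survives intact. Monotonicity then gives the reverse inequality. This is the standard argument for such a lemma, though I cannot compare it line by line because the paper only states the lemma here and defers its proof to the conference version.

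One small remark: your appeal to Lemma~\ref{lem:trussness1} is superfluous. You establish $\trn(e,\mathcal{G}')\ge \trn(e)$ directly in both cases, and monotonicity alone gives $\trn(e,\mathcal{G}')\le \trn(e)$.
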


\begin{lemma}\label{lem:trussness3}
    For any $k$-triangle $\Delta$ of $\mathcal{G}$, the trussness of an edge $e \,{\in}\, \mathcal{E}$ may be updated when $\Delta$ gets invalid, if $\trn(e) = k$ and one of following conditions is satisfied: (i) $e \in \Delta$ or (ii) $e \notin \Delta$ and $\exists e' \in \Delta$ such that $\trn(e')$ = $k$ and $e$ is connected with $e'$ though a series of $k$-triangles sharing common edges.    
\end{lemma}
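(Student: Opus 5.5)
Lemma~\ref{lem:trussness3} is really a necessity statement. If an edge $e$ changes trussness when the $k$-triangle $\Delta$ is invalidated, then $\trn(e)=k$ and either (i) or (ii) holds. I will prove the contrapositive. Fix $e$ with $\trn(e)=k$, $e\notin\Delta$, and suppose $e$ is not connected to any edge $e'\in\Delta$ with $\trn(e')=k$ through a chain of $k$-triangles sharing common edges. I will show that $\trn(e)$ stays $k$ after $\Delta$ becomes invalid.

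The case $\trn(e)\neq k$ is Lemma~\ref{lem:trussness2}. By Lemma~\ref{lem:trussness1}, any change is a decrease by exactly one. So it suffices to exhibit, after the invalidation, a subgraph in which $e$ and every other edge has support at least $k-2$.

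\textbf{Construction.} Let $C$ be the set of edges of trussness $k$ reachable from $e$ by chains of $k$-triangles, where consecutive triangles share an edge. By assumption $C$ contains no edge of $\Delta$ of trussness $k$. Consider
\[
H = C \cup \{f : \trn(f) > k\},
\]
a subgraph of the original $k$-truss $T_k$.

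\textbf{Support in $H$.} Take $f\in C$. In $T_k$, $f$ lies in at least $k-2$ triangles all of whose edges have trussness $\ge k$. Each such triangle either has all edges of trussness $>k$, or is a $k$-triangle containing $f$. In the second case its trussness-$k$ edges are connected to $f$ via that triangle, hence lie in $C$. So every such triangle lies entirely in $H$. Edges of trussness $>k$ keep support at least $k-1$ inside the $(k+1)$-truss, which is contained in $H$.

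\textbf{Effect of invalidating $\Delta$.} Suppose $\Delta$ is contained in $H$. Its edges of trussness $k$ are not in $C$, so all its edges would have trussness $>k$. That contradicts $L(\Delta)=k$. Hence $\Delta\not\subseteq H$, and removing it lowers no support inside $H$. Therefore $H$ is still a subgraph with every edge having support at least $k-2$. So $e$ remains in the $k$-truss and $\trn(e)=k$ is preserved.

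\textbf{Main obstacle.} The delicate step is the closure argument: every triangle certifying $f\in T_k$ must lie in $H$. This is where the precise definition of connection via $k$-triangles matters. A triangle mixing trussness-$k$ and higher edges has minimum $k$, so it is indeed a $k$-triangle, and the chain condition captures all such triangles.
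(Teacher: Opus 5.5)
Your proof is correct, and it takes a different route from the one the paper relies on. The paper states this lemma without proof here, treating it as an observation inherited from truss maintenance \cite{huang2014trusscommunity,wang2019trussmaint}. The reasoning it does spell out appears in the BFS of the triangle-invalidation algorithm and in its proof of the analogous $k$-span statement, Lemma~\ref{ins:case1:lemma4}. That reasoning is backward propagation: if $\trn(e)$ drops and $e\notin\Delta$, some $k$-triangle supporting $e$ leaves the $k$-truss, so another trussness-$k$ edge of it must also have dropped, and recursing along shared edges eventually reaches $\Delta$.

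You instead prove the contrapositive with a forward certificate. The subgraph $H=C\cup T_{k+1}$ is closed under the triangles certifying membership in $T_k$ and cannot contain $\Delta$, so every edge keeps support at least $k-2$ after the invalidation. The propagation view maps directly onto the algorithm, but to be rigorous it needs a well-foundedness step, for example following the peeling order of the new $k$-truss so that edges cannot ``justify'' each other's drop in a cycle. Your maximality argument avoids that step entirely.

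Your argument is also insensitive to how ``connected through $k$-triangles sharing common edges'' is read. Each extension step in your closure argument passes through the trussness-$k$ edge $f$, so it works even if shared edges are required to have trussness $k$.

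Two small points to tidy. First, state explicitly that $e\in C$: every valid triangle lies in the $3$-truss, so $k\geq 3$ and $e$ lies in at least one $k$-triangle. Second, note that for $f\in C$ the case ``all edges of trussness $>k$'' is vacuous, since $f$ itself has trussness $k$.
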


%\begin{algorithm}[t!]
%    \caption{Triangle Invalidation}\label{alg:single}
%    Please see the conference version~\cite{Hu2024kdtruss}.
%\end{algorithm}

With Lemma~\ref{lem:trussness2} and~\ref{lem:trussness3}, we develop an algorithm to maintain the edge trussness for a single triangle invalidation, which draws inspiration from the removal algorithm proposed by~\cite{wang2019trussmaint}. The most important trick is that, for each edge $e \in \mathcal{E}$, we maintains a stricter $k$-support $ks(e)$ = $|\{\Delta \,{:}\, e\in \Delta, L(\Delta) = \trn(e)\}|$, which is the number of $\trn(e)$-triangles containing $e$. In the $\trn(e)$-truss, $ks(e)$ actually becomes the number of triangles that contain $e$, so that $ks(e)$ is no less than $\trn(e) - 2$. 

The pseudo code of triangle invalidation algorithm can be found in the conference version~\cite{Hu2024kdtruss}. After the invalidation of input $k$-triangle $\Delta$, we firstly decrease the $k$-support of each edge $e$ of $\Delta$ with $\trn(e)$ = $k$ according to the case (i) of Lemma~\ref{lem:trussness3}, and pushes $e$ to a queue $Q$ if $ks(e)$ is no longer greater than $\trn(e)-1$, implying that its trussness will decrease. Then, for each edge $e \in Q$, since the decrease of $\trn(e)$ can further result in the decrease of trussness of other edges in a same triangle with $e$, we perform a breadth-first search to find such edges according to the case (ii) of Lemma~\ref{lem:trussness3}. Lastly, the algorithm returns the set of edges whose trussness is decreased by triangle invalidation.

Then, let us consider how to construct DC-Index through edge trussness maintenance. Since the construction of DC-Index requires to compute both V-IES and H-IES, we present the methods respectively.

\noindent{\textbf{V-IES}}. The V-IES between $(k,\delta)$-truss and $(k+1,\delta)$-truss is simply the set of edges whose trussness is $k+1$ when the triangles whose minimum time span is greater than $\delta$ have been invalidated. Thus, we can obtain all V-IESes by invalidating triangles in descending order of minimum time span gradually, and maintain the edge trussness simultaneously. 

\noindent{\textbf{H-IES}}. The H-IES between $(k,\delta)$-truss and $(k,\delta - 1)$-truss can be further obtained with respect to the following observation.

\begin{lemma}\label{lem:hies}
    When all triangles whose minimum time span is greater than $\delta$ have been invalidated, for an edge $e$ with $\trn(e) = k$, it belongs to the H-IES between $(k,\delta)$-truss and $(k,\delta - 1)$-truss if $\trn(e)$ decreases after any triangle $\Delta$ with $\mts(\Delta) = \delta$ has been further invalidated.
\end{lemma}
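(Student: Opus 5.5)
The plan is to reduce the statement to the correspondence between edge trussness in the \emph{$\delta$-valid subgraph} and membership in $(k,\delta)$-trusses. Let $\mathcal{G}_\delta$ denote the triangle system of $\mathcal{G}^t$ after all triangles with $\mts(\Delta)>\delta$ have been invalidated. The trussness $\trn_\delta(e)$ of $e$ in $\mathcal{G}_\delta$ is the largest $k$ such that $e$ lies in a maximal subgraph where every edge is contained in at least $k-2$ valid triangles. By the definition of $\delta$-support and $(k,\delta)$-truss, this is exactly the largest $k$ with $e\in T_{k,\delta}$. So the first step is to record
\[ e\in T_{k,\delta} \iff \trn_\delta(e)\geq k, \]
which follows because $T_{k,\delta}$ and the $k$-truss of $\mathcal{G}_\delta$ are both the unique maximal subgraph satisfying the same support condition. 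Uniqueness comes from the usual union-closure argument: the union of two subgraphs in which every edge has $\geq k-2$ valid triangles again has this property.

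Next, we define $\mathcal{G}_{\delta-1}$ as $\mathcal{G}_\delta$ with all triangles of $\mts$ exactly $\delta$ additionally invalidated. The H-IES between $T_{k,\delta}$ and $T_{k,\delta-1}$ is $T_{k,\delta}\setminus T_{k,\delta-1}$. By the first step, this is the set of edges $e$ with $\trn_\delta(e)\geq k$ but $\trn_{\delta-1}(e)<k$.

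Now take $e$ with $\trn_\delta(e)=k$, and suppose its trussness decreases at some point while the triangles with $\mts=\delta$ are invalidated one by one, in whatever order the algorithm uses. Since trussness is monotone non-increasing under triangle invalidation (the support condition only gets harder), the final value satisfies $\trn_{\delta-1}(e)\leq k-1<k$. Hence $e\notin T_{k,\delta-1}$, while $e\in T_{k,\delta}$ since $\trn_\delta(e)=k\geq k$. Therefore $e$ belongs to the H-IES.

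The main obstacle is justifying monotonicity and that intermediate states are well defined. I would handle this by noting that after each single invalidation the new $k$-truss is a subgraph of the old one: any subgraph satisfying the support condition with fewer valid triangles also satisfies it with more. Lemma~\ref{lem:trussness1} additionally guarantees that each drop is exactly $1$, so the drop from $k$ lands at $k-1$ and is never skipped. For completeness I would also note the converse: any H-IES edge with $\trn_\delta(e)=k$ must drop at some step, since $\trn_{\delta-1}(e)<k$. This gives the characterization the algorithm relies on.
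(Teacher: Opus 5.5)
Your argument is correct. You identify trussness in the triangle system with all $\mts>\delta$ triangles invalidated with the largest $k$ such that $e\in T_{k,\delta}$, and use the fact that invalidating further triangles can only shrink trusses; together these give $e\in T_{k,\delta}\setminus T_{k,\delta-1}$ immediately.

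The paper states this lemma without an explicit proof; it only asserts that MBA's correctness rests on Lemmas~\ref{lem:trussness1}--\ref{lem:hies}. It does, however, rely on exactly your identification in its V-IES paragraph, so your proof is essentially the intended reasoning made explicit. For the stated direction you need neither Lemma~\ref{lem:trussness1} nor nestedness in $k$: $\trn_\delta(e)=k$ already gives $e\in T_{k,\delta}$, and $\trn_{\delta-1}(e)<k$ already gives $e\notin T_{k,\delta-1}$.
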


With Lemma~\ref{lem:hies}, we can obtain both V-IES and H-IES in the procedure of truss maintenance. Specifically, after invalidating each triangle whose minimum time span is $\delta$, we only need to check the edges whose trussness is exactly the level of this triangle according to Lemma~\ref{lem:trussness2}, and add edges whose trussness has decreased into the corresponding H-IES.

\begin{figure}[t!]
    \centering
    \includegraphics[width=1.0\linewidth]{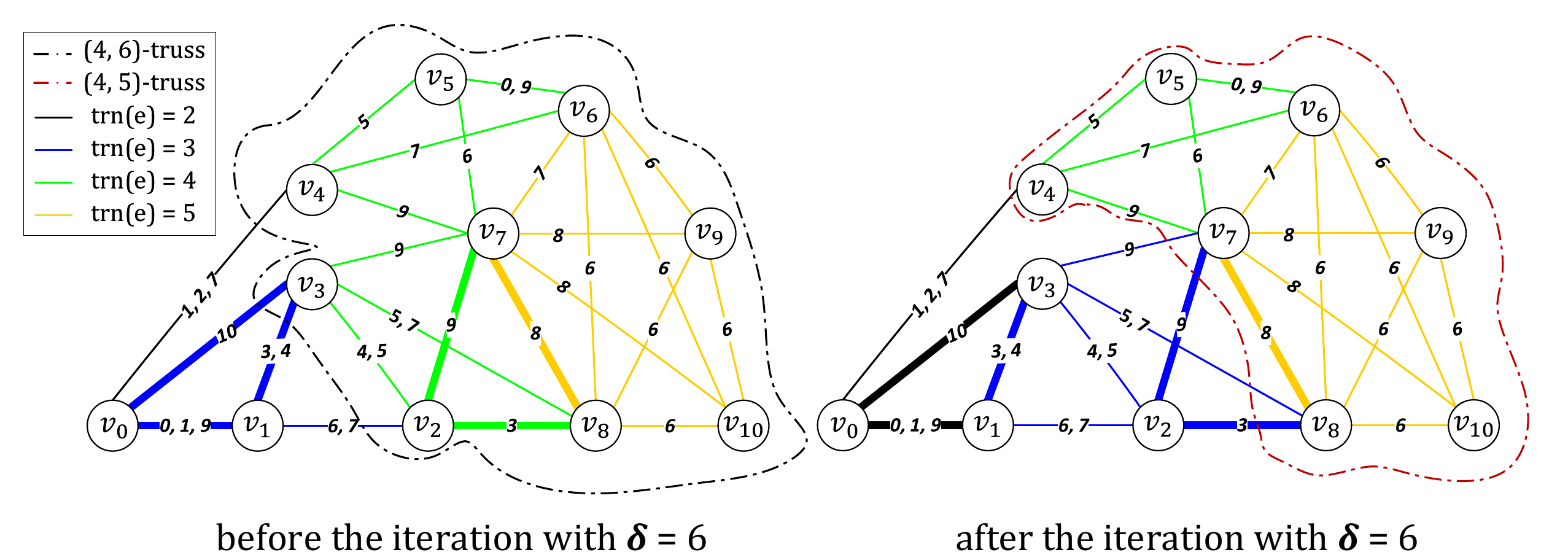}
    \caption{An example of executing MBA before and after $\delta = 6$.}
    \label{fig:mbag}
    \vspace{-0.3cm}
\end{figure}

%\begin{algorithm}[t!]
%    \caption{Maintenance Based Algorithm (MBA)}\label{alg:mba}
%    Please see the conference version~\cite{Hu2024kdtruss}.
%\end{algorithm}

The pseudo code of MBA can be found in the conference version~\cite{Hu2024kdtruss}. We first build the $\delta$-triangle list and compute the initial edge trussness of $\mathcal{G}$. Then, we enumerate $\delta$ gradually in descending order and compute the H-IES and V-IES respectively. For a specific $\delta$, obtaining the V-IES is straightforward since the edge trussness is already known. Moreover, we invalidate each triangle $\Delta$ with $\mts(\Delta) = \delta$ by calling the triangle invalidation algorithm. For each edge returned, we add $e$ to the H-IES between $(k,\delta)$-truss and $(k,\delta-1)$-truss, where $k$ is the level of $\Delta$. 

\begin{example}
    Consider running MBA on the temporal graph in Fig~\ref{fig:etg}. As illustrated in Fig~\ref{fig:mbag}, we remark the trussness of edges by different colors. The initial trussness before iterations is shown on the left. We highlight the $(4,6)$-truss comprised of green and yellow edges because $\delta_{max} = 6$. Then, in the iteration with $\delta = 6$, the triangle $\Delta = \{v_2, v_7, v_8\}$ is invalidated because $\mts(\Delta) = 6$. According to Lemma~\ref{lem:trussness2}, $(v_7,v_8)$ will be not affected since $L(\Delta) = 4 \neq \trn(v_7,v_8) = 5$. According to the case (i) of Lemma~\ref{lem:trussness3}, $(v_2,v_7)$ and $(v_2,v_8)$ will be affected. According to the case (ii) of Lemma~\ref{lem:trussness3}, $(v_2,v_3)$, $(v_3,v_7)$, and $(v_3,v_8)$ will be affected. Lastly, according to Lemma~\ref{lem:trussness1}, the trussness of affected edges is decreased by 1. The updated trussness is shown on the right, and the current green and yellow edges comprise the highlighted $(4,5)$-truss.
\end{example}

\noindent{\textbf{Correctness.}} The correctness of MBA is established on the corretness of Lemma~\ref{lem:trussness1},~\ref{lem:trussness2},~\ref{lem:trussness3}, and~\ref{lem:hies}.

\noindent{\textbf{Complexity.}} For brevity, we only compare the difference of dominant time cost between DBA and MBA here. For the quantity of visited edges (pushed into $Q$), two algorithms are equivalent, though DBA enumerates $k$ firstly and MBA enumerates $\delta$ firstly. For the quantity of invalidated triangles, MBA only needs to invalidate each triangle in $\mathcal{G}^t$ once, while DBA does that for each $k$. Thus, the dominant time cost of MBA is $O(\sum_{k = 3}^{k_{max}} \sum_{(u,v) \in T_k} \min \{\deg(u),\deg(v)\} + |\Delta|)$, which means MBA is more efficient than DBA.

{\color{black}
\section{Index Maintenance}\label{sec:mainter}
% In the real world, temporal graphs typically exhibit dynamic change with temporal edge insertion and deletion. Owing to the irreversible nature of time, insertion operations are generally more frequent and significant than deletions in practical applications. Accordingly, this section proposes an efficient index maintenance algorithm that primarily focuses on the insertion of temporal edges.
Since temporal graphs are naturally evolving, we address dynamic index maintenance in this section. We have two reasonable assumptions. Firstly, the history should not be changed, which means that we do not consider edge deletion (note that, deletion can still be handled like insertion in the same way as our approach). Secondly, it is not necessary to restrict the order of timestamps for edge insertion, which makes our approach more flexible. As a result, an evolving temporal graph can be seen as a stream of edges with arbitrary timestamps. As new edges are inserted, the $k$-span of existing edges may change accordingly, incurring updates to both the TC-Index and the DC-Index. Instead of rebuilding the entire indexes from scratch, we propose a filter-and-verification algorithm to find the set of edges with changed $k$-spans, and only update the positions of these edges in the indexes.

In the following, we denote by $e^t_0 = (u, v, t)$ a temporal edge and by $e_0 = (u, v)$ its corresponding static edge. After inserting $e^t_0$ into a temporal graph $\mathcal{G}^t$, let $\mathcal{G}^t_+$ and $\mathcal{G}_+$ denote the updated temporal graph and its corresponding static graph, respectively.

\subsection{Filter of $k$}

In order to find the set of edges with changed $k$-spans, the first step is to identify a range of $k$ such that the $k$-spans of all edges are not changed for each $k$ that is out of the range. For that, we have the following observation.

\begin{theorem}\label{ins:theorem1} 
When a temporal edge $e^t_0 = (u, v, t)$ is inserted into $\mathcal{G}^t$, if $k > \trn(e_0,\mathcal{G}_+)$, the insertion of $e^t_0$ does not change the $k$-span of any edge $e$ in the $k$-truss of $\mathcal{G}$ ($T_k(\mathcal{G})$), namely, $k\text{-}\kspan(e, \mathcal{G}^t) = k\text{-}\kspan(e, \mathcal{G}^t_+)$. 
\end{theorem}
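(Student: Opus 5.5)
We will show that for every $k > \trn(e_0,\mathcal{G}_+)$ and every $\delta$, the $(k,\delta)$-truss of $\mathcal{G}^t_+$ coincides with the $(k,\delta)$-truss of $\mathcal{G}^t$. The claim about $k$-spans then follows immediately from Definition~\ref{def:kspan}, since the $k$-span of $e$ is determined entirely by the family $\{T_{k,\delta}\}_\delta$ (the smallest $\delta$ whose truss contains $e$). So we fix $k > \trn(e_0,\mathcal{G}_+)$ and $\delta \geq 0$, and write $T = T_{k,\delta}(\mathcal{G}^t)$ and $T^+ = T_{k,\delta}(\mathcal{G}^t_+)$.

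First we prove $T^+ \subseteq T$. By Property~\ref{pro:kdtruss} with $\delta'=\infty$, $T^+$ is contained in the static $k$-truss $T_k(\mathcal{G}_+)$. Since $k > \trn(e_0,\mathcal{G}_+)$, the edge $e_0$ is not in $T_k(\mathcal{G}_+)$, hence not in $T^+$. Now consider any triangle $\Delta$ all of whose edges lie in $T^+$. None of these edges is $e_0$, so $\Delta$ has exactly the same timestamp sets in $\mathcal{G}^t$ and $\mathcal{G}^t_+$. Hence $\mts(\Delta,\mathcal{G}^t)=\mts(\Delta,\mathcal{G}^t_+)$, and $\Delta$ is a $\delta$-triangle in one graph iff in the other. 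It follows that $T^+$, read as a subgraph of $\mathcal{G}^t$, still gives every edge $\delta$-support at least $k-2$. In the case where $e_0$ is a brand-new edge, $T^+$ also avoids it, so $T^+$ is a genuine subgraph of $\mathcal{G}^t$. By maximality of $T$, we get $T^+\subseteq T$.

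Next we prove $T \subseteq T^+$. Adding a timestamp or an edge can only enlarge timestamp sets, so it can only decrease minimum time spans. Consequently every $\delta$-triangle of $T$ in $\mathcal{G}^t$ remains a $\delta$-triangle in $\mathcal{G}^t_+$, and $T$ is a subgraph of $\mathcal{G}^t_+$ in which each edge has $\delta$-support at least $k-2$. This is plain monotonicity. By maximality of $T^+$, we get $T\subseteq T^+$.

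The subtle point is the first inclusion. We must rule out $e_0 \in T^+$, and this is exactly where the hypothesis on $\trn(e_0,\mathcal{G}_+)$ is used, via the static containment $T_{k,\delta}\subseteq T_{k,\infty}=T_k$. We must also take care that when $e_0$ already existed in $\mathcal{G}$ and only gains a timestamp, triangles avoiding $e_0$ are unaffected. Finally, we note that the $k$-span of edges of $T_k(\mathcal{G})$ stays well defined: the argument at $\delta=\delta_{max}$ shows that the $k$-truss membership itself is preserved.
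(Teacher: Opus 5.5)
Your proof is correct, and it argues somewhat differently from the paper. The paper imports the static truss-maintenance fact from \cite{huang2014trusscommunity} that $T_k(\mathcal{G})=T_k(\mathcal{G}_+)$ when $k>\trn(e_0,\mathcal{G}_+)$. It then notes that $e_0\notin T_k(\mathcal{G})$, so every triangle inside this $k$-truss keeps its minimum time span, and concludes that $k$-spans are unchanged. This leaves implicit that each $(k,\delta)$-truss lies inside $T_k$ and is therefore determined by $T_k$ together with the minimum time spans of its triangles. You instead pin down every $T_{k,\delta}$ directly by two maximality arguments. The trussness hypothesis is used only to keep $e_0$ out of $T_{k,\delta}(\mathcal{G}^t_+)$, which then becomes a valid $(k,\delta)$-subgraph of $\mathcal{G}^t$. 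The reverse inclusion is just monotonicity of minimum time spans under insertion. Your route is self-contained: the cited static result is recovered as the case of $\delta$ exceeding every minimum time span in both graphs. It also makes explicit the step the paper skips. It yields the slightly stronger conclusion that every $(k,\delta)$-truss with such $k$ is literally identical, which is what is needed to leave the TC-Index and DC-Index entries for those $k$ untouched. The paper's version is shorter and ties the claim to the truss-maintenance results its maintenance algorithm relies on anyway.
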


\begin{proof} 
When $k > \trn(e_0, \mathcal{G}_+)$, it has been proven in~\cite{huang2014trusscommunity} that $T_k(\mathcal{G})$ remains identical to that of $\mathcal{G}_+$ (i.e., $T_k(\mathcal{G}_+)$). Since $e_0$ does not belong to $T_k(\mathcal{G})$, the minimum time span of all triangles in $T_k(\mathcal{G})$ remains unchanged. Thus, the $k$-span of each edge in $T_k(\mathcal{G})$ also remains unaffected. 
\end{proof}

Intuitively, for values of $k$ greater than the trussness of $e_0$ in the updated graph, both the $k$-truss and the minimum time spans of its triangles in the original graph remain unchanged, so that the $k$-spans of all its edges will not be changed. Based on Theorem~\ref{ins:theorem1}, we only need to update the $k$-span of edges in $T_k(\mathcal{G})$ with $k \leq \trn(e_0, \mathcal{G}_+)$. 

% Any edge $e \in T_{k,\delta}$ must participate in at least $k-2$ triangles whose $\mathcal{K}$-weight is not greater than $\delta$. The $k$-span of an edge $e$ may decrease only if $e$ gains new supporting triangles with $\mathcal{K}$-weight less than $\delta$, or if existing supporting triangles experience a reduction in $\mathcal{K}$-weight from at least $\delta$ to below $\delta$.

% In the following, we explore how the insertion of a temporal edge $e^t = (u,v,t)$ affects the $k$-span for different cases.

\subsection{Filter of $k$-Span}

For each $k$ in the identified range, we aim to bound the values of $k$-span that could be changed. There are two cases that need to be discussed, respectively.

% \begin{lemma} \label{k-weight}
%     Given a temporal graph $\mathcal{G}^t$, an integer $k \geq 2$, and a triangle $\Delta \in \mathcal{G}$, let $\mathcal{K}_o$ and $\mathcal{K}_n$ denote the $k$-weight of $\Delta$ before and after inserting a timestamp,. For an edge $e' \in \mathcal{E}$, the $k$-span of $e'$ may decrease only if $\mathcal{K}_o < \mathcal{K}(e') \leq \mathcal{K}_n$ and one of following conditions holds: (i) $e' \in \Delta$ and $e'$ is connected to $\Delta$ through a series of triangles sharing common edges, each with $k$-weight not exceeding $\mathcal{K}_o$ and exceeding $\mathcal{K}_n$.
% \end{lemma}

% \noindent\textbf{Case 1: Timestamp Insertion.} Let $e^t = (u, v, t)$ be a temporal edge to be inserted into the temporal graph $\mathcal{G}^t$, and define the updated graph as $\mathcal{G}^t_+ = \mathcal{G}^t \cup e$. If the two vertices have already been connected (namely, $e = (u, v) \in \mathcal{E}$) it only needs to add the timestamp $t$ to $\tau(e)$ . This operation does not change the trussness of any edge and can only reduce the minimum time span of existing triangles containing $e$. Moreover, only some of such triangles lead to changes in the $k$-span of edges. The following lemma provides a criterion for identifying these triangles. 

\noindent\textbf{1) Timestamp Insertion.} When inserting a temporal edge $e^t_0$ into $\mathcal{G}^t$, if the underlying static edge $e_0$ already exists (namely, $e_0 \in \mathcal{E}(\mathcal{G})$), the insertion only needs to add the timestamp $t$ to $\tau(e_0)$. This operation does not change the trussness of any edge and only reduce the minimum time span of existing triangles containing $e_0$. Moreover, only some of such triangles lead to decreases in the $k$-span of edges. The following lemma provides a criterion for identifying these triangles. 

\begin{lemma}\label{lem:ctriangle} 
Given a temporal graph $\mathcal{G}^t$, a support threshold $k$ and a temporal edge $e^t_0$ with $e_0 \in \mathcal{E}(\mathcal{G})$ to be inserted, for each triangle $\Delta$ containing $e_0$, the $k$-span of any edge will not be decreased by the decrease of $\mts(\Delta)$, if $\mts(\Delta, \mathcal{G}^t) < \delta_m$ or $\mts(\Delta, \mathcal{G}^t_+) \geq \delta_m$, where $\delta_m = \max\{k\text{-}\kspan(e,\mathcal{G}^t) : e \in \Delta\}$.
\end{lemma}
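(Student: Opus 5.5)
The plan is to compare, for each fixed threshold $\delta$, the old truss $T_{k,\delta}$ (computed in $\mathcal{G}^t$) with the new truss $T^+_{k,\delta}$ (computed in $\mathcal{G}^t_+$). Throughout, $a=\mts(\Delta,\mathcal{G}^t)$ and $b=\mts(\Delta,\mathcal{G}^t_+)$, with $b\le a$.

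Since $e_0\in\mathcal{E}(\mathcal{G})$, the static graph is unchanged, and the only triangle whose minimum time span changes is $\Delta$ (we treat the triangles containing $e_0$ one at a time). Three facts set the stage.
\begin{itemize}
\item The set of $\delta$-triangles changes only for $\delta\in[b,a)$, where $\Delta$ becomes newly valid. Hence $T^+_{k,\delta}=T_{k,\delta}$ outside this range.
\item In general $T_{k,\delta}\subseteq T^+_{k,\delta}$, because $\delta$-supports can only grow.
\item A $k$-span can decrease only if some edge $e$ lies in $T^+_{k,\delta}\setminus T_{k,\delta}$ for some $\delta$. So it suffices to show $T^+_{k,\delta}\subseteq T_{k,\delta}$ for every $\delta$ under either hypothesis.
\end{itemize}
I would first record a preliminary observation. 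If $T^+_{k,\delta}\neq T_{k,\delta}$, then all three edges of $\Delta$ lie in $T^+_{k,\delta}$. Otherwise $\Delta$ contributes to no support inside $T^+_{k,\delta}$, so $T^+_{k,\delta}$ satisfies the old constraints, and maximality of $T_{k,\delta}$ gives $T^+_{k,\delta}\subseteq T_{k,\delta}$. Together with the first fact, we only need to consider $\delta\in[b,a)$ with $\Delta\subseteq T^+_{k,\delta}$.

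\textbf{Case $a<\delta_m$.} Let $e^*\in\Delta$ be an edge with $k\text{-}\kspan(e^*,\mathcal{G}^t)=\delta_m$; by Definition~\ref{def:kspan} and Property~\ref{pro:kdtruss}, $e^*\notin T_{k,a}$. For $\delta<a$, every $\delta$-triangle of $\mathcal{G}^t_+$ is an $a$-triangle of $\mathcal{G}^t$. Indeed, such a triangle is either different from $\Delta$, with unchanged span at most $\delta<a$, or it is $\Delta$ itself, whose old span is $a$. Hence $T^+_{k,\delta}$ is a subgraph in which every edge has old $a$-support at least $k-2$, so $T^+_{k,\delta}\subseteq T_{k,a}$. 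Therefore $e^*\notin T^+_{k,\delta}$, so $\Delta\not\subseteq T^+_{k,\delta}$, and the preliminary observation gives $T^+_{k,\delta}=T_{k,\delta}$.

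\textbf{Case $b\ge\delta_m$.} For $\delta\ge b\ge\delta_m$, every edge of $\Delta$ has old $k$-span at most $\delta$, so $\Delta\subseteq T_{k,\delta}$ by Property~\ref{pro:kdtruss}. Let $U=T^+_{k,\delta}\cup T_{k,\delta}$ and check each edge $f\in U$ under the old counting.
\begin{itemize}
\item If $f\notin\Delta$, then its $\delta$-triangles are the same in both graphs, and its old support in $U$ is at least its support in whichever of the two trusses contains it, hence at least $k-2$.
\item If $f\in\Delta$, then $f\in T_{k,\delta}$, so its old support within $U$ is already at least $k-2$.
\end{itemize}
Thus $U$ is feasible for the old $(k,\delta)$-truss, and maximality gives $U\subseteq T_{k,\delta}$, i.e.\ $T^+_{k,\delta}\subseteq T_{k,\delta}$.

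I expect the main obstacle to be this second case. There, $\Delta$ genuinely becomes a new $\delta$-triangle inside the truss, and one must argue that the extra support it supplies to edges already in $T_{k,\delta}$ cannot pull in any outside edge. The union argument handles this, using the fact that triangle support is only ever counted for edges that belong to the triangle.
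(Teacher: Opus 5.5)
Your proof is correct, but it is organized differently from the paper's.

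\textbf{How the routes differ.} The paper argues locally on the edges of $\Delta$. It splits them by their $k$-spans and asserts that none can drop: a drop would need some triangle's span to cross below $\delta_m$, and otherwise $\Delta$ is never a valid triangle in the relevant truss. It then passes from ``no edge of $\Delta$ changes'' to ``no edge of $\mathcal{G}$ changes'' through an unstated propagation principle, and the second case is only declared analogous.

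You instead compare $T_{k,\delta}$ with $T^+_{k,\delta}$ threshold by threshold, using only maximality of trusses. Any change is then confined to $\delta\in[b,a)$ and hinges on whether $\Delta$ lies wholly inside $T^+_{k,\delta}$. Your inclusion $T^+_{k,\delta}\subseteq T_{k,a}$ for $\delta<a$ is the precise form of the paper's ``crossing below $\delta_m$'' remark. Your union argument supplies the case $\mts(\Delta,\mathcal{G}^t_+)\ge\delta_m$, which the paper omits.

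\textbf{What each buys.} Your route is self-contained: it needs no propagation lemma and no case split over the edges of $\Delta$. It also gives the stronger conclusion that every $T_{k,\delta}$ is unchanged. The paper's edge-centric phrasing mainly buys alignment with the triangle-propagation search that GAS later performs.

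\textbf{Keep the ``one at a time'' proviso explicit.} It is load-bearing, and the paper's proof relies on it tacitly: its ``impossible'' step fails if another triangle through $e_0$ shrinks at the same time. Here is an example.
\begin{itemize}
\item Take $K_4$ on $\{u,v,w_1,w_2\}$ with $k=4$ and $e_0=(u,v)$. Let $\tau_{(u,v)}=\{45\}$, let $(u,w_1),(v,w_1),(w_1,w_2)$ carry $\{50\}$, and let $(u,w_2),(v,w_2)$ carry $\{53\}$.
\item Each edge lies in exactly two triangles, so $T_{4,\delta}$ is either all of $K_4$ or empty. Hence every $4$-span equals the largest triangle span, here $8$.
\item Inserting $t=51$ lowers $\mts(\{u,v,w_1\})$ from $5$ to $1$. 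This triangle satisfies $\mts(\Delta,\mathcal{G}^t)=5<\delta_m=8$. The insertion also lowers $\mts(\{u,v,w_2\})$ from $8$ to $2$.
\item The true new $4$-spans are $3$, whereas ignoring the first triangle's decrease gives $5$.
\end{itemize}
So the lemma holds exactly in the isolated form you prove.
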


\begin{proof}
When $\mts(\Delta, \mathcal{G}^t) < \delta_m$, for any edge $e \in \Delta$ with $k\text{-}\kspan(e) = \mts(\Delta, \mathcal{G}^t)$, its $k$-span cannot decrease, as this would require at least one triangle in $\mathcal{G}$ whose minimum time span drops from at least $\delta_m$ to below $\delta_m$, which is impossible. Moreover, for any $e \in \Delta$ with $k\text{-}\kspan(e) < \mts(\Delta, \mathcal{G}^t)$, its $k$-span remains unchanged, since $\Delta$ cannot be a valid triangle in its corresponding truss either before or after the insertion. Therefore, no edge in $\Delta$ has its $k$-span decreased, and thus no edge in $\mathcal{G}$ is affected by $\Delta$. The case for when $\mts(\Delta, \mathcal{G}^t) \geq \delta_m$ can be proved analogously.
\end{proof}

Thus, only the triangles that both contain $e_0$ and do not satisfy the conditions in Lemma~\ref{lem:ctriangle} need to be considered, which are denoted by $\mathcal{T}^{\Delta}_{e_0}$. For each triangle $\Delta \in \mathcal{T}^{\Delta}_{e_0}$, we compute a corresponding $k$-span interval $[\delta^-_{\Delta}, \delta^+_{\Delta}]$ $(\delta^-_{\Delta} <\delta^+_{\Delta})$ such that the set of edges in $\mathcal{G}$ with $k$-span outside this interval remains unchanged in $\mathcal{G}_+$ due to the change induced by $\Delta$. 

% Note that there cannot exist a $k$-span interval $[\delta^-_{\Delta}, \delta^+_{\Delta}]$ with $\delta^-_{\Delta} = \delta^+_{\Delta}$ as any decrease in the $k$-span of an edge affect at least two distinct $k$-span values, causing the set of edges with those values to differ between $\mathcal{G}$ and $\mathcal{G}_+$

% \begin{lemma}\label{ins:case1:lemma1} 
% Given a temporal graph $\mathcal{G}^t$, an integer $k \geq 2$, for each triangle $\Delta \in \mathcal{T}_{\Delta}$, $\delta_{\Delta}^+ = \max\{k\text{-}\kspan(e,\mathcal{G}^t) | e \in \Delta \}$. 
% \end{lemma}

% To find a proper $\underline{\delta}$, we introduce the concept of $k$-weight.

% \begin{definition}[$k$-Weight]\label{def:kweight} 
% Given a temporal graph $\mathcal{G}^t$ and an integer $k \geq 2$, the k-weight of $\Delta \in \mathcal{G}$ is an integer $\lambda = w(\Delta,k,\mathcal{G})$, such that (i) $\Delta$ is a valid triangle in $(k, \lambda)$-truss and (ii) $\Delta$ is not a valid triangle in $(k,\lambda')$-truss with $\lambda' < \lambda$. When the context is clear, we replace $w(\Delta, k ,\mathcal{G}^t)$ by $kw(\Delta)$.
% \end{definition}

% $w(\Delta, k, \mathcal{G}^t)$, is defined as $w(\Delta,k,\mathcal{G}^t) = \min \{\delta: e \in \Delta, k\text{-}\kspan(e, \mathcal{G}^t) \geq \delta , \mts(\Delta, \mathcal{G}^t) \geq \delta\}$. When the context is clear, we replace $w(\Delta, k ,\mathcal{G}^t)$ by $kk\text{-}\krank(\Delta)$. 

% Any edge $e$ that belongs to a $(k,\delta)$-truss must participate in at least $k - 2$ triangles whose $k$-weight is not greater than $\delta$ and
For each $\Delta \in \mathcal{T}^{\Delta}_{e_0}$, it is evident that $\delta^+_{\Delta}$ cannot exceed the maximum $k$-span of the edges in triangle $\Delta$, i.e., $\delta^+_{\Delta} = \max\{k\text{-}\kspan(e,\mathcal{G}^t) : e \in \Delta \}$. To determine a proper $\delta^-_{\Delta}$ , the key is to identify the minimum $\delta$ such that $\Delta$ can be a valid triangle in a $(k,\delta)$-truss. We define $\hat{\delta} = \max\{\delta^+_{\Delta} : \Delta \in \mathcal{T}^{\Delta}_{e_0}\}$, which represents the maximum $k$-span among all edges potentially affected by the insertion of a temporal edge. Then for any edge $e \in \Delta \in \mathcal{T}^{\Delta}_{e_0}$, the lower bound of its $k$-span is given by $\underline{\delta}(e) = \min \{\delta_a : |\{e, e', e''\}| \geq k - 2, k\text{-}\kspan(e', \mathcal{G}^t)  \leq \hat{\delta},  
k\text{-}\kspan(e'', \mathcal{G}^t) \leq \hat{\delta}, \mts(\Delta, \mathcal{G}^t_+) \leq \delta_a\}$, which implies that $e$ can participate in at most $k - 2$ valid triangles in $(k,\delta_a)$-truss. Furthermore, $\underline{\underline{\delta}}(e) \,{=}\, \min\{\delta_b : |\{e, e', e''\}| \geq k - 2, \underline{\delta}(e') \leq \delta_b,  \underline{\delta}(e'') \leq \delta_b,  \mts(\Delta, \mathcal{G}^t_+) \leq \delta_b\}$ serves as a tighter lower bound for the $k$-span of $e$, as it incorporates an extra constraint regarding $\underline{\delta}$ values. Hence, triangle $\Delta$ only exist in $(k, \delta^*)$-truss where $\delta^* \,{\geq}\, \max\{\mts(\Delta, \mathcal{G}^t_+), \max\{\underline{\underline{\delta}}(e), e \in \Delta\}\}$.

\begin{lemma}\label{ins:case1:lemma4} 
Given a triangle $\Delta \in \mathcal{T}^{\Delta}_{e_0}$ and a support threshold $k$, let $\delta^-_{\Delta} = \max\{\mts(\Delta, \mathcal{G}^t_+), \max\{\underline{\underline{\delta}}(e): e \in \Delta\}\}$ and $\delta^+_{\Delta} = \max\{k\text{-}\kspan(e,\mathcal{G}^t) : e \in \Delta \}$, for any integer $\delta'$ such that $\delta' < \delta^-_{\Delta}$ or $\delta' > \delta^+_{\Delta}$, the set of edges in $\mathcal{G}$ with $k$-span equal to $\delta'$ remains unaffected in $\mathcal{G}_+$ by the change induced by $\Delta$.
\end{lemma}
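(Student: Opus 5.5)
The plan is to argue separately about the two sides of the interval, using the monotonicity of the $(k,\delta)$-trusses in $\delta$ (Property~\ref{pro:kdtruss}). The key point is that a timestamp insertion only decreases minimum time spans. Hence $T_{k,\delta}(\mathcal{G}^t)\subseteq T_{k,\delta}(\mathcal{G}^t_+)$ for every $\delta$, and $k$-spans can only decrease. The set of edges whose $k$-span equals $\delta'$ can then change in only two ways. Either some edge with old $k$-span $\delta'$ drops below $\delta'$, or some edge with old $k$-span larger than $\delta'$ drops exactly to $\delta'$. Restricting to the change induced by $\Delta$ alone, I would therefore show two things. For $\delta'>\delta^+_{\Delta}$, no edge with old $k$-span $\delta'$ moves. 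For $\delta'<\delta^-_{\Delta}$, no edge acquires $k$-span $\delta'$ and no edge with $k$-span $\delta'$ loses it.

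For the upper side, I would argue as in the proof of Lemma~\ref{lem:ctriangle}. Let $\delta'>\delta^+_{\Delta}$. Every edge of $\Delta$ already lies in $T_{k,\delta^+_{\Delta}}\subseteq T_{k,\delta'-1}$. Also $\mts(\Delta,\mathcal{G}^t)\geq\delta_m=\delta^+_{\Delta}$, and if it exceeds $\delta'-1$ then $\Delta\in\mathcal{T}^{\Delta}_{e_0}$ is excluded by the lemma's hypothesis on membership. So $\Delta$ is already a valid triangle, or becomes valid, only inside trusses of threshold at most $\delta^+_{\Delta}$, and its validity status in $T_{k,\delta''}$ for $\delta''\geq\delta'$ is unchanged. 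I would then show that $T_{k,\delta''}$ is unchanged for all $\delta''\geq\delta^+_{\Delta}$. This holds because the peeling process that computes it sees the same set of valid triangles among the same candidate edges: the maximal truss containing the old one is unchanged, since $\Delta$ was either already valid or remains invalid there. Consequently the $k$-spans $\geq\delta'$ are preserved.

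For the lower side, let $\delta'<\delta^-_{\Delta}$. I would show that $\Delta$ cannot be a valid triangle of $T_{k,\delta'}(\mathcal{G}^t_+)$, so the new validity of $\Delta$ does not affect any truss at threshold $\delta'$. This would follow from two facts. If $\delta'<\mts(\Delta,\mathcal{G}^t_+)$, then $\Delta$ is simply not a $\delta'$-triangle. If $\delta'<\underline{\underline{\delta}}(e)$ for some $e\in\Delta$, then $e\notin T_{k,\delta'}(\mathcal{G}^t_+)$. The second fact is the heart of the argument. I would prove the bound chain $k\text{-}\kspan(e,\mathcal{G}^t_+)\geq\underline{\underline{\delta}}(e)$ in two stages.

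\emph{Stage one.} Any edge $e'$ with old $k$-span exceeding $\hat\delta$ is unaffected by the upper-side argument, applied with $\hat\delta$ in place of $\delta^+_{\Delta}$. So in $T_{k,\delta_a}(\mathcal{G}^t_+)$ the edge $e$ needs $k-2$ triangles whose other edges have $k$-span $\leq\hat\delta$ and whose new mts is $\leq\delta_a$, which gives $\underline{\delta}$.

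\emph{Stage two.} I would iterate once, replacing the $\hat\delta$ condition by the already established lower bounds $\underline{\delta}(e'),\underline{\delta}(e'')$, which gives $\underline{\underline{\delta}}$.

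With $\Delta$ invalid at every threshold $<\delta^-_{\Delta}$, the trusses $T_{k,\delta''}$ for $\delta''<\delta^-_{\Delta}$ coincide before and after the change, so the level sets at $\delta'$ are unchanged.

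The main obstacle is making the stage-one claim rigorous. The definition of $\underline{\delta}$ only counts triangles whose edges have old $k$-span $\leq\hat\delta$, yet an edge with larger old $k$-span could in principle drop because of other triangles. I would handle this by restricting to the change induced by $\Delta$ alone, as the statement permits, and inducting over $\delta$ downward so that the upper-side invariance is available when the lower bounds are derived.
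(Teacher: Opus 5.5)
\textbf{Gap in the upper-side step.} The overall plan is sound: use monotonicity in $\delta$, then show that the trusses at thresholds $\geq\delta^+_{\Delta}$ and at thresholds $<\delta^-_{\Delta}$ are untouched. The upper-side step, however, rests on a false claim. Membership in $\mathcal{T}^{\Delta}_{e_0}$ only requires $\mts(\Delta,\mathcal{G}^t)\geq\delta_m$ and $\mts(\Delta,\mathcal{G}^t_+)<\delta_m$. It puts no upper bound on the old span, so nothing excludes $\mts(\Delta,\mathcal{G}^t)>\delta'-1$. For example, nothing prevents $\delta^+_{\Delta}=5$ (all edges of $\Delta$ supported at span $5$ by other triangles), $\mts(\Delta,\mathcal{G}^t)=100$ and $\mts(\Delta,\mathcal{G}^t_+)=2$. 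Then $\Delta\in\mathcal{T}^{\Delta}_{e_0}$, yet for every $\delta''\in[5,99]$ the triangle $\Delta$ switches from invalid to valid. So two of your claims fail exactly in the interesting case: that ``its validity status in $T_{k,\delta''}$ for $\delta''\geq\delta'$ is unchanged'', and that the peeling ``sees the same set of valid triangles''. Your stage one on the lower side invokes this invariance at $\hat\delta$, so it inherits the gap.

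\textbf{Repair.} Use the fact you already wrote down: all three edges of $\Delta$ lie in $T_{k,\delta^+_{\Delta}}(\mathcal{G}^t)$.
Fix $\delta''\geq\delta^+_{\Delta}$. Let $T=T_{k,\delta''}(\mathcal{G}^t)$ and let $T'$ be the corresponding truss after the change of $\mts(\Delta)$. Then $T\subseteq T'$, and all edges of $\Delta$ lie in $T$.
Every edge of $T'\setminus T$ is outside $\Delta$, so its $\delta''$-support inside $T'$ is the same before and after the change.
Hence $T\cup T'$ satisfies the $(k,\delta'')$ condition in $\mathcal{G}^t$, and maximality forces $T'=T$.
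The invariance holds \emph{despite} $\Delta$ becoming newly valid, not because its status is unchanged.

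\textbf{Comparison with the paper.} With this fix, your upper side takes a different route from the paper. The paper argues by contradiction: it chases a decreased $k$-span along adjacent triangles back to an edge of $\Delta$ whose $k$-span exceeds $\delta^+_{\Delta}$. You instead obtain invariance of whole trusses, which is cleaner and is exactly what your lower side needs. Your lower side matches the paper's idea: $\Delta$ is not a valid triangle of the $(k,\delta')$-truss in either graph, so that truss is unchanged. The difference is that you actually justify $k\text{-}\kspan(e,\mathcal{G}^t_+)\geq\underline{\underline{\delta}}(e)$, which the paper dismisses with ``similarly''. One adjustment is needed there: state stage one for every edge whose new $k$-span is at most $\hat\delta$, not only for edges of $\Delta$, since stage two applies it to $e'$ and $e''$.
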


\begin{proof}
    We first consider the case $\delta' > \delta^+_{\Delta}$. For any edge $e \in \mathcal{E}(\mathcal{G})$ with $k\text{-}\kspan(e) > \delta^+_{\Delta}$, suppose that its $k$-span is decreased. Then there must exist an edge $e'$ sharing a triangle with $e$ whose $k$-span also drops from at least $\delta^+_{\Delta}$ to below $\delta^+_{\Delta}$. By applying this recursively through a series of triangles sharing common edges, we eventually reach an edge $e^*$ of $\Delta$ with $k\text{-}\kspan(e^*) > \delta^+_{\Delta}$ and its $k$-span is decreased due to the reduction in the minimum time span of $\Delta$. However, by the definition of $\delta^+_{\Delta}$, no such edge $e^*$ exists in $\Delta$, which leads to a contradiction. Moreover, the reduction in the minimum time span of $\Delta$ cannot cause any edge to increase its $k$-span. Thus, the set of edges with $k$-span equal to $\delta'$ keeps same.
    
    Then we prove the case $\delta' < \delta^+_{\Delta}$. For any edge $e \in \Delta$, since $k\text{-}\kspan(e) <= \hat{\delta}$, it follows that $k\text{-}\kspan(e) \geq \underline{\delta}(e) \geq \underline{\underline{\delta}}(e)$ and $\max\{\mts(\Delta, \mathcal{G}^t), \max\{k\text{-}\kspan(e) : e \in \Delta\}\} \geq \delta^-_{\Delta}$. Therefore, the $(k,\delta')$-truss in $\mathcal{G}$ does not contain $\Delta$. Similarly, since $\delta' < \delta^-_{\Delta}$, $\Delta$ is also excluded from the $(k,\delta')$-truss in $\mathcal{G}_+$. Consequently, the $(k,\delta')$-truss remains unchanged, and thus the set of edges with $k$-span equal to $\delta'$ keeps same.
\end{proof}

Therefore, after inserting a temporal edge $e^t_0$, the affected range of $k$-span values can be identified by $\mathcal{R}(e^t_0) = \bigcup_{\Delta \in \mathcal{T}^{\Delta}_{e_0}} [\delta^-_{\Delta}, \delta^+_{\Delta}]$, which can be simply merged to a set of disjoint intervals.

\begin{figure}[t!] 
    \centering 
    \includegraphics[width=0.6\linewidth]{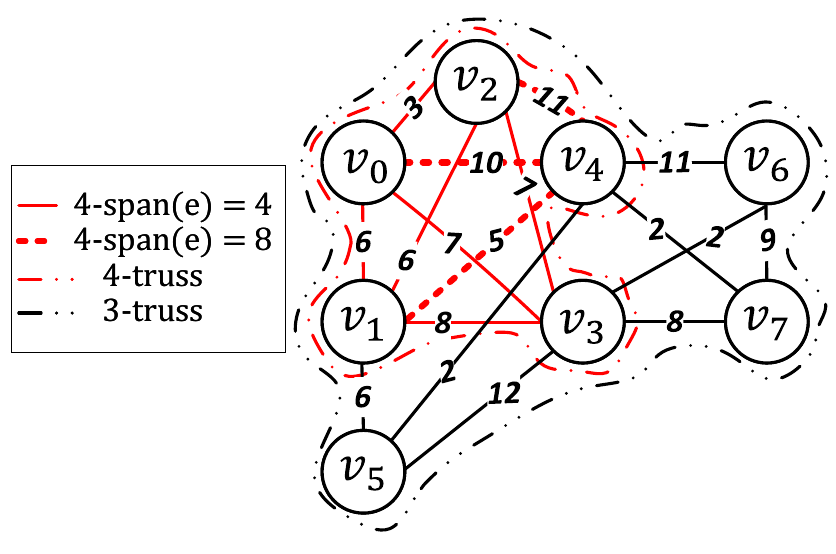} 
    \caption{\textcolor{black}{An example of estimating the k-span upper bound for a new edge.}} 
    \label{fig:tg2} 
    \vspace{-0.5cm}
\end{figure}

\noindent\textbf{2) Edge Insertion.} When a temporal edge $e^t_0$ is inserted into $\mathcal{G}^t$ with $e_0 \notin \mathcal{E}(\mathcal{G})$, the static edge $e_0$ is added to $\mathcal{E}(\mathcal{G})$ and its timestamp set initialized as $\tau(e_0) = {t}$. According to~\cite{huang2014trusscommunity}, the trussness of $e_0$ in updated graph $\mathcal{G}^t_+$ can be estimated within the interval $[k_a, k_b]$, where $k_a = \max\{k: |\{e_0, e', e''\}| \geq k - 2, trn(e', \mathcal{G}) \geq k, trn(e'', \mathcal{G}) \geq k)$, $k_b = \max\{k: |\{e_0,e', e''\}| \geq k - 2, trn(e', \mathcal{G}) \geq k - 1, trn(e'', \mathcal{G}) \geq k - 1)$ and $k_b - k_a \geq 1$. Assuming that the trussness of $e_0$ in $\mathcal{G}$ is $k_a$, for $e \in \mathcal{E}(\mathcal{G}) \cup e_0$ with $trn(e, \mathcal{G}) < trn(e_0, \mathcal{G}_+)$ may have their trussness increased by 1. As a result, new edges may be introduced into $T_k(\mathcal{G})$. Since these edges are not originally in $T_k(\mathcal{G})$ and their $k$-spans are undefined, for each such edge $e$, we estimate an upper bound on its $k$-span, denoted by $\overline{\delta}(e)$, before obtaining the $k$-span interval $[\delta^-_{\Delta}, \delta^+_{\Delta}]$ for each triangle $\Delta$ containing $e_0$.
% , and the resulting temporal graph is denoted as $\mathcal{G}^t_+ = \mathcal{G}^t \cup e^t$. According to~\cite{huang2014trusscommunity}, inserting edge $e$ may alter $T_k(\mathcal{G})$. Specifically, for $k \in [3, \trn(e, \mathcal{G}_{+})]$, edge $e$ must be added to every $T_k(\mathcal{G})$. Similarly, any edge $e'$ with $\trn(e', \mathcal{G}) \,{=}\, k - 1$ and $\trn(e', \mathcal{G}_+) \,{=}\, k$ must be added to $T_k(\mathcal{G})$. As these edges are not originally in $T_k(\mathcal{G})$ and their $k$-spans are undefined, we estimate an upper bound for their $k$-span in $\mathcal{G}^t$ to construct a tighter $k$-span window $[\delta^-, \delta^+]$ for each triangle containing $e$.

First, for $k \in [3, k_a]$, the edge $e_0$ is added into $T_k(\mathcal{G})$, and the upper bound of its $k$-span can be estimated by: $\overline{\delta}(e_0) = \min\{\delta : |\{e_0,e',e''\}| \geq k - 2, k\text{-}\kspan(e',\mathcal{G}^t) \leq \delta, k\text{-}\kspan(e'',\mathcal{G}^t) \leq \delta, \mts(\Delta, \mathcal{G}^t_+) \leq \delta\}$. Next, we estimate an upper bound on the $k$-span of edges whose trussness increases by one after the insertion of $e_0$.

\begin{definition}[$\mathcal{L}_{\mathcal{E}_k}$] 
When a temporal edge $e^t_0$ with $e_0 \notin \mathcal{E}(\mathcal{G})$ is inserted into $\mathcal{G}^t$, for each $k \in [3, trn(e_0, \mathcal{G}_+)]$, we define the set $\mathcal{L}_{\mathcal{E}_k} = \{\mathcal{L}_{\mathcal{E}_k}^1, \mathcal{L}_{\mathcal{E}_k}^2, \cdots, \mathcal{L}_{\mathcal{E}_k}^i\}$, where $0 \leq i$, satisfying: 
\begin{itemize}
\item[(1)] $\forall \, e \in \mathcal{L}_{\mathcal{E}_k}$, $\trn(e, \mathcal{G}_+) = k$ and $\trn(e, \mathcal{G}) = k - 1$; 

% \item[(2)] $\forall \, e_1 \in \mathcal{L}_{\mathcal{E}_k}^j$ and $\forall \, e_2 \in \mathcal{L}_{\mathcal{E}_k}^l$, if $j \neq l$, there does not exist a $\Delta$ in $\mathcal{G}_+$ with $L(\Delta,\mathcal{G}) = k$  containing both $e_1$ and $e_2$. 
\item[(2)] For any $\mathcal{L}_{\mathcal{E}_k}^j$, any two different edges in it can be connected through a series of adjacent $(k - 1)$- triangles formed by edges in $\mathcal{E}(\mathcal{G}) \cup e_0$;

\item[(3)] For $\forall e_j \in \mathcal{L}_{\mathcal{E}_k}^j$, $\forall e_l \in \mathcal{L}_{\mathcal{E}_k}^l$, $j \neq l$, there does not exist any $(k - 1)$-triangle formed by the edges $\mathcal{E}(\mathcal{G}) \cup e_0$ that contains both $e_j$ and $e_l$.
\end{itemize}

\end{definition}

For each $\mathcal{L}_{\mathcal{E}_k}^i$ within $\mathcal{L}_{\mathcal{E}_k}$, we can independently estimate an upper bound $\overline{\delta}(\mathcal{L}_{\mathcal{E}_k}^i)$ on the $k$-span of the internal edges.

\begin{definition}[$\overline{\delta}(\mathcal{L}^i_{\mathcal{E}_k})$] 
For each $k \in [3, trn(e_0, \mathcal{G}_+)]$, let $e'$ be any edge in $\mathcal{L}_{\mathcal{E}_k}^i$ and $e''$ be any edge not in $\mathcal{L}_{\mathcal{E}_k}^i$, the upper bound of the $k$-span for edges in $\mathcal{L}_{\mathcal{E}_k}^i$ is defined as: 
    \begin{equation} 
        \begin{aligned} 
        &\overline{\delta}(\mathcal{L}_{\mathcal{E}_k}^i) = \max\{t_1,t_2\} \\ 
        &t_1 = \max\{ \mts(\Delta, \mathcal{G}^t_+): e' \in \Delta,L(\Delta,\mathcal{G}_+) = k\}\\
        &t_2 = \max\{k\text{-}\kspan(e'', \mathcal{G}^t) : e', e'' \in \Delta , \trn(e'',\mathcal{G}) \geq k\} \\
        \end{aligned} 
    \end{equation} 
\end{definition}

\begin{lemma}\label{ins:case2:lemma1} 
For a given $k \in [3, trn(e_0, \mathcal{G}_+)]$ and any edge $e' \in \mathcal{L}^i_{\mathcal{E}_k}$, $\overline{\delta}(\mathcal{L}^i_{\mathcal{E}_k})$ serves as an upper bound on the $k$-span of $e'$, i.e., $\overline{\delta}(e') = \overline{\delta}(\mathcal{L}^i_{\mathcal{E}_k})$.  \end{lemma}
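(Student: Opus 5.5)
To prove Lemma~\ref{ins:case2:lemma1}, I would show directly that every edge of $\mathcal{L}^i_{\mathcal{E}_k}$ lies in the $(k,\bar\delta)$-truss of $\mathcal{G}^t_+$, where $\bar\delta = \overline{\delta}(\mathcal{L}^i_{\mathcal{E}_k})$. By Definition~\ref{def:kspan}, this immediately gives $k\text{-}\kspan(e',\mathcal{G}^t_+) \leq \bar\delta$ for every $e' \in \mathcal{L}^i_{\mathcal{E}_k}$. Since a $(k,\delta)$-truss is the maximal subgraph in which every edge has $\delta$-support at least $k-2$, it suffices to exhibit one subgraph $H$ of $\mathcal{G}^t_+$ that contains $\mathcal{L}^i_{\mathcal{E}_k}$ and in which every edge has $\bar\delta$-support at least $k-2$. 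Then $H \subseteq T_{k,\bar\delta}(\mathcal{G}^t_+)$ by maximality.

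\textbf{Choice of $H$.} I take $H = T_{k,\bar\delta}(\mathcal{G}^t) \cup \mathcal{L}^i_{\mathcal{E}_k} \cup \{e_0\}$, restricted to edges of $T_k(\mathcal{G}_+)$. The verification splits into two cases.

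\textbf{Edges $e' \in \mathcal{L}^i_{\mathcal{E}_k}$.} Since $\trn(e',\mathcal{G}_+) = k$, the edge $e'$ has at least $k-2$ triangles inside $T_k(\mathcal{G}_+)$, and each of these triangles has level $k$ in $\mathcal{G}_+$.
\begin{itemize}
\item For each such triangle $\Delta$, the term $t_1$ gives $\mts(\Delta,\mathcal{G}^t_+) \leq \bar\delta$.
\item Each other edge $e''$ of $\Delta$ is one of three kinds:
\begin{itemize}
\item an old edge with $\trn(e'',\mathcal{G}) \geq k$, so $t_2$ gives $k\text{-}\kspan(e'',\mathcal{G}^t) \leq \bar\delta$ and hence $e'' \in T_{k,\bar\delta}(\mathcal{G}^t) \subseteq H$;
\item an edge of trussness $k-1$ in $\mathcal{G}$ that rises to $k$, in which case $\Delta$ is a $(k-1)$-triangle connecting $e''$ to $e'$, so $e'' \in \mathcal{L}^i_{\mathcal{E}_k}$ by conditions (2)--(3) of the definition of $\mathcal{L}_{\mathcal{E}_k}$;
\item $e_0$ itself.
\end{itemize}
\end{itemize}
So all $k-2$ supporting triangles lie in $H$ and are $\bar\delta$-triangles.

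\textbf{Edges of $T_{k,\bar\delta}(\mathcal{G}^t)$.} Their original $\bar\delta$-triangles remain in $H$. Minimum time spans can only decrease under insertion, so their support does not drop.

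\textbf{The edge $e_0$.} This case needs a bound like $\overline{\delta}(e_0) \leq \bar\delta$, or else must be avoided. I expect it to be the main obstacle. My first attempt is to drop $e_0$ from $H$ whenever it is not needed. If some triangle supporting $e' \in \mathcal{L}^i_{\mathcal{E}_k}$ contains $e_0$, that triangle is itself of level $k$ containing $e'$, so $t_1$ already bounds its span. For $e_0$'s own support, I would argue that the triangles of $e_0$ whose other edges lie in $H$ are $\bar\delta$-triangles: they have level $k$, so either they contain an $\mathcal{L}$-edge and are bounded by $t_1$, or their other edges are old trussness-$\geq k$ edges. The second subcase needs care. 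If the support of $e_0$ within $H$ falls short, I would fall back on proving the statement for the $(k,\max\{\bar\delta,\overline{\delta}(e_0)\})$-truss, and check that the paper's usage only needs this weaker form.
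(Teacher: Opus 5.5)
Your witness subgraph $H$, combined with maximality, is a rigorous version of the paper's argument. The paper argues exactly as your first two bullets do and then simply ``recurses''. However, the $e_0$ case, which you leave open and propose to replace by a weaker bound, is a genuine gap. It cannot be closed under your reading, in which $e_0$ is neither in $\mathcal{L}^i_{\mathcal{E}_k}$ nor counted in $t_2$, because under that reading the lemma is false. Here is a counterexample with $k=4$. Insert $e_0=(a,b)$ at time $0$ so that its only triangles are $\{a,b,c\}$ and $\{a,b,d\}$. Give $(a,c)$ and $(b,c)$ timestamp $0$, and put each into exactly one old triangle, $\{a,c,y\}$ resp.\ $\{b,c,z\}$. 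Each of the other two edges of these triangles lies in its own $K_4$ with all timestamps $0$. Put $(a,d)$ and $(b,d)$ each into its own $K_4$ with all timestamps $100$, and add no further edges. Then $(a,c)$ and $(b,c)$ rise from trussness $3$ to $4$, and $t_1=t_2=0$ for them. But $\mts(\{a,b,d\},\mathcal{G}^t_+)=100$, so for $\delta<100$ the edge $e_0$ has $\delta$-support $1$. It is therefore peeled, and $(a,c)$ loses $\{a,b,c\}$, giving $4\text{-}\kspan((a,c),\mathcal{G}^t_+)=100>0$. Your fallback is vacuous here: $k_a=3<k$, and the set defining $\overline{\delta}(e_0)$ is empty, since only one triangle of $e_0$ has both other edges in $T_4(\mathcal{G})$.

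The missing idea is the paper's convention for $e_0$, under which $e_0$ is never a third kind of edge. The paper sets $\trn(e_0,\mathcal{G}):=k_a$. For $k\le k_a$ it also sets $k\text{-}\kspan(e_0,\mathcal{G}^t):=\overline{\delta}(e_0)$ before computing $\overline{\delta}(\mathcal{L}^i_{\mathcal{E}_k})$. Since $k_a$ and $k_b$ come from the same nonincreasing triangle count, evaluated at $k$ resp.\ $k-1$, we have $k\le \trn(e_0,\mathcal{G}_+)\le k_b\le k_a+1$. (The paper's ``$k_b-k_a\geq 1$'' should read $\leq 1$.) So only two cases arise.

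\emph{Case $k=k_a+1$.} Then $\trn(e_0,\mathcal{G}_+)=k$, so $e_0$ satisfies condition (1). Any level-$k$ triangle that $e_0$ shares with an edge of $\mathcal{L}^i_{\mathcal{E}_k}$ is a $(k-1)$-triangle of $\mathcal{E}(\mathcal{G})\cup e_0$. By (3), $e_0\in\mathcal{L}^i_{\mathcal{E}_k}$, and your first case covers it through $t_1$ (in the example above, $t_1=100$).

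\emph{Case $k\le k_a$.} Here $e_0$ counts as an edge of trussness at least $k$, so $t_2\ge\overline{\delta}(e_0)$ whenever $e_0$ shares a triangle with an edge of $\mathcal{L}^i_{\mathcal{E}_k}$. The at least $k-2$ triangles realizing $\overline{\delta}(e_0)$ have their other edges in $T_{k,\overline{\delta}(e_0)}(\mathcal{G}^t)\subseteq T_{k,\bar\delta}(\mathcal{G}^t)\subseteq H$ and spans at most $\bar\delta$. That is exactly the support $e_0$ needs in $H$. If $e_0$ shares no triangle with $\mathcal{L}^i_{\mathcal{E}_k}$, drop it from $H$.

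The paper's worked example follows both conventions: $(v_3,v_4)$ is listed among the $t_2$-edges of $\mathcal{L}^1_{\mathcal{E}_4}$ and belongs to $\mathcal{L}^1_{\mathcal{E}_5}$. With this convention your $H$ closes and no weakening is needed. The paper's own proof leaves the convention implicit and lists only your first two cases.
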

% it holds that $k\text{-}\kspan(e', \mathcal{G}^t_+) \leq \overline{\delta}(\mathcal{L}^i_{\mathcal{E}_k})$.
\begin{proof} 
For $\forall\, e' \,{\in}\, \mathcal{L}_{\mathcal{E}_k}^i$, by the definition of $t_1$, the edge $e'$ must participate in at least $k \,{-}\, 2$ triangles whose minimum time spans are less than $t_1$ and for which $L(\Delta,\mathcal{G}_+) \,{=}\, k$. Consider any $\Delta$ containing $e'$ and let $e^*$ denote any other edge in $\Delta$. There are two possible cases. On one hand, if $e^* \,{\notin}\, \mathcal{L}_{\mathcal{E}_k}^i$ and $\trn(e^*,\mathcal{G}) \geq k$, then we have $k\text{-}\kspan(e^*,\mathcal{G}^t_+) \,{\leq}\, t_2$. On the other hand, if $e^* \,{\in}\, \mathcal{L}_{\mathcal{E}_k}^i$, then $e^*$ satisfies the same triangle support condition as $e'$. By recursively applying this reasoning, it follows that $e'$ participates in at least $k-2$ valid triangles within $(k,\max\{t_1, t_2\})$-truss. Thus, $k\text{-}\kspan(e',\mathcal{G}^t_+) \,{\leq}\, \overline{\delta}(\mathcal{L}^i_{\mathcal{E}_k})$ and $\overline{\delta}(\mathcal{L}^i_{\mathcal{E}_k})$ is an upper bound for the $k$-span of $e'$ .
\end{proof}

% Specifically, maintaining an inserted temporal edge $e^t$ can be reduced to processing the associated triangle set $\mathcal{S}_{\Delta}$. During this process, a series of overlapping $k$-span time windows are identified. By merging these overlapping windows, we obtain a set of non-overlapping $k$-span windows, denoted as $\text{Win}(e^t)$. Edges whose $k$-span lies outside this window set remain unaffected.
% No triangle $\Delta$ formed by edges in $\mathcal{E}(\mathcal{G}) \cup (v_3, v_4)$ with $L(\Delta) = 3$ contains edges from both sets, ensuring that their trussness is maintained independently. 

\begin{example}
     When inserting the temporal edge $(v_3, v_4, 2)$ into the temporal graph in Fig~\ref{fig:tg2}, accroding to the definition of $k_a$ and $k_b$, the trussness of edge $(v_3, v_4)$ is estimated within the range $[4, 5]$. For $k = 4$, $\overline{\delta}((v_3,v_4))$ is computed as $8$, supported by triangles $\{v_0,v_3,v_4\}$ and $\{v_1,v_3,v_4\}$. Moreover, the insertion triggers an increase in trussness to $4$ for two disjoint edge sets: $\mathcal{L}_{\mathcal{E}_4}^1  \,{=}\, \{(v_1,v_5),(v_3,v_5), (v_4,v_5)\}$ and $\mathcal{L}_{\mathcal{E}_4}^2  \,{=}\, \{(v_3,v_6),\allowbreak(v_3,v_7),(v_4,v_6), (v_4,v_7), (v_6, v_7)\}$. For $\mathcal{L}_{\mathcal{E}_k}^1$, the triangles satisfying the conditions in $t_1$ include $\{\{ v_1, v_3, v_5\}, \{ v_1, v_4, v_5\}, \{ v_3, v_4, v_5\}\}$, and $t_1$ is computed as $10$. Then, the edges satisfying the conditions in $t_2$ are $\{(v_1,v_{3}), (v_1,v_{4}), (v_3, v_4)\}$, and $t_1$ is computed as $8$. Thus, $\overline{\delta}(\mathcal{L}^1_{\mathcal{E}_4}) = \max\{10,8\} = 10$. Similarly, $\overline{\delta}(\mathcal{L}^2_{\mathcal{E}_4}) {=} \max\{8,9\} = 9$. For $k = 5$, the edges between the vertices $v_0, v_1, v_2, v_3, v_{4}$ pairwise form $\mathcal{L}_{\mathcal{E}_5}^1$, and $\overline{\delta}(\mathcal{L}_{\mathcal{E}_5}^1)$ is computed as $8$.   
     
     % = \max\{\mts(), \mts(),  \mts(), \mts()\} {=} \max\{6, 6, 4{,} 10\} {=} \allowbreak 10$ and $t_2 = \max\{\mathcal{K}((v_1,v_{3})), \mathcal{K}((v_1,v_{4})),  \mathcal{K}((v_{1}, v_{5}))\} = \max\{8, 4, 8\} {=} 8$, thus $\overline{\mathcal{K}}(\mathcal{L}^1_{\mathcal{E}_4}) {=} \max\{10,8,8\} {=} 10$. Similarly, $\overline{\mathcal{K}}(\mathcal{L}^2_{\mathcal{E}_4}) {=} \max\{8,9,-1\} {=} 9$. When $k {=} 5$, the edges between  the vertices $v_0{,} v_1{,} v_2{,} v_3{,} v_{4}$ pairwise form $\mathcal{L}_{\mathcal{E}_5}^1$, and we compute $t_1 = 8$, $t_2 = -1$, hence $\overline{\mathcal{K}}(\mathcal{L}_{\mathcal{E}_5}^1) = \max\{8, -1, -1\} = 8$. Note that we set $t_1$ and $t_2$ to $-1$ initially. 
\end{example}

For all newly inserted edges in $T_k(\mathcal{G})$, we assign their upper bounds of $k$-span to be their $k$-span values in the original temporal graph. Accordingly, the edge insertion can be equivalently transformed into a timestamp insertion maintenance problem, namely, reducing the minimum time span of certain triangles from $\infty$ to $\mts(\Delta, \mathcal{G}^t_+)$.

\begin{algorithm}[t!] 
    \SetAlgoNoEnd \SetKwInput{Input}{Input} 
    \SetKwInput{Output}{Output} 
    \SetAlgoLined 
    \SetAlgoVlined 
    \color{black}
    \Input{a temporal graph $\mathcal{G}^t$, the inserted temporal edge $e^t_0$, a support threshold $k$ and a $k$-span interval $[\delta^-, \delta^+]$} 
    \Output{The potentially affected edge $T^s_{k,\delta^+}$ and the local $\delta$-triangle list $\mathcal{S}^{\Delta}$} 
    \caption{Get Affected Subgraph (GAS)}\label{alg:localsubgraph} 
    $\mathcal{S}^{\Delta} \leftarrow \varnothing$; $Q \leftarrow \varnothing$; $visited\leftarrow \varnothing$\; 
    \If{\textnormal{$k\text{-}\kspan(e_0) \leq \delta^+$ and $k\text{-}\kspan(e_0) \geq \delta^-$}}{ 
        $Q$.push($e_0$); $visited\leftarrow visited\ \cup e_0$; 
    } 
    \For{\textnormal{each $\Delta$ containing $e_0$ in $\mathcal{G} \cup e_0$}}{
        \If{$k\text{-}\krank(\Delta) > \delta^+$}{continue;}  
        \For{\textnormal{each other $e$ of $\Delta$}}{
            \If{\textnormal{$k\text{-}\kspan(e) \leq \delta^+$ and $k\text{-}\kspan(e) \geq \delta^-$}}{ 
                $Q$.push($e$); $visited\leftarrow visited\ \cup e$; 
            } 
        } 
    } 
    \While{$Q \neq \varnothing$}{ 
        $e \leftarrow Q$.pop(); $\sup[e]\leftarrow 0$\; 
        \For{\textnormal{each $\Delta$ containing $e$ in $\mathcal{G} \cup e_0$}}{ 
            \If{$k\text{-}\krank(\Delta) > \delta^+$}{continue;} 
            $\mathcal{S}^{\Delta}_{\mts(\Delta)}\leftarrow\mathcal{S}^{\Delta}_{\mts(\Delta)}\cup\Delta$\; 
            $\sup[e]\leftarrow \sup[e] + 1$\; 
            \For{\textnormal{each other $e'$ of $\Delta$}}{ 
                \If{$e''\notin visited$}{ 
                    \If{$k\text{-}\kspan(e') \geq \delta^-$}{ 
                        $Q$.push($e'$); 
                    } \Else{ 
                        $s[e'] \leftarrow \infty$; 
                    } 
                    $visited\leftarrow visited\ \cup e'$; 
                } 
            } 
        } 
    } 
    \Return $T^s_{k,\delta^+}$ and $\mathcal{S}^{\Delta}$\; 
\end{algorithm}

\begin{algorithm}[t!]
    \SetAlgoNoEnd
    \SetKwInput{Input}{Input}  
    \SetKwComment{Comment}{// }{}
    % Set the Input
    \SetKwInput{Output}{Output} % set the Output
    % \SetAlgoLined
    % \SetAlgoVlined
    \color{black}
    \Input{a temporal graph $\mathcal{G}^t$ and the inserted temporal edge $e^t_0$}
    \Output{updated $k$-spans of edges for each $k$}
    \caption{Filter-and-Verification}\label{alg:insertmaintenance}
    % $\mathcal{AFF} \leftarrow \varnothing$\Comment*[r]{record updated $k$-span} 
    $\mathcal{G}^t_+ \leftarrow \mathcal{G}^t \cup e^t_0$\;
    \For{\textnormal{$k \leftarrow \trn(e_0, \mathcal{G}_+)$ until $3$}}{
        \If{\textnormal{$e_0 \notin \mathcal{E}$}}{
            % $k\text{-}\kspan(e) \leftarrow \overline{\delta}(e,k,\mathcal{G}^t_+)$\;
             identify the edge set $T_k(\mathcal{G}_+) \setminus T_k(\mathcal{G})$ by truss maintenance\;
            \For{\textnormal{each edge $e \in T_k(\mathcal{G}_+) \setminus T_k(\mathcal{G})$}}{
                $k\text{-}\kspan(e) \leftarrow \overline{\delta}(e)$; 
            } 
        }
        compute $\mathcal{T}^{\Delta}_{e_0}$ using the proposed Lemma~\ref{lem:ctriangle}\;
        % \If{$k = 3$}{
        %     update the $k$-span of the edges in $\mathcal{T}^{\Delta}_{e_0}$\;
        %     continue\;
        % }
        \For{\textnormal{each $\Delta \in \mathcal{T}^{\Delta}_{e_0}$}}{
            compute $[\delta^-_{\Delta},\delta^+_{\Delta}]$\;
             $\mathcal{R}(e^t_0) \leftarrow \mathcal{R}(e^t_0) \cup [\delta^-_{\Delta}, \delta^+_{\Delta}]$\;
            % \If{$\delta^+_{\Delta} \neq \delta^-_{\Delta}$}{
               
            % }
        } 
        merge overlapping intervals in $\mathcal{R}(e^t_0)$\;
        \For{\textnormal{each $[\delta^-, \delta^+] \in \mathcal{R}(e^t)$}}{
            $T^s_{k,\delta^+},\mathcal{S}^{\Delta} \leftarrow \text{GAS}(\mathcal{G}^t, e^t_0, k, [\delta^-, \delta^+])$\;
            \For{\textnormal{$\delta \leftarrow \delta^+$ until $\delta^-$}}{
                $T^s_{k,\delta - 1}, R \leftarrow$ \dch{$T^s_{k,\delta}, \mathcal{S}^{\Delta}$}\;
                \For{\textnormal{each $e \in R$}}{
                    \If{$k\text{-}\kspan(e) \neq \delta$}{
                        update $k\text{-}\kspan(e)$ to $ \delta$\;
                    }
                }
                % $\mathcal{G}_{\textnormal{a}} \leftarrow \mathcal{G}_{\textnormal{a}}'$\;
            } 
            
        }
    }
    \Return all updated $k$-span values of the edges in $\mathcal{E}(\mathcal{G})$\;
\end{algorithm}

\subsection{Filter of Edge}
% Similar to the weight of triangle defined in ~\cite{huang2014trusscommunity}, which is the maximum k value for a triangle to belong to a k-truss,for each $k$-span window $[\delta^-, \delta^+]$ in $\mathcal{R}(e^t)$, 
For each $k$-span interval $[\delta^-, \delta^+]$ in $\mathcal{R}(e^t_0)$, the potentially affected edges with $k$-span within this interval can be further filtered since our $k$-span maintenance exhibits a triangle propagation property similar to that of truss maintenance~\cite{huang2014trusscommunity}. We define the $k$-rank of triangle $\Delta$, denoted by $k\text{-}\krank(\Delta)$, as the minimum $\delta$ value such that triangle can be included in a $(k,\delta)$-truss for the given $k$. Based on this definition, after inserting a temporal edge $e^t_0$, a potentially affected edge $e$ needs to satisfy two conditions: 1) $k\text{-}\kspan(e) \in [\delta^-, \delta^+] \in \mathcal{R}(e^t_0)$ and 2) $e$, $e_0$ and another $e'$ form a new triangle with $k\text{-}\krank(\Delta) \in [\delta^-, \delta^+]$ or $e$ is connected with $e_0$ through a series of adjacent triangles, each $\Delta$ of which has $k\text{-}\krank(\Delta) \in [\delta^-, \delta^+]$. 

Due to the connectivity of the triangles in the second condition, for each $k$-span interval $[\delta^-, \delta^+]$, we adopt a local search method to identify all potential affected edges, which form a subset of the $T_{k,\delta^+}$ of $\mathcal{G}$, denoted as $T_{k,\delta^+}^s$. During this process, we construct the $\delta$-triangle list $\mathcal{S}^{\Delta}$ of $T_{k,\delta^+}^s$ and count the potentially valid triangles associated with each candidate edge. Algorithm~\ref{alg:localsubgraph} provides the pseudocode for this procedure. For a given $k$ and a $k$-span interval $[\delta^-, \delta^+]$, we use breadth-first search to locate all edges with $k$-span values within $[\delta^-, \delta^+]$ that are connected to edge $e_0$ via a sequence of triangles whose $k$-ranks are not greater than $\delta^+$. (Line 2-23) For each such edge $e$, we consider all triangles with $k$-rank not exceeding $\delta^+$ as its potentially valid triangles, set the number of these triangles as the support of $e$, and add them to $\delta$-triangle list $\mathcal{S}^{\Delta}$ (lines 10-23). If an edge with a $k$-span less than $\delta^-$ is encountered, the search terminates for that branch, and its support is set to $\infty$ (lines 19-22).

\subsection{Verification of Edge $k$-Span}
Algorithm~\ref{alg:insertmaintenance} provides the pseudocode of the complete filter-and-verification Procedure. The process iterates over $k$ from $3$ to $\trn(e_0, \mathcal{G}_+)]$ (Line 1). For each possible $k$, the $k$-span filtering process considers both timestamp insertion and edge insertion. In the case of edge insertion, we employ a modified version of the trussness maintenance algorithm~\cite{huang2014trusscommunity} to identify edges that need to be inserted into $T_k(\mathcal{G})$ and assign their initial $k$-span values, thereby transforming the problem into the timestamps insertion case (lines 3–6). For the timestamp insertion case, we compute the filtered $k$-span intervals $\mathcal{R}(e^t_0)$ according to the proposed lemma(lines 7–11). For each $k$-span interval $[\delta^-, \delta^+]$ in $\mathcal{R}(e^t_0)$, Algorithm~\ref{alg:localsubgraph} is performed to extract the affected local subgraph $T^s_{k,\delta^+}$, and on $T^s_{k,\delta^+}$, we perform the decomposition function \dch{} in DBA from $\delta^+$ to $\delta^-$ to accurately verify the $k$-span values of all edges in $T^s_{k,\delta^+}$. (lines 12–18). 

With all updated $k$-span values of the edges, we can update TC-Index or DC-Index by changing the positions of the edges. Since the index updating is straightforward, the details are omitted.

}

\section{Experiments}\label{sec:expr}
In this section, we conduct extensive experiments to evaluate the proposed approaches on a Linux machine with Intel Xeon 3.5GHz CPU and 128GB RAM. All algorithms are implemented in C++ 11 and compiled by g++ with O3 optimization.

\subsection{Dataset and Empirical Study}

)We evaluate our proposed approaches on eight publicly available real-world temporal graphs from~\cite{lesko2014snap} and~\cite{kuneg2013konect}. Table~\ref{tab:dataset} summarizes their statistics, where $|\mathcal{V}|$ denotes the number of vertices, $|\mathcal{E}|$ the number of static edges, $n$ the number of distinct timestamps, $\overline{|\tau|}$ the average number of timestamps per edge, $|\Delta|$ the number of triangles, $k_{max}$ the maximum trussness of edges, and $\delta_{max}$ the maximum minimum time span of triangles. The datasets vary widely in scale, with $|\mathcal{E}|$ ranging from 16K to 36M. However, regardless of graph size, $k_{max}$ remains within a narrow range due to the strict static cohesion of $k$-truss subgraphs.

There are two observations that support our motivation of studying $(k,\delta)$-truss on the datasets. Firstly, by comparing $n$ and $\delta_{max}$, we can see there are indeed triangles with minimum time span as long as the duration of whole graph, which are not cohesive in terms of time. Moreover, we conduct an empirical study on these datasets. Fig~\ref{fig:estudy} illustrates the distribution of triangle counts on minimum time span for four of them. We can see that, although the triangles with longer minimum time span are less, the distribution does not have a typical long tail. In contrast, the counts are not dropping that fast. Thus, $\delta$ is effective to constrain the structure of $(k,\delta)$-truss.
\begin{figure}[t!]
    \centering
    \subfloat[Mathoverflow.]{
        \includegraphics[width=0.43\linewidth]{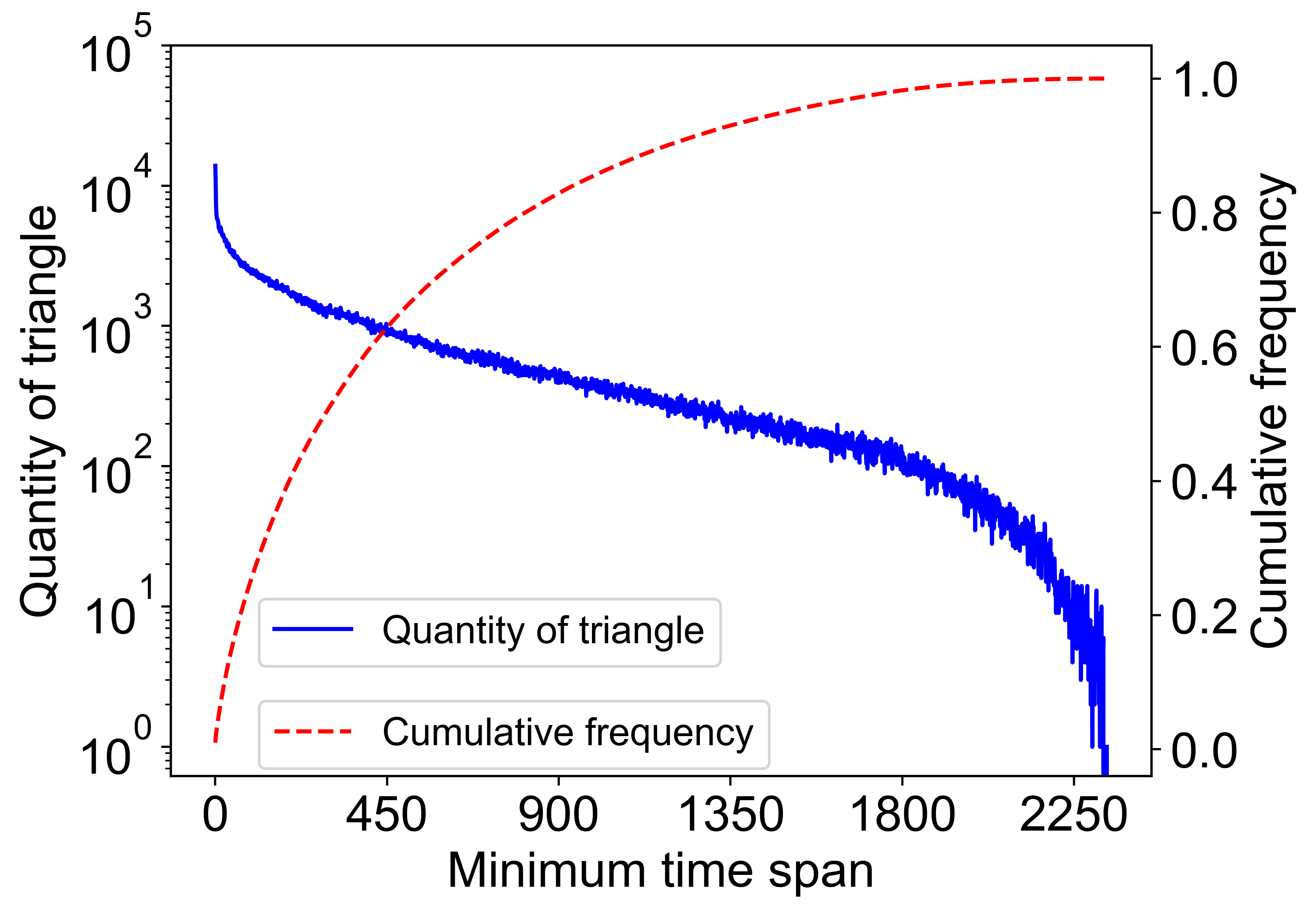}}
        \label{fig:triangleofmath}
    \subfloat[Superuser.]{
        \includegraphics[width=0.43\linewidth]{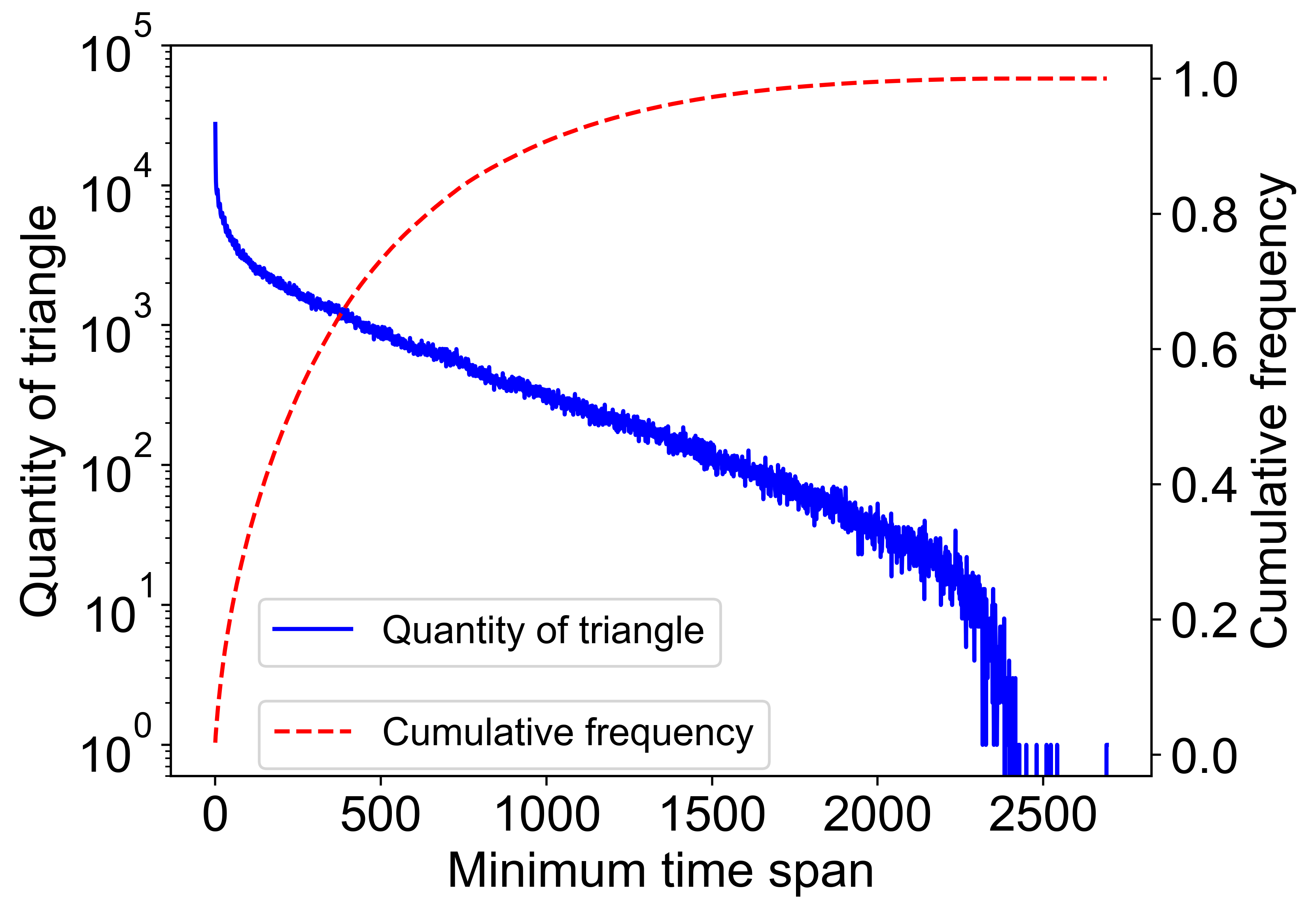}}
    \\
    \subfloat[Wikitalk.]{
        \includegraphics[width=0.43\linewidth]{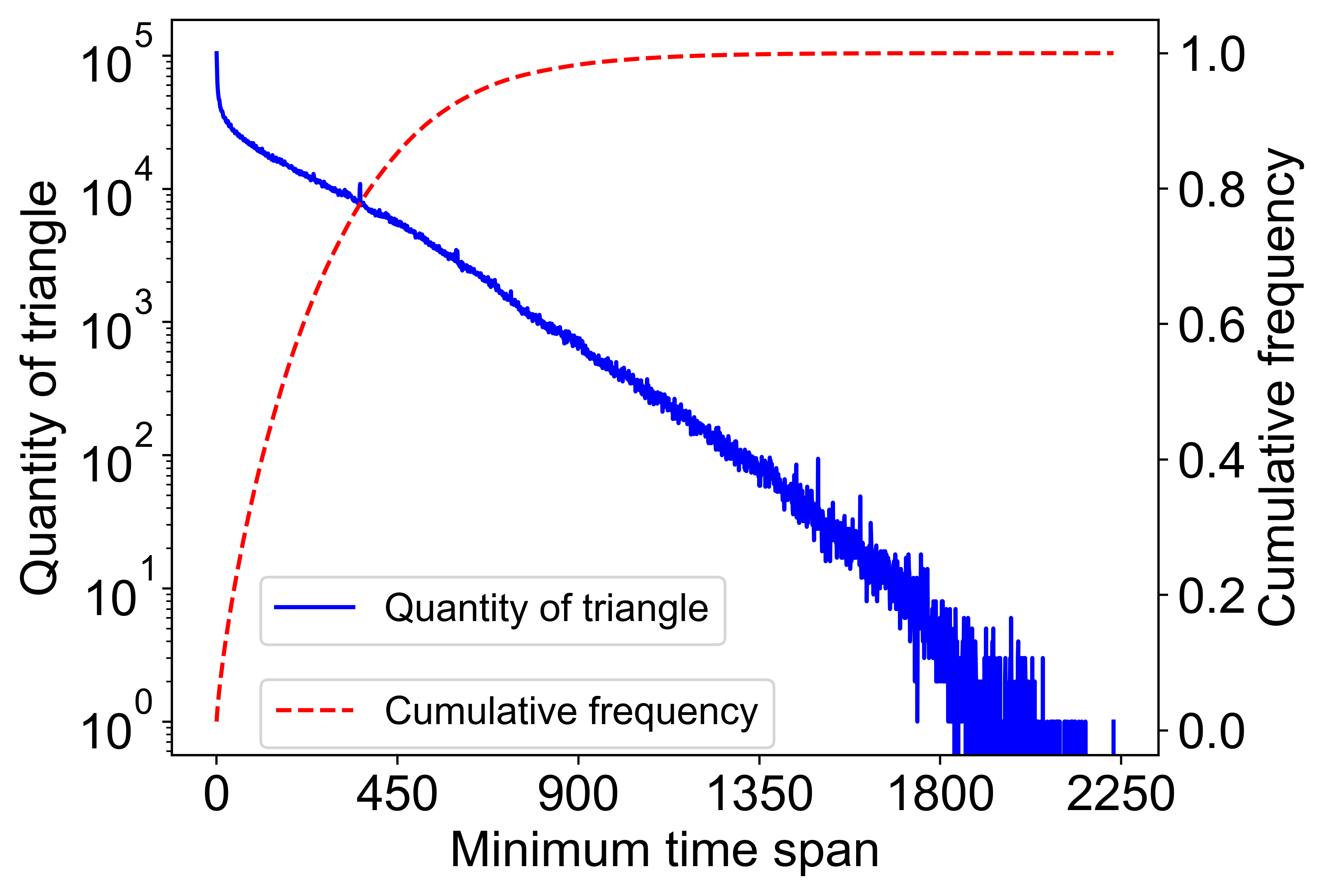}}
    \subfloat[Youtube.]{
        \includegraphics[width=0.43\linewidth]{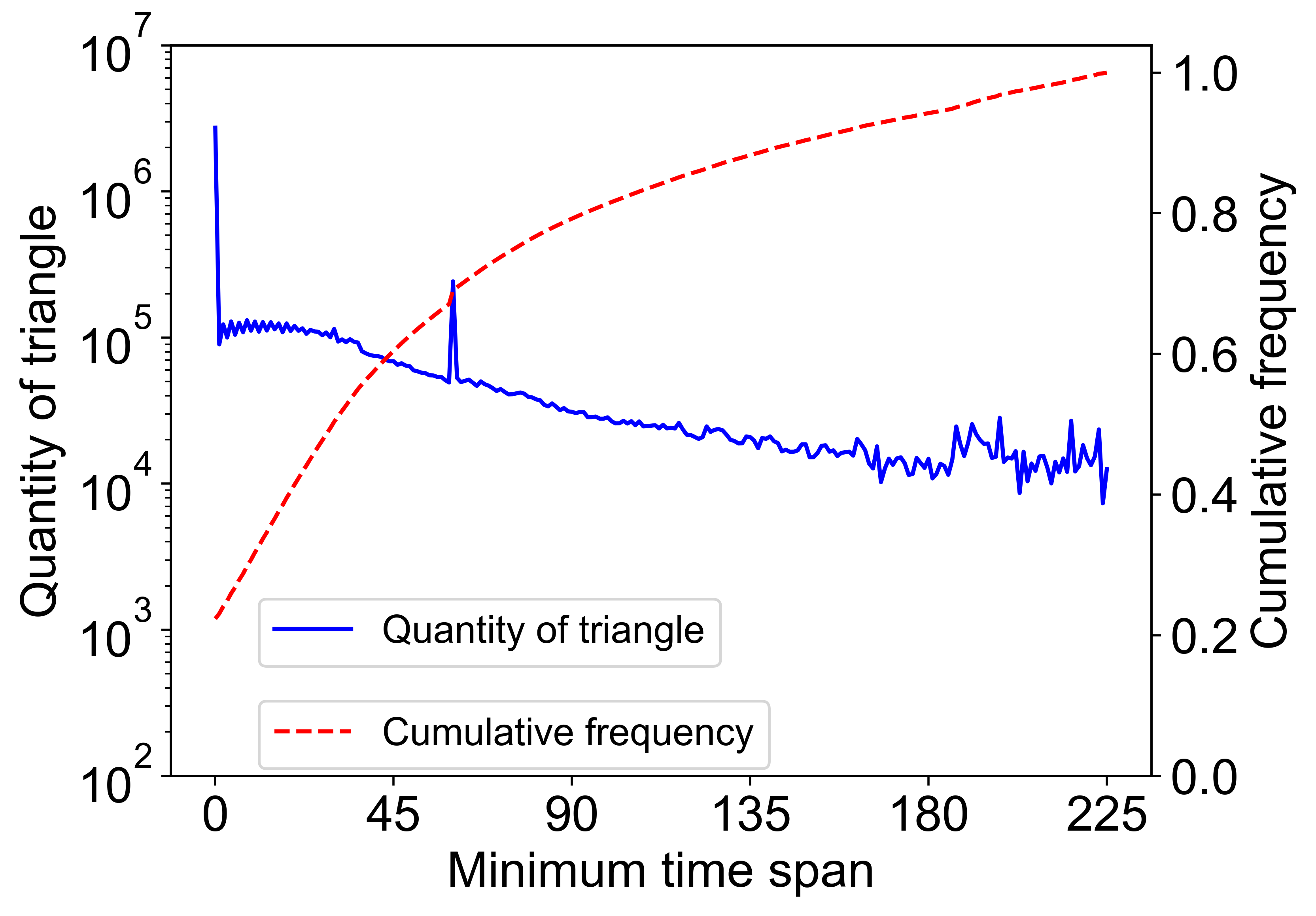}}
    \caption{Distribution of triangle counts on minimum time span.}\label{fig:estudy}
\end{figure}

\begin{figure}[t!]
    \centering
    \includegraphics[width=0.8\linewidth]{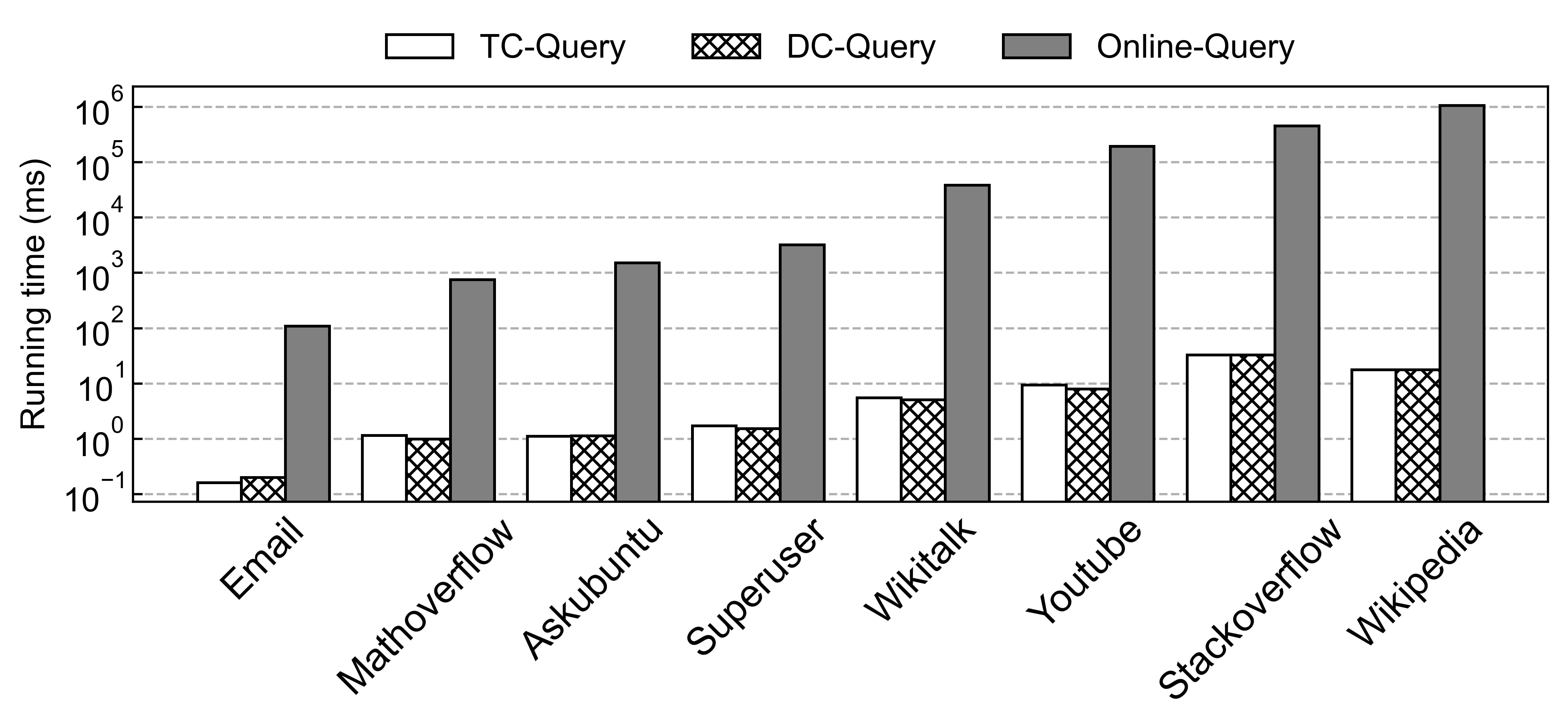}
    \caption{Response time of query processing on different datasets, in which $k = 30\% \cdot  k_{max}$ and $\delta = 60\% \cdot  \delta_{max}$.}
    \label{fig:queryoverallcp}
\end{figure}

\begin{figure*}
    \begin{minipage}[c]{0.73\linewidth}
        \centering
        \captionof{table}{Statistics of datasets.}\label{tab:dataset}
\begin{tabular}{|l|r|r|r|r|r|r|r|}
\hline
\text{Dataset} & $|\mathcal{V}|$ & $|\mathcal{E}|$ & $n$  & $\overline{|\tau|}$ & $|\Delta|$ & $k_{max}$ & $\delta_{max}$\\
\hline\hline
\texttt{Email} & 0.9K & 16K & 803 & 11.5 & 105K & 23 & 800  \\
\hline
\texttt{Mathoverflow} & 24K & 187K & 2450 & 1.6 & 1.4M & 42 & 2336 \\
\hline
\texttt{Askubuntu} & 159K & 455K & 2613 & 1.2 & 680K & 26 & 2040 \\
\hline
\texttt{Superuser} & 194K & 714K & 2773 &  1.2 & 1.5M & 35 & 2692 \\
\hline
\texttt{Wikitalk} & 1.1M & 2.7M & 2320 &  1.4 & 8.1M & 49 & 2231 \\
\hline
\texttt{Youtube} &322K  & 9.3M & 225 &1.0 & 12M & 33 & 225 \\
\hline
\texttt{Stackoverflow} & 2.6M & 28.1M & 2774 &  1.2 & 114.2M & 79  & 2768 \\
\hline
\texttt{Wikipedia} & 1.8M & 36.5M & 2235 & 1.1 & 126.6M & 59 & 2231 \\
\hline
\end{tabular}\vspace{0.1cm}
        \includegraphics[width=0.9 \linewidth]{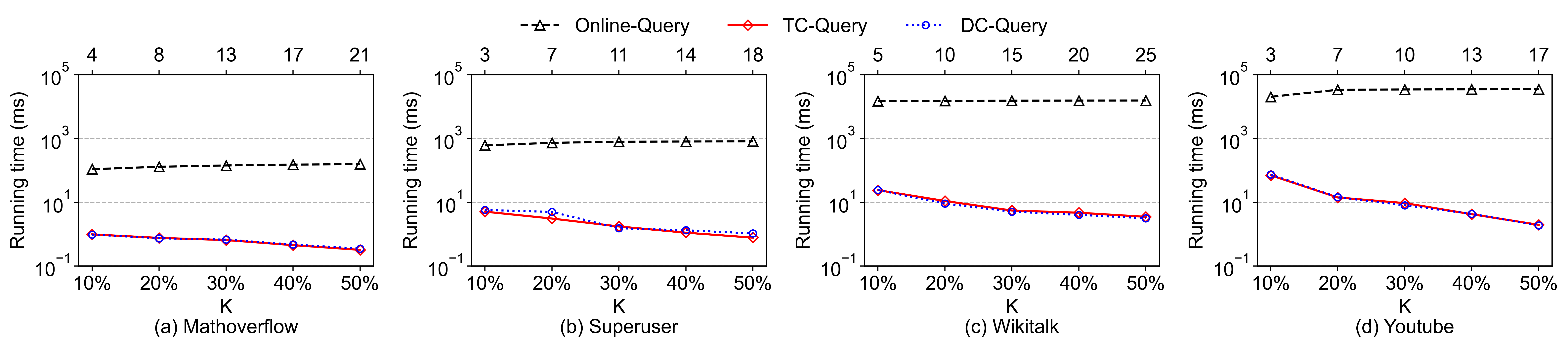}
        \captionof{figure}{Response time of query processing with varying $k$ and $\delta = 60\% \delta_{max}$.}\label{fig:queryvarykcp}
        \includegraphics[width= 0.9\linewidth]{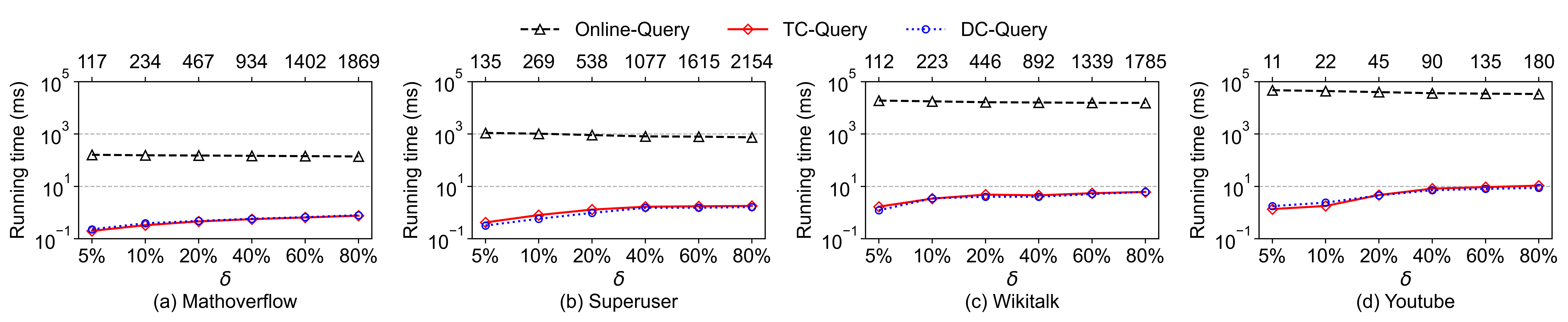}
        \captionof{figure}{Response time of query processing with varying $\delta$ and $k = 30\% k_{max}$.}\label{fig:queryvarydeltacp}
    \end{minipage}\hfill
    \begin{minipage}[c]{0.24\linewidth}
        \centering
        \includegraphics[width=0.9\linewidth]{warm.png}\label{fig:warmwikipedia}
        \captionof{figure}{Response time of DC-Query with varying $k$ and $\delta$.}\label{fig:heat}
    \end{minipage}
    \vspace{-0.5cm}
\end{figure*}

\subsection{Query Processing}

To the best of our knowledge, no existing work investigates the $(k,\delta)$-truss query. Thus, we use the proposed query processing algorithms, Online-Query, TC-Query, and DC-Query to evaluate efficiency. Given $k$ and $\delta$, we record the average running time of 100 times of repeated execution.

Since each dataset has different value ranges of $k$ and $\delta$, we adopt relative parameters and set $k = 30\% k_{max}$ and $\delta = 60\% \delta_{max}$ respectively in default for a given dataset. The running time of three algorithms under the default parameters is reported in Fig~\ref{fig:queryoverallcp}. TC-Query and DC-Query have similar query efficiency, and are 2$\sim$4 orders of magnitude faster than Online-Query. Even for the two largest graphs with tens of millions of edges, index-based approaches can finish within less than 100 ms.

Then we test the query efficiency with respect to varying parameters. Firstly, the running time of three algorithms with varying values of $k$ on four datasets is reported in Fig~\ref{fig:queryvarykcp}. As expected, index-based approaches spend less time on processing queries with greater $k$, because fewer edges need to be scanned. In contrast, Online-Query spends more time because truss decomposition needs to peel more edges. Note that, even when $k = 10\% \cdot  k_{max}$ (which is usually the minimum value 3), the running time of TC-Query and DC-Query is never greater than 100 ms. Moreover, the running time of three algorithms with varying values of $\delta$ on those datasets is reported in Fig~\ref{fig:queryvarydeltacp}. Different from $k$, the running time of index-based approaches increases gradually with increasing $\delta$, because the greater $\delta$ relaxes the constraint and more edges need to be scanned. Lastly, we use heat map to visualize the running time of DC-Query under more combinations of query parameters ($k$ = 10\%, 20\%, $\cdots$, 100\% of $k_{max}$ and $\delta$ = 10\%, 20\%, $\cdots$, 100\% of $\delta_{max}$) in Fig~\ref{fig:heat}. The time cost is at most 0.41 sec on the largest dataset Wikipedia, and generally decreases as $k$ increases or $\delta$ decreases.

% \begin{table*}[t!]
% \centering
% \caption{Statistics of indexes.}
% \label{tab:index}
% \begin{tabular}{|l|r|r|r|r|r|r|}
% \hline
% \multirow{2}{*}{dataset} & avg. entry & TC-Index & \multicolumn{4}{c|}{DC-Index} 
% \\\cline{3-7} 
% & ($k$-span) \# & total edge \# & total edge \# & total edge \# / $|\mathcal{E}|$ & space (MB) & compression ratio (total edge \# / $\sum_{k,\delta}|T_{k,\delta}|$)\\
% \hline\hline
% \texttt{Email} & 290 & 162K & 154K & 9.57 & 0.76 & $17.5\times10^{-4}$\\
% \hline
% \texttt{MathOverflow} & 1365 & 1959K & 1817K & 10.40 & 9.35 & $6.27\times10^{-4}$ \\
% \hline
% \texttt{AskUbuntu} & 1086 & 959K & 958K & 2.10 & 7.33 & $11.43\times10^{-4}$ \\
% \hline
% \texttt{Superuser} & 1365 & 2108K & 2106K & 2.95 & 13.83 & $7.4\times10^{-4}$\\
% \hline
% \texttt{WikiTalk} & 1089 & 10.60M & 10.58M & 3.79 & 62.01 & $7.67\times10^{-4}$\\
% \hline
% \texttt{YouTube} & 170 & 16.74M & 14.25M & 1.52 & 125.93 & $1.11\times10^{-2}$\\
% \hline
% \texttt{Stackoverflow} & 2028 & 139.07M & 138.92M & 4.93 & 746.15 & $6.00\times10^{-4}$\\
% \hline
% \texttt{Wikipedia} & 1304 & 164.24M & 163.40M & 4.47 & 902.63 & $8.60\times10^{-4}$ \\
% \hline
% \end{tabular}
% \end{table*}

\begin{table*}[t!]
\centering
\caption{Statistics of indexes.}
\label{tab:index}
\resizebox{0.9\textwidth}{!}{%
\begin{tabular}{|l|r|r|r|r|r|r|}
\hline
\multirow{2}{*}{dataset} & avg. entry & TC-Index & \multicolumn{4}{c|}{DC-Index} 
\\\cline{3-7} 
& ($k$-span) \# & total edge \# & total edge \# & total edge \# / $|\mathcal{E}|$ & space (MB) & compression ratio (total edge \# / $\sum_{k,\delta}|T_{k,\delta}|$)\\
\hline\hline
\texttt{Email} & 290 & 162K & 154K & 9.57 & 0.76 & $17.5\times10^{-4}$\\
\hline
\texttt{MathOverflow} & 1478 & 1959K & 1871K & 10.40 & 9.35 & $6.25\times10^{-4}$ \\
\hline
\texttt{AskUbuntu} & 1086 & 959K & 958K & 2.10 & 7.33 & $11.43\times10^{-4}$ \\
\hline
\texttt{Superuser} & 1365 & 2108K & 2106K & 2.95 & 13.83 & $7.4\times10^{-4}$\\
\hline
\texttt{WikiTalk} & 1089 & 10.60M & 10.58M & 3.79 & 62.01 & $7.67\times10^{-4}$\\
\hline
\texttt{YouTube} & 170 & 16.74M & 14.25M & 1.52 & 125.93 & $1.11\times10^{-2}$\\
\hline
\texttt{Stackoverflow} & 2028 & 139.07M & 138.92M & 4.93 & 746.15 & $6.00\times10^{-4}$\\
\hline
\texttt{Wikipedia} & 1304 & 164.24M & 163.40M & 4.47 & 902.63 & $8.60\times10^{-4}$ \\
\hline
\end{tabular}
}
\end{table*}

% \begin{table*}[t!]
% \centering
% \caption{Statistics of indexes.}
% \label{tab:index}
% \resizebox{0.8\textwidth}{!}{%
% \begin{tabular}{|l|r|r|r|r|r|r|r|}
% \hline
% \multirow{2}{*}{\textbf{Dataset}} 
% & \multicolumn{2}{c|}{\textbf{TC-Index}} 
% & \multicolumn{5}{c|}{\textbf{DC-Index}} \\
% \cline{2-8}
% & Avg $k$-span \# 
% & Total edge \# 
% & Avg $k$-span \# 
% & Total edge \# 
% & Total edge \#/ $|\mathcal{E}|$ 
% & Space (MB) 
% & Compression ratio ($\sum_{k,\delta}|T_{k,\delta}|$) \\
% \hline
% \texttt{Email} & 290 & 162K & 291 & 154K & 9.57 & 0.76 & $17.5\times10^{-4}$\\
% \hline
% \texttt{Mathoverflow} & 1478 & 1959K & 1479 & 1817K & 10.40 & 9.35 & $6.25\times10^{-4}$ \\
% \hline
% \texttt{Askubuntu} & 1086 & 959K & 1087 & 958K & 2.10 & 7.33 & $11.43\times10^{-4}$ \\
% \hline
% \texttt{Superuser} & 1365 & 2108K & 1366 & 2106K & 2.95 & 13.83 & $7.4\times10^{-4}$\\
% \hline
% \texttt{Wikitalk} & 1089 & 10.60M & 1090 & 10.58M & 3.79 & 62.01 & $7.67\times10^{-4}$\\
% \hline
% \texttt{Youtube} & 170 & 16.74M & 170 & 14.25M & 1.52 & 125.93 & $1.11\times10^{-2}$\\
% \hline
% \texttt{Stackoverflow} & 2028 & 139.07M & 2029 & 138.92M & 4.93 & 746.15 & $6.00\times10^{-4}$\\
% \hline
% \texttt{Wikipedia} & 1304 & 164.24M & 1305 & 163.40M & 4.47 & 902.63 & $8.60\times10^{-4}$ \\
% \hline
% \end{tabular}%
% }
% \end{table*}

\subsection{Index Construction}

Firstly, we report the construction time of proposed TC-Index and DC-Index. For each dataset, we use DBA  to construct TC-Index and MBA to construct both TC-Index and DC-Index respectively. As illustrated in Fig~\ref{fig:IndexConstruct}, index construction costs less than 1 sec for the smallest dataset Email with 16K edges and nearly 3000 sec for the largest dataset Wikipedia with 36M edges. The construction time increases evenly with the increasing scale of graph, which implies that DBA and MBA reduce the redundant computation effectively. Moreover, as expected, MBA is more efficient than DBA on all datasets.

Moreover, the statistics of constructed indexes are shown in Table~\ref{tab:index}. The total edge number of DC-Index is about 1.5X-10.4X large as the corresponding graph. Compared with storing all possible $(k,\delta)$-trusses directly, DC-Index achieves the compression ratio up to $10^{-4}$ on most datasets. The only exceptional dataset is Youtube. According to Theorem~\ref{the:tcspace}, $\delta_{max}$ is the key factor affecting the compression ratio. Since YouTube has a very small $\delta_{max}$, the compression ratio of its DC-Index is worse than other datasets. Due to the effective compression, the space cost of DC-Index is at most 903MB in our experiments. However, we observed that DC-Index is only a little smaller than TC-Index. The rationale is that, since $\delta_{max}$ is much greater than $k_{max}$ for these datasets, the weights of horizontal edges in $E_h$ are generally less than vertical edges in $E_v$, so that the $(k,\delta)$-truss arborescence has only a few vertical edges. As a result, the space of DC-Index is almost as large as TC-Index.

For the above anomaly observed, we conduct an extra experiment to verify the effectiveness of DC-Index. Specifically, we merge every 20, 25, 30, 35, or 40 consecutive timestamps into a single new timestamp for coarsening the time granularity (like from day to month), which will decrease $\delta_{max}$ but not change $k_{max}$. Thus, the weights of horizontal edges in new $E_h$ will become greater. Fig~\ref{fig:granularity} illustrates the comparison of total edge number between DC-Index and TC-Index with the new settings, on two selected datasets. Clearly, DC-Index regains the advantage in the more balanced temporal graphs.

% \begin{figure}[t!]
%     \centering
%     \subfloat[Mathoverflow.]{
%         \includegraphics[width=0.4\linewidth]{rebuild_compare_im.png}
%     }
%     \hfill
%     \subfloat[Superuser.]{
%         \includegraphics[width=0.4\linewidth]{im_tc_and_dc_compare.png}
%     }
%     \caption{Distribution of triangle counts with respect to minimum time span on two datasets.}
%     \label{fig:estudy}
% \end{figure}

% \begin{figure}[t!]
%     \centering
%     \includegraphics[width=0.8\linewidth]{tim_distribution.png}
%     \caption{Update time distribution of $(k,\delta)$-truss index.}
%     \label{fig:indexmaincost}
% \end{figure}
{\color{black}
\subsection{Index Maintenance}
%\noindent\textbf{Update time of index on different datasets}. 
    We randomly remove 1000 edges from Mathoverflow, Askubuntu, Superuser, and Wikitalk and reinsert them into the original graphs, for evaluating the time cost of index maintenance. Figure~\ref{fig:indexmaincost} (a) compares the average update time per insertion of TC-Index (TC-IM) and DC-Index (DC-IM) based on Algorithm~\ref{alg:insertmaintenance} against the index reconstruction from scratch based on MBA. Both TC-IM and DC-IM can achieve up to two orders of magnitude speedup compared to the baseline. For example, on WikiTalk, TC-IM processes edge insertions in 0.36s and DC-IM in 0.53s, while index reconstruction takes 107s and 111s, respectively. Moreover, TC-Index slightly outperforms DC-Index, primarily due to the more complex tree-based structure of DC-Index, which requires additional structural adjustments.

Figure~\ref{fig:indexmaincost} (b) illustrates the distribution of index update time costs for 1,000 edge insertions. For Askubuntu, Superuser, and Wikitalk, a half of the updates are finished within 0.01ms, demonstrating that our proposed filter of $k$ values, $k$-span values, and potentially affected edges are effective. In contrast, for Mathoverflow with much higher clustering coefficient (see Table~\ref{tab:dataset}), the filter of potentially affected edges is not as effective as the other datasets. Additionally, the update time cost distributions of DC-Index and TC-Index are consistent, because the main time costs of both are on Algorithm~\ref{alg:insertmaintenance}. 

}
\begin{figure}[t!]
    \centering
    \includegraphics[width=0.8\linewidth]{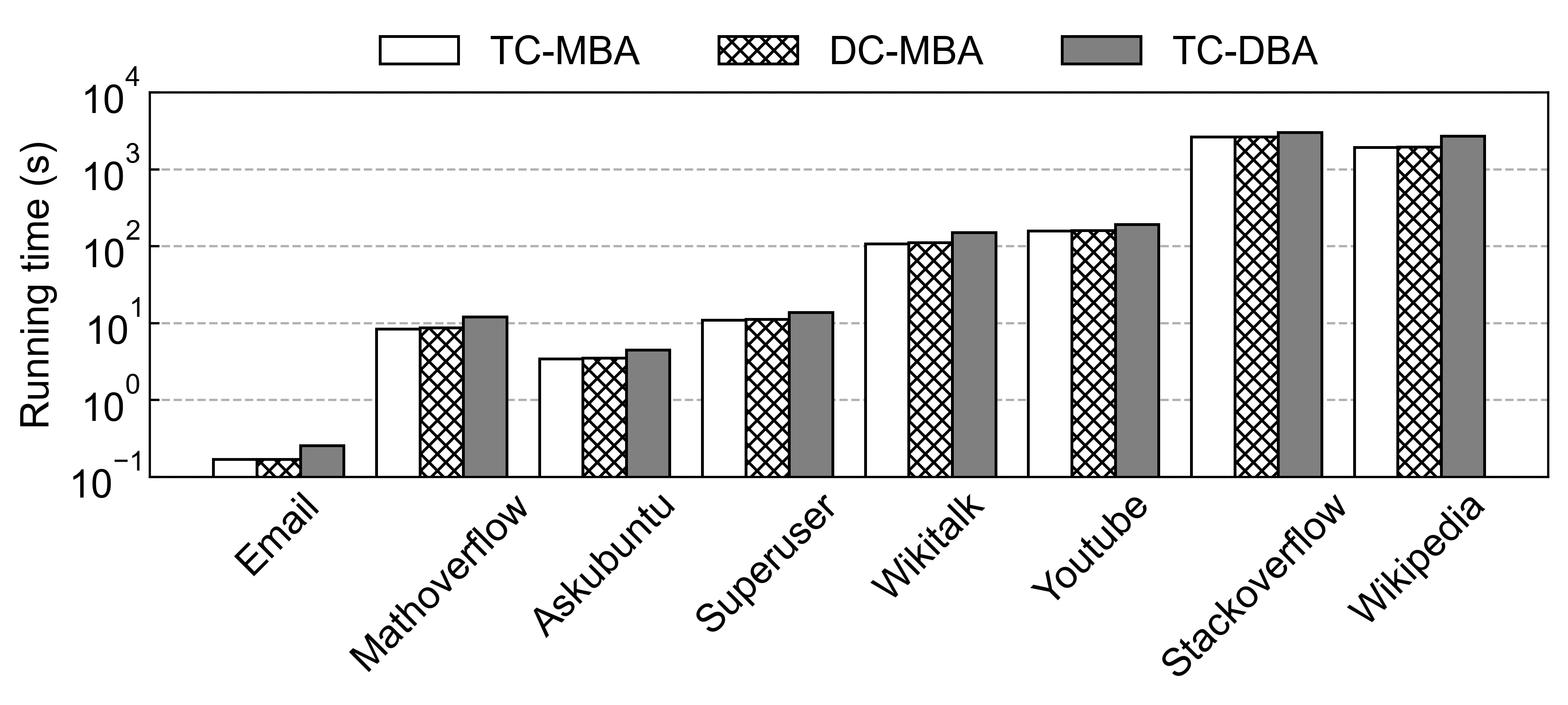}
    \caption{Construction time of TC-Index and DC-Index using DBA and MBA for different datasets.}
    \label{fig:IndexConstruct}
\end{figure}

\begin{figure}[t!]
    \centering
    \includegraphics[width=0.8\linewidth]{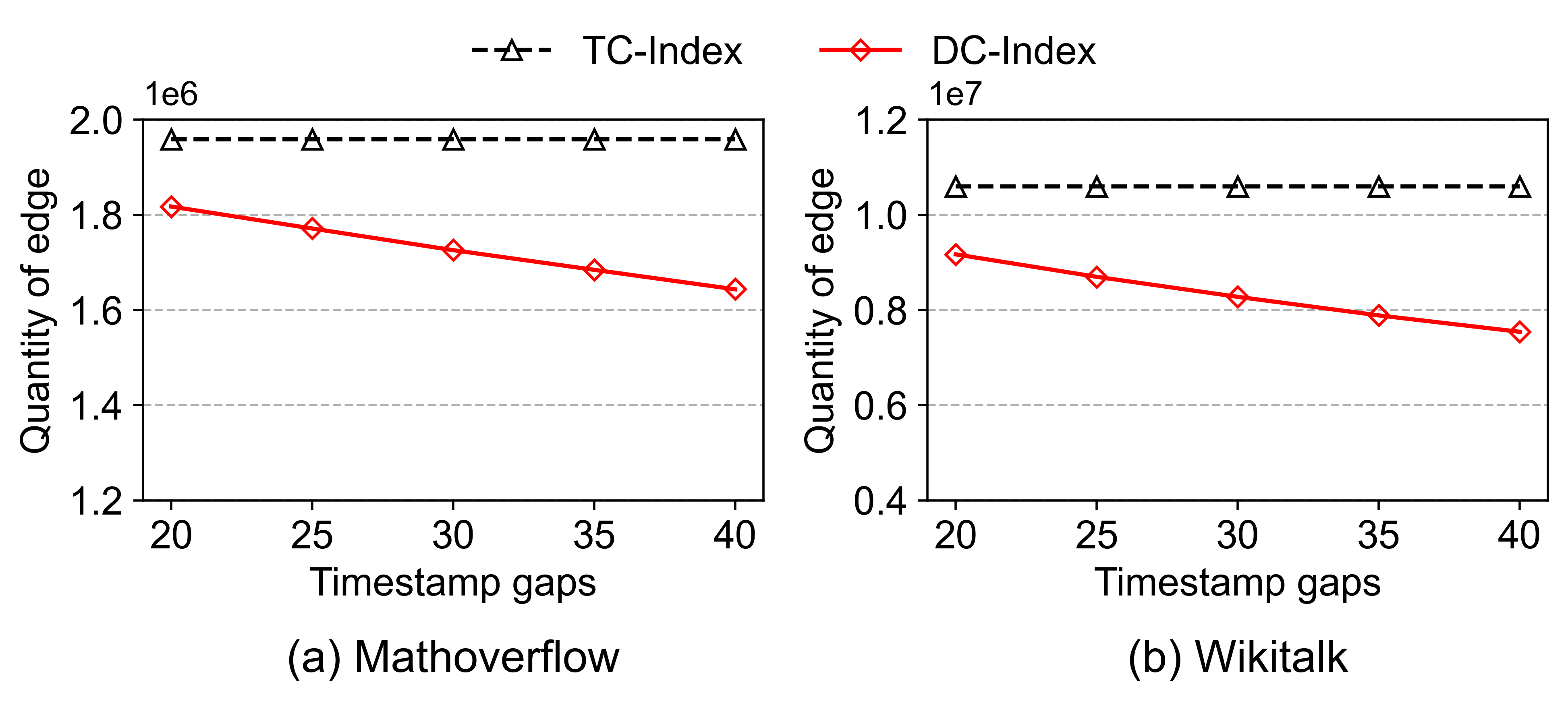}
    \caption{Comparison of total edge number between DC-Index and TC-Index with different time granularity.}
    \label{fig:granularity}
\end{figure}

\begin{figure}[t!]
    \centering
    \includegraphics[width=0.8\linewidth]{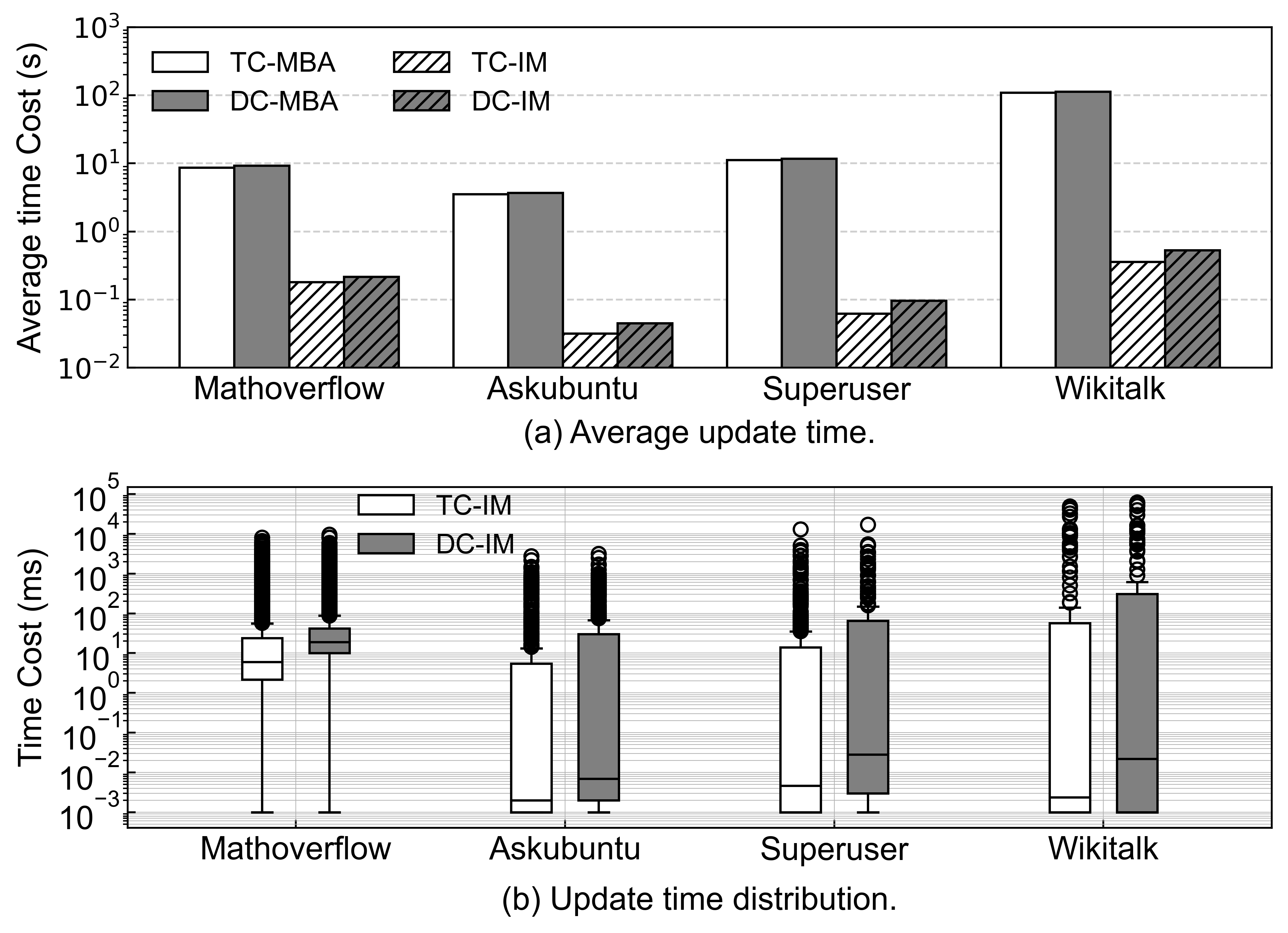}
    \caption{\textcolor{black}{Update time evaluation of $(k,\delta)$-truss index.}}
    \label{fig:indexmaincost}
    \vspace{-0.5cm}
\end{figure}

\section{Related Work}\label{sec:rw}

Recently, there emerge studies on $k$-truss for temporal graphs. The $(k, \Delta, \theta)$-truss~\cite{lantian2020kcommunity} inherits the definition of persistent core~\cite{lirh2018persistcommunity}, and requires its edges to have supports no less than $k-2$ in a number of time windows no shorter than $\Delta$, the total duration of which is no less than $\theta$. Such a truss model is not general enough for many application scenarios. The $(k,\Delta)$-truss (also known as span truss)~\cite{lotito2020spantruss} defines the temporal support of edges, which is the support in a projection of temporal graph during a period, and aims to find the $k$-truss that is cohesive enough in the specific time window $\Delta$. This truss model is more general. However, it always needs the user to give a specific time window in which triangles occur. In contrast, our $(k,\delta)$-truss allows triangles to occur in different time windows that are still short enough. Thus, our $(k,\delta)$-truss is more relaxed than span truss, and meanwhile, can be equivalent to span truss when an extra time window is specified and $\delta = \infty$ if necessary.

Since our $(k,\delta)$-truss is designed on top of temporal triangles, we also investigate the related researches~\cite{kovan2011motif, paran2017motif, liu2019motif, wang2020motif, phasa2021triangle, liu2023motif}. Different from their typical definitions of triangle/motif duration, we propose another kind of duration, namely, minimum time span that is more meaningful in the context of $k$-truss.

Moreover, the design of our indexes is inspired by $(k,l)$-core~\cite{chen2022klcore} and $(k,p)$-core~\cite{zhang2020kpcore}, which also consider the similar property with respect to dual parameters. The various elegant algorithms~\cite{wang2012trussdecomp}~\cite{huang2014trusscommunity, wang2019trussmaint, cliu2014trussmaint, luo2020trussmaintenance} to solve $k$-truss problems provide useful guides to develop our index construction algorithms.

\section{Conclusion}\label{sec:conc}

In this paper, we study a novel $(k,\delta)$-truss on temporal graphs, which constrains the minimum time span of triangles to guarantee temporal cohesion. Such a constraint tailors the static truss in time dimension effectively. To address the query problem of $(k,\delta)$-truss, we propose both index-free and index-based approaches. The indexes that exploits the dual containment property of $(k,\delta)$-truss to compress the space can deliver efficient query processing. Moreover, we develop scalable index construction algorithms \textcolor{black}{and dynamic index maintenance algorithm}. The theoretical proof and experimental evaluation of our approach are provided.

%\section*{Acknowledgment}

%This work was supported by the grants of the National Natural Science Foundation of China (No. 61202036, 61672389, 62272353, and 62276193), the Guangzhou Key Laboratory of Big Data and Intelligent Education (No. 201905010009), and the Research Grants Council of Hong Kong, China (No. 14205520).


\begin{thebibliography}{00}
\bibitem{masuda2016temporalnetwork} N. Masuda, R. Lambiotte, ``A Guide to Temporal Networks,'' World Scientific Publishing Europe Ltd, 2016.
\bibitem{petter2012temporalnetwork} P. Holme, J. SaramakiJari, ``Temporal network,'' Springer, 2012.
\bibitem{kossi2006socialnetwork} G. Kossinets, D. Watts, ``Empirical analysis of an evolving social network,'' science, 311(5757): pp. 88-90, 2006.
\bibitem{huang2022transactionnetwork} X. Huang, Y. Yang, Y. Wang, et al, ``Dgraph: A large-scale financial dataset for graph anomaly detection,'' Advances in Neural Information Processing Systems, 35: pp. 22765-22777, 2022.
\bibitem{filip2023transponetwork} M. Filipovska, H. Mahmassani, ``Spatio-Temporal Characterization of Stochastic Dynamic Transportation Networks,'' IEEE Transactions on Intelligent Transportation Systems, 24(9): pp. 9929-9939, 2023.
\bibitem{hidal2008comnetwork} C. Hidalgo, C. Rodríguez-Sickert, ``The dynamics of a mobile phone network,'' Physica A: Statistical Mechanics and its Applications, 387(12): pp. 3017-3024, 2008.
\bibitem{simeu2021powernetwork} J. Simeunović, B. Schubnel, et al, ``Spatio-temporal graph neural networks for multi-site PV power forecasting,'' IEEE Transactions on Sustainable Energy, 13(2): pp. 1210-1220, 2021.
\bibitem{masuda2014diseasenetwork} N. Masuda, P. Holme, ``Introduction to temporal network epidemiology,'' Springer Singapore, 2017.
\bibitem{galim2018spancore} E. Galimberti, A. Barrat, F. Bonchi, et al, ``Mining (maximal) span-cores from temporal networks,'' Proceedings of the 27th ACM international Conference on Information and Knowledge Management, pp. 107-116, 2018. 
\bibitem{wu2015temporalcore} H. Wu, J. Cheng, et al, ``Core decomposition in large temporal graphs,'' 2015 IEEE International Conference on Big Data (Big Data), pp. 649–658, 2015.
\bibitem{Yang2023temporalcore} J. Yang, M. Zhong, Y. Zhu, T. Qian, M. Liu, and J. Yu, ``Scalable time-range k-core query on temporal graphs,'' PVLDB, 16(5): pp. 1168–1180, 2023.
\bibitem{myu2021historicalcore} M. Yu, D. Wen, L. Qin, et al, ``On querying historical k-cores,'' PVLDB, 14(11): pp. 2033–2045, 2021.
\bibitem{Wang2025historicalcore} Z. Wang, M. Zhong, Y. Zhu, et al. ``On More Efficiently and Versatilely Querying Historical k-Cores, `` PVLDB, 18(5): pp. 1335-1347, 2025.
\bibitem{zhong2024kxcore}M. Zhong, J. Yang, Y. Zhu, T. Qian, M. Liu, J. Yu, ``A Unified and Scalable Algorithm Framework of User-Defined Temporal $(k,\mathcal{X})$-Core Query,'' IEEE Transactions on Knowledge and Data Engineering, 2024.
\bibitem{Bai2020frequentcore} W. Bai, Y. Chen, and D. Wu, ``Efficient temporal core maintenance of massive graphs,'' Information Sciences, 513: pp. 324–340, 2020.
\bibitem{Du2025frequentcore} Z. Du, M. Zhong, Y. Zhu, et al. ``Efficient Frequency-Aware k-Core Query on Temporal Graphs, `` ICDE, pp. 2366-2379, 2025.
\bibitem{lirh2018persistcommunity} R. Li, J. Su, L. Qin, J. Yu, and Q. Dai, ``Persistent community search in temporal networks,'' ICDE, pp. 797–808, 2018.
\bibitem{chu2019burstingsubgraph}  L. Chu, Y. Zhang, Y. Yang, L. Wang, and J. Pei, ``Online density bursting subgraph detection from temporal graphs,'' PVLDB, 12(13): pp. 2353–2365, 2019.
\bibitem{qin20202periodiccore}  H. Qin, R. Li, Y. Yuan, G. Wang, W. Yang, and L. Qin, ``Periodic communities mining in temporal networks: Concepts and algorithms,'' IEEE Transactions on Knowledge and Data Engineering, 34(8): pp. 3927-3945, 2020.
\bibitem{liliu2021continuecore} Y. Li, J. Liu, H. Zhao, J. Sun, Y. Zhao, and G. Wang, ``Efficient continual cohesive subgraph search in large temporal graphs,'' World Wide Web, 24(5): pp. 1483–1509, 2021.
\bibitem{tang2022reliablecore} Y. Tang, J. Li, N. Haldar, Z. Guan, J. Xu, and C. Liu, ``Reliable community search in dynamic networks,'' PVLDB, 15(11): pp. 2826–2838, 2022.
\bibitem{lotito2020spantruss} Q. Lotito, A. Montresor, ``Efficient Algorithms to Mine Maximal Span-Trusses From Temporal Graphs,'' unpublished.
\bibitem{lantian2020kcommunity} L. Xu, R. Li, G. Wang, et al, ``Research on K-truss Community Search Algorithm for Temporal Networks,'' Journal of Frontiers of Computer Science and Technology, 14(9): pp. 1482-1489, 2020.
\bibitem{phasa2021triangle} N. Pashanasangi, and C. Seshadhri, ``Faster and Generalized Temporal Triangle Counting, via Degeneracy Ordering,'' KDD, pp. 1319--1328, 2021.
\bibitem{wang2020motif} J. Wang, Y. Wang, W. Jiang, Y. Li, and K. Tan, ``Efficient Sampling Algorithms for Approximate Temporal Motif Counting,'' CIKM, pp. 1505–1514, 2020.
\bibitem{paran2017motif} A. Paranjape, A. Benson, and J. Leskovec, ``Motifs in Temporal Networks,'' WSDM, pp. 601–610, 2017.
\bibitem{liu2019motif} P. Liu, A. Benson, and M. Charikar, ``Sampling Methods for Counting Temporal Motifs,'' WSDM, pp. 294–302, 2019.
\bibitem{liu2023motif} P. Liu, V. Guarrasi, and A. Sarıyuce, ``Temporal Network Motifs: Models, Limitations, Evaluation,'' IEEE Transactions on Knowledge and Data Engineering, 35: pp. 945--957, 2023.
\bibitem{kovan2011motif} L. Kovanen, M. Karsai, K. Kaski, J. Kertész, and J. Saramäki, ``Temporal motifs in time-dependent networks,`` Journal of Statistical Mechanics: Theory and Experiment, P11005, 2011.
\bibitem{leskovec2005graphanalysis} J. Leskovec, J. Kleinberg, C. Faloutsos, ``Graphs evolution: Densification and shrinking diameters'' ACM transactions on Knowledge Discovery from Data (TKDD), 1(1): pp. 2–42, 2007.
\bibitem{wang2012trussdecomp} J. Wang, J. Cheng, ``Truss decomposition in massive networks,'' PVLDB, 5(9): pp. 812–823, 2012.
\bibitem{chen2014trussdecomp} P. Chen, C. Chou, and M. Chen, ``Distributed algorithms for k-truss decomposition,'' IEEE International Conference on Big Data (Big Data), pp. 471–480, 2014.
\bibitem{che2010trussdecomp} Y. Che, Z. Lai, S. Sun, Y. Wang, and Q. Luo, ``Accelerating truss decomposition on heterogeneous processors,'' Proceedings of the VLDB Endowment, 13(10): pp. 1751-1764, 2020.
\bibitem{kabir2017paratrussdecomp} H. Kabir, K. Madduri, ``Parallel k-truss decomposition on multicore systems,'' IEEE High Performance Extreme Computing Conference (HPEC), pp. 1-7, 2017.
\bibitem{kabir2017sharedtrussdecomp} H. Kabir, K. Madduri, ``Shared-memory graph truss decomposition,'' IEEE 24th International Conference on High Performance Computing (HiPC), pp. 13–22, 2017.
\bibitem{huang2014trusscommunity} X. Huang, H. Cheng, L. Qin, et al, ``Querying k-truss community in large and dynamic graphs,'' Proceedings of the 2014 ACM SIGMOD international conference on Management of data, pp. 1311-1322, 2014.
\bibitem{wang2019trussmaint} Y. Zhang, J. Yu, ``Unboundedness and efficiency of truss maintenance in evolving graphs,'' Proceedings of the 2019 International Conference on Management of Data, pp. 1024-1041, 2019.
\bibitem{cliu2014trussmaint} R. Zhou, C. Liu, J. Yu, et al, ``Efficient truss maintenance in evolving networks,'' arXiv preprint arXiv:1402.2807, 2014.
\bibitem{luo2020trussmaintenance} Q. Luo, D. Yu, X. Cheng, et al, ``Batch processing for truss maintenance in large dynamic graphs,'' IEEE Transactions on Computational Social Systems, pp. 1435-1446, 2020.
\bibitem{suntrussmaintenance} Z. Sun, X. Huang, Q. Liu, et al, ``Efficient Star-based Truss Maintenance on Dynamic Graphs,'' Proceedings of the ACM on Management of Data, 1(2): pp. 1-26, 2023.
\bibitem{batage2003coredecomp} V. Batagelj and M. Zaversnik, ``An o (m) algorithm for cores decomposition of networks,'' arXiv preprint cs/0310049, 2003.
\bibitem{lesko2014snap}  J. Leskovec, A. Krevl, ``SNAP Datasets: Stanford large network dataset collection,'' http://snap.stanford.edu/data, Jun. 2014.
\bibitem{kuneg2013konect} J. Kunegis, ``Konect: the koblenz network collection,'' in Proceedings of the 22nd international conference on world wide web, pp. 1343–1350, 2013.
\bibitem{chen2022klcore} Y. Chen, J. Zhang, Y. Fang, et al, ``Efficient community search over large directed graphs: An augmented index-based approach,'' Proceedings of the Twenty-Ninth International Conference on International Joint Conferences on Artificial Intelligence, pp. 3544-3550, 2021.
\bibitem{zhang2020kpcore} C. Zhang, F. Zhang, W. Zhang, et al, ``Exploring finer granularity within the cores: Efficient (k, p)-core computation,'' ICDE, pp. 181-192, 2020.
\bibitem{Hu2024kdtruss} C. Hu, M. Zhong, Y. Zhu, et al. ``Querying cohesive subgraph regarding span-constrained triangles on temporal graphs,`` ICDE, pp. 3338-3350, 2024.


\end{thebibliography}
\end{document}